\newcommand{\gamegraph}{\mathfrak{G}}
\newcommand{\labelset}{\mathbb{L}}
\newcommand{\words}{\mathbb{W}}
\newcommand{\Controller}{Controller}
\newcommand{\Environment}{Environment}
\newcommand{\seq}[1]{\left\langle #1 \right\rangle}
\newcommand{\tpl}[1]{\left( #1 \right)}
\newcommand{\eset}[1]{\left\{\, #1 \,\right\}}
\newcommand{\set}[1]{\left\{\, #1 \,\right\}}
\newcommand{\mynext}{\mathsf{next}}
\newcommand{\colouring}{\mathsf{colour}}
\newcommand{\accumulatedcolourset}{\mathsf{ColourSet}}
\newcommand{\ancestor}{\mathsf{GCA}}
\newcommand{\parent}{\mathrm{parent}}
\newcommand{\lift}{\mathrm{lift}}
\newcommand{\biglift}{\mathrm{Lift}}
\newcommand{\Gone}{$\mathsf{G}_\succ$}
\newcommand{\Gtwo}{$\mathsf{G}_\downarrow$}
\newcommand{\Rrr}{$\mathsf{B}$}
\newcommand{\Nb}{\mathbb{N}}
\newcommand{\Lc}{\mathcal{L}}
\newcommand{\Dc}{\mathcal{D}}
\newcommand{\Gc}{\mathcal{G}}
\newcommand{\Tc}{\mathcal{T}}
\newcommand{\Uc}{\mathcal{U}}
\newcommand{\nextbit}[1]{ \mathsf{nextstring}^{#1} }
\newcommand{\strtop}{\lightning}
\newcommand{\nexxt}[1]{ $\mathsf{next}_C^\ell(#1)$ }
\newcommandx{\thejaswiniside}[2][1=]{\todo[linecolor=blue,backgroundcolor=blue!25,bordercolor=blue,#1]{Thejaswini: #2}}
\newcommandx{\irmakinside}[2][1=]{\todo[linecolor=green,backgroundcolor=green!35,bordercolor=green,#1]{Irmak: #2}}
\newcommand{\gr}{\tikz\draw[yellow,fill=yellow] circle (.5ex);}
\newcommand{\bl}{\tikz\draw[blue,fill=blue] circle (.5ex);}
\newcommand{\rd}{\tikz\draw[red,fill=red] circle (.5ex);}
\newcommand{\blck}{\tikz\draw[black,fill=black] circle (.5ex);}
\newcommand{\nocolournode}{\tikz\draw[black,fill=white] circle (.5ex);}
\begin{document}
\title{Rabin Games and Colourful Universal Trees\thanks{This
work is a part of the project VAMOS that has received funding from the European Research Council (ERC) under the European Union’s Horizon 2020 research
and innovation programme, grant agreements No 101020093. Rupak Majumdar was partially supported by the DFG project 389792660 TRR 248—CPEC.}\thanks{The full version of the paper~\cite{MST24} is available on arXiv at \url{http://arxiv.org/abs/2401.07548}}}

\author{Rupak Majumdar\inst{1}\orcidlink{0000-0003-2136-0542} \and
Irmak Sa\u{g}lam\inst{1} \and
K. S. Thejaswini\inst{2,3}\orcidlink{0000-0001-6077-7514}}

\authorrunning{R. Majumdar, I. Sa\u{g}lam, K. S. Thejaswini} 
%
 \institute{Max Planck Institute for Software Systems (MPI-SWS), Kaiserslautern  
 \email{\{rupak,isaglam\}@mpi-sws.org}\\\and
Department of Computer Science, University of Warwick, UK
 \and
 Institute of Science and Technology ,Austria\\
 \email{thejaswini.k.s@ista.ac.at}
 }
\maketitle              
\begin{abstract}
We provide an algorithm to solve Rabin and Streett games over graphs with $n$ vertices, $m$ edges, and $k$ colours 
that runs in $\Tilde{O}\left(mn(k!)^{1+o(1)} \right)$ time and
$O(nk\log k \log n)$ space, where $\Tilde{O}$ hides poly-logarithmic factors.
Our algorithm is an improvement by a super quadratic dependence on $k!$ from the currently best known run time of
$O\left(mn^2(k!)^{2+o(1)}\right)$, obtained by converting a Rabin game into a parity game, while simultaneously improving its exponential space requirement.

Our main technical ingredient is a characterisation of progress measures for Rabin games using  \emph{colourful trees} and
a combinatorial construction of succinctly-represented, \emph{universal} colourful trees. 
Colourful universal trees are generalisations of universal trees used by Jurdzi\'{n}ski and Lazi\'{c} (2017)
to solve parity games, as well as of Rabin progress measures of Klarlund and Kozen (1991).
Our algorithm for Rabin games is a progress measure lifting algorithm where the lifting is performed on succinct, colourful, universal trees.


\keywords{Rabin games\and Parity games\and Colourful trees}
\end{abstract}

\section{Introduction}

A \emph{Rabin game} is a two-player infinite-duration game played on a directed, coloured graph, 
where each vertex has a finite set of \emph{good} colours and a finite set of \emph{bad} colours associated with it~\cite{Rab69}.
The two players \Controller~and \Environment~take turns to move a token along an edge to form a \emph{play}, an infinite path in the graph. Such a play 
is winning for \Controller~if there is a colour that is a  good colour for some vertex seen infinitely often 
 along the path and is not a bad colour for any vertex seen infinitely often.
Rabin games lie at the core of reactive synthesis for omega-regular specifications
and efficient algorithms for Rabin games are of practical interest in synthesis tools.

Rabin automata already appear in McNaughton's solution of Church's synthesis problem \cite{Chu57,McN66}
and in Rabin's proof of the decidability of SnS \cite{Rab69}, where it was first defined in the setting of infinite trees.
To solve Church's synthesis problem for $\omega$-regular specifications, represented by non-deterministic B\"{u}chi automata, there are two well-studied (polynomial-time equivalent) approaches: 
either reduce it to the emptiness problem for Rabin tree automata or solve a Rabin game.

Rabin conditions are also suitable specifications for \emph{general fairness constraints}~\cite{FK84}. 
Klarlund and Kozen~\cite{KK91} defined Rabin measures over graphs and applied them to prove program termination under a general fairness constraint. Indeed, the acceptance condition that defines \emph{strong fairness}, i.e., if a given set of actions (edges) is enabled infinitely often (the source vertex is seen infinitely often), it is taken infinitely often, is naturally expressed by the complement of the Rabin condition, called the Streett condition~\cite{Str81}. 

Algorithmically, the problem of solving Rabin games was shown to be $\NP$-complete by Emerson and Jutla~\cite{EJ88,EJ99} in the late 1980s. In the same paper, Emerson and Jutla, and independently, Pnueli and Rosner \cite{PR89}, gave an algorithm that takes time $O\left((nk)^{3k}\right)$ time, where $n$ is the number of vertices of the game graph and $k$ the number of colours. 

Steady progress was made to solve Rabin games, and within a decade, 
Kupferman and Vardi~\cite{KW98} reduced the cubic dependence on $n^k$ to a quadratic one by giving an algorithm to check non-emptiness in a Rabin tree automata in time $O\left(mn^{2k}k!\right)$. 
Later, Horn~\cite{Hor05} gave a different solution to solve Streett games---and therefore Rabin games---with the same running time.  

A lot of progress was simultaneously made on \emph{parity games}~\cite{EJ91}, a special case of Rabin games where colours are assigned to each subset of states in a chain of subsets.
%
Inspired by fixpoint evaluation algorithms~\cite{EJ91} and the small progress measure algorithm~\cite{Jur00} of Jurdzi\'{n}ski for parity games, Piterman and Pnueli~\cite{PP06} gave a fast $O\left(m n^{k+1}k k!\right)$-time, $O(nk)$-space, algorithm for Rabin games. 
This algorithm used a concept of a measure to solve Rabin games. 

The work of Piterman and Pnueli remained state-of-the-art for Rabin games until the quasi-polynomial breakthrough for parity games by Calude, Jain, Khoussainov, Li, and Stephan~\cite{CJKLS22}. 
They gave a fixed parameter tractable algorithm (FPT) for Rabin games on $k$ colours 
by converting it to a parity game and using the quasi-polynomial algorithm.

A Rabin game with $n$ vertices, $m$ edges, and $k$ colours, can be reduced to a parity game over $N= nk^2k!$ vertices, $M = nk^2k!m$ edges, and $K = 2k+1$ colours \cite{EJ91}.
By combining the reduction from Rabin to parity games and state-of-the-art algorithms for parity games~\cite{JL17,DJT20,FJKSSW19,DE22} in a ``space-efficient'' manner, 
say of Jurdzi\'{n}ski and Lazi\'{c}~\cite{JL17}, one can solve Rabin games in time $O\left(\max\left\{MN^{2.38}, 2^{O(K\log K)}\right\}\right)$,
but exponential space (since the parity game is exponentially bigger).

On substitution of the values of $M$ and $N$, the algorithm of Jurdzi\'{n}ski and Lazi\'{c}  would take time at least  proportional to 
$m(nk^2\cdot k!)^{3.38}$ for games with $n$ vertices, $m$ edges and $k$ colours. 
 However, observe that the parity game obtained from a Rabin game is such that the number of vertices $N=nk^2k!$ is much larger than the number of colours $K=2k+1$. Indeed, this results in $K\in o\left(\log(N)\right)$. 
For  cases where the number of vertices of the resulting parity game is much larger than the number of priorities, say the number of colours ($2k+1$) is $o\left(\log\left(N\right)\right)$---which is the case above as $k$ grows---Jurdzi\'{n}ski and Lazi\'{c} also give an analysis of their algorithm that would  solve Rabin games in time $O\left(nmk!^{2+o(1)}\right)$. 
Closely matching this 
are the run times in the work of Fearnley et al.~\cite{FJKSSW19} who provide, among other bounds, a quasi-bi-linear bound of $O\left(MN\mathfrak{a}(N)^{\log{\log {N}}}\right)$, 
where $\mathfrak{a}$ is the inverse-Ackermann function. 
In either case above, this best-known algorithm has at least a $(k!)^{2+o(1)}$ dependence 
in its run time, and takes space proportional to $(nk^2k!)\log (n k^2k!)$, which has a $k!$ dependence again. 

\paragraph{Our Contribution.} 
Our result breaks through the $2+o(1)$ barrier, while simultaneously using polynomial space, 
to give a fixed-parameter tractable algorithm for Rabin games. 
We show a new algorithm for Rabin games on graphs that runs in time $\Tilde{O}(mn(k!)^{1+o(1)})$
time and $O(nk \log k \log n)$ space, for a game on $n$ vertices, $m$ edges, and $k$ colours.
Our algorithm improves the quadratic $(k!)^2$ dependence in the number of colours 
in the best current algorithms, while simultaneously using only polynomial space.

Our first technical contribution is a characterisation of winning states in Rabin games using ``\emph{colourful trees}'',
by generalizing previous work on Rabin measures on graphs by Klarlund and Kozen~\cite{KK91}. 
Using our characterisation, we provide an algorithm to compute winning states and strategies as a fixed point
of a lifting function over the lattice of functions from vertices of a game to nodes of a colourful tree. 

Our second contribution is the construction of a \emph{universal} colourful tree that embeds any colourful tree with a given number of leaves and fixed set of colours. Universal trees are found underlying all the quasi-polynomial algorithms for parity games~\cite{JL17,CDFJLP19,JMT22,DJT20,KL22}.
Our construction uses the theory of universal trees developed for parity games, especially that of Jurdzi\'{n}ski and Lazi\'{c}~\cite{JL17}. From our construction of universal colourful trees, we can also naturally construct an instance of universal graphs for Rabin objectives, where the definition of universal graph is as introduced by Colcombet and Fijalkow. Although constructing universal graphs directly give us a lifting algorithm, for the sake of completeness, we also provide a lifting algorithm that uses our construction of colourful universal trees. Therefore, we show how to construct a small universal colourful tree (our upper bound is tight up to a polynomial factor) that can be succinctly encoded and efficiently navigated.

By applying the lifting algorithm to our succinct universal colourful tree, we get our time and space bounds.

Just as Piterman and Pnueli's result generalized ranking techniques and progress measures for parity games, we generalize the notion of measures~\cite{KK91} and universal trees~\cite{JL17} central to the fastest algorithms for parity games to obtain
our algorithm.


\section{Preliminaries}
We use $\mathbb{N}$ to denote the set of all natural numbers $\{0,1,2,\dots\}$. 
A directed graph consists of a finite set of vertices $V$ along with a binary relation $E$ over the set of vertices called the \emph{edge set}. 
We write $u\rightarrow v$ to denote an edge $(u,v)\in E$. 
A finite (resp. infinite) \emph{path} in a directed graph is a finite (resp. infinite) sequence of vertices such that a tuple 
formed by any two consecutive vertices in this sequence is an edge in $E$. 

\paragraph*{$(c_0,C)$-Colourful Ordered Trees.} 
Let $C$ be a finite set of colours and let $c_0\notin C$ be a distinguished \emph{root colour}.
Informally, a $(c_0,C)$-colourful ordered tree with root colour $c_0$ is an ordered tree
of height at most $|C|+1$ whose 
root is associated with the colour $c_0\notin C$, and whose every other node has a colour from $C$ associated to it. 
As an exception, we allow some leaves to be left uncoloured, denoted by a ``dummy colour'' $\bot\notin C$. 
We also require that along any path from the root to a leaf, each node  must have a different colour.

Formally, for a finite set $C$, we recursively define $(c_0,C)$-colourful trees
\begin{itemize}
    \item  if $C = \emptyset$, $(c_0,\seq{})$ and $\tpl{c_0,\seq{\left(\bot,\seq{}\right),\dots,\left(\bot,\seq{}\right)}}$ are  $(c_0,\emptyset)$-colourful trees.
    \item if $C\neq \emptyset$, we say $\Tc$ is $(c_0,C)$-colourful tree if it is either 
    \begin{itemize}
        \item a $(c_0,C')$-colourful tree rooted at $c_0$ for some $C'\subsetneq C$; or
        \item $\Tc = \tpl{c_0,\seq{\Tc_1,\dots,\Tc_\ell}}$, and for all $i\in\set{1,\ldots,\ell}$, either there is a $c_i\in C$ and $\Tc_i$ is 
        a $(c_i, C\setminus \{c_i\})$-colourful ordered tree, or  $\Tc_i = (\bot,\seq{})$. 
        Note that these $c_i$ need not be different from one another.
    \end{itemize} 
    
\end{itemize}
We define the \emph{concatenation} of a $(c_0, C_1)$-colourful tree $\Tc_1 = \tpl{c_0,\seq{\Tc^1_1,\dots,\Tc^m_1}}$ and a $(c_0, C_2)$-colourful tree $\Tc_2 = \tpl{c_0,\seq{\Tc^1_2,\dots,\Tc^\ell_2}}$
as the $(c_0, C_1\cup C_2)$-colourful tree denoted by $\Tc_1\cdot \Tc_2$ as $\tpl{c_0,\seq{\Tc^1_1,\dots,\Tc^m_1,\Tc^1_2,\dots,\Tc^\ell_2}}$.
For a root colour $c_0$, a number $\ell\in \Nb$, and a $(c,C)$-colourful ordered tree $\Tc$, we denote $\Tc^\ell$ to be the tree with $\ell$ many copies of $\Tc$, $\tpl{c_0,\seq{\Tc, \Tc,\dots,\Tc}}$.
When $(c_0, C)$ is clear from context, we simply say ``colourful tree.''

\paragraph*{Embedding Colourful Trees.}
Given a $(c_0,C)$-colourful tree $\Uc$ and a $(c_0,C')$-colourful tree $\Tc$, such that  $C'\subseteq C$, we say $\Uc$ \emph{embeds} $\Tc$ if 
 $\Tc =(c_0,\seq{})$, or 
 $\Tc =  \tpl{c_0,\seq{\Tc_1,\dots,\Tc_\ell}}$ and $\Uc =  \tpl{c_0,\seq{\Uc_1,\dots,\Uc_m}}$ for some $\ell,m$, and there is some increasing sequence of indices $1\leq i_1<i_2<\dots< i_\ell\leq m$ such that 
     $\Uc_{i_j}$ embeds $\Tc_{j}$ recursively.
    Notice both $\Uc_{i_j}$ and $\Tc_{j}$ must be rooted at the same colour, say $c_j$ and both are $\tpl{c_j,C\setminus \{c_j\}}$-colourful and $\tpl{c_j,C'\setminus\{c_j\}}$-colourful trees respectively.

\paragraph*{Labelled Colourful Trees.}
In what follows, we shall additionally label colourful trees with labels from some linearly ordered set.
It is more convenient to define such labelled colourful trees as prefix-closed sets of sequences, using the isomorphism
between a (recursively defined) tree and its set of paths.

Let $\labelset$ be a set of labels with a linear ordering $<_\labelset \subseteq \labelset \times \labelset$. 
An $\labelset$-labelled $(c_0, C)$-colourful tree is a prefix-closed set of sequences over $\labelset\times (C\cup\{\bot\})$ 
 where $\labelset\times (C\cup\{\bot\})$ is the Cartesian product of $\labelset$ and $(C\cup\{\bot\})$.

Given an element $\tau_0\in \labelset\times (C\cup\{\bot\})$ and a sequence $\tpl{\tau_1,\tau_2,\dots,\tau_j}$ in $\tpl{\labelset\times (C\cup\{\bot\})}^*$, we use $\odot$ to denote concatenation to the tuple, where we say $\tau_0 \odot \tpl{\tau_1,\tau_2,\dots \tau_j} \: = (\tau_0,\tau_1,\tau_2,\dots \tau_j)$. We extend this notation to sets of sequences $\Lc$, by also defining
$\tau_0\odot\Lc = \eset{ (\tau_0,\tau_1,\tau_2,\dots \tau_j) \mid (\tau_1,\tau_2,\dots \tau_j)\in \Lc}$.

We say 
a prefix-closed set $\Lc\subseteq (\labelset \times (C\cup \set{\bot}))^*$ is an $\labelset$-labelling of a $(c_0,C)$-colourful ordered tree $\Tc$
\begin{itemize}
    \item if $\Tc = \tpl{c_0,\seq{\tpl{\bot,\seq{}}^m}}$, and $\Lc$ is the prefix closure of the set $\set{(\alpha_1,\bot),\dots, (\alpha_m,\bot)}$ 
    for some $\alpha_1<_\labelset \alpha_2<_\labelset \dots<_\labelset \alpha_m\in \labelset$, 
    \item if $\Tc = \tpl{c_0, \seq{\Tc_1,\dots,\Tc_m}}$  then 
    $\Lc$ is the prefix closure of the set 
    $$(\alpha_1, c_{1})\odot\Lc_1\cup (\alpha_2, c_{2})\odot\Lc_2 \cup \dots \cup (\alpha_m, c_{m})\odot\Lc_m$$
    for some $\alpha_1\leq_\labelset  \alpha_2\leq_\labelset \dots\leq_\labelset  \alpha_m$ in $\labelset$, such that for all $j$,
    \begin{itemize}
        \item $\Tc_j$ is a $C\setminus\{c_{j}\}$-colourful tree rooted at $c_j$ and $\Lc_{j}$ is an $\labelset$-labeling of $\Tc_j$, 
        \item $c_{j}\in C\cup\{\bot\}$, and
        \item whenever $\alpha_j = \alpha_{j+1}$, we have $c_{j}\neq c_{j+1}$
    \end{itemize}
\end{itemize}
Note that the root colour $c_0$ of $\Tc$ does not appear in $\Lc$; instead of tracking $c_0$ along with $\Lc$ explicitly, 
we implicitly assume the root colour of the tree $\Lc$ above is $c_0$. 

We refer to elements of the prefix-closed set $\Lc$ of a labelled tree as \emph{nodes} of the tree. 
For two nodes $n_1$ and $n_2$ in $\Lc$,
we define \emph{the greatest common ancestor}, written $\ancestor(n_1,n_2)$, as the longest common prefix  of $n_1$ and $n_2$. 
We define $n_1$ to be an \emph{ancestor} of $n_2$ if $n_1 = \ancestor(n_1,n_2)$. 
In particular, $n_1$ is a parent of $n_2$, written $n_1 = \parent(n_2)$, if $n_1$ is the largest node other than $n_2$ such that $n_1 = \ancestor(n_1,n_2)$;
we then say $n_2$ is a child of $n_1$.

The colouring of a node is defined to be the last colour occurring in the sequence: 
For the empty sequence $\tpl{}$, we define $\colouring(\tpl{}) = c_0$, and  $\colouring((\alpha_1, c_{i_1}),\dots,(\alpha_j, c_{i_j})) = c_{i_j}$.
Furthermore we define $\accumulatedcolourset: \Lc \rightarrow 2^{C\cup\set{c_0}}$, 
which maps a node to the set of colours seen from the root to that node:
$\accumulatedcolourset(n) = \set{\colouring(n')\mid n' = \ancestor(n', n)} \setminus \set{\bot}$.

\paragraph*{Ordering.}
We define an ordering $\prec_\Lc$ on $\Lc$.
First, we fix some arbitrary linear order on the set $C$ and set colour $\bot$ to be larger than all the colours in $C$ in the ordering. 
We compare elements by extending the linear order $<_{\labelset}$ over $\labelset$ and an arbitrary fixed order $<$ over $C$ 
to a linear order over the set $\labelset\times (C\cup\{\bot\})$ lexicographically as follows: for two elements in $\labelset\times (C\cup\{\bot\})$, we declare $(\alpha_1,c_1) < (\alpha_2,c_2)$ if either $\alpha_1 <_{\labelset} \alpha_2$ or $\alpha_1 = \alpha_2$ and $c_1<c_2$.

For two nodes $n_1, n_2\in \Lc$, we define $n_1\prec_\Lc n_2$ if either $n_1$ is a strict prefix of $n_2$, or 
if $n_1$ is lexicographically smaller than $n_2$ when viewed as sequences over $\labelset \times (C\cup\set{\bot})$. 

\begin{example}
    Figure~\ref{fig:colourfultree} depicts a $(\blck, \left\{ \gr,\rd,\bl \right\})$-colourful tree, where the nodes denoted by $\nocolournode$~represents uncoloured nodes. A fixed ordering on the set of colours $\gr<\bl<\rd<\nocolournode$, a labelling of this tree over $\labelset = \{1, 2\} \subseteq \mathbb{N}$ is the prefix closure of the set 
$\{ (1 \,\gr, 1\, \bl, 1\,\nocolournode )$, 
$(1 \,\gr, 1\, \bl, 2\, \rd,  1\,\nocolournode )$, 
$ (1 \,\gr, 1\, \bl, 2\, \rd,  2\,\nocolournode )$, 
$(1 \,\gr, 1\, \rd, 1\, \bl)$, 
$(1 \,\gr, 1\, \rd, 2\, \bl, 2\,\nocolournode)$, 
$(1\,\nocolournode)$, 
$(2\,\bl, 2\,\gr)$, 
$(2\,\rd, 1\,\bl, 1\,\gr, 1\,\nocolournode)$, 
$(2\,\rd, 1\,\bl, 2\,\nocolournode)\}$. The ordering $\prec_\Lc$, (represented by $\prec$) on some nodes is as follows: $()\prec(1\gr)\prec (1\gr,1\rd)\prec (1\nocolournode) \prec(2\bl,2\gr)$. The  ordering in the nodes of the tree in the figure decreases when we go from a child to a parent, or we go ``left'' in the tree, but otherwise increases.
\end{example}

\begin{figure}
\centering
\begin{minipage}{.5\textwidth}
  \centering
    \begin{tikzpicture}[nodes={draw,draw=white!0=10}]
\node[draw] at (-0.7,-0.1) {$1$};
\node[draw] at (-0.4,-0.4) {$1$};
\node[draw] at (-1,-1) {$1$};
\node[draw] at (-2,-1) {$1$};
\node[draw] at (-2.6,-1.7) {$1$};
\node[draw] at (-1.8,-1.7) {$2$};
\node[draw] at (-1.2,-1.7) {$1$};
\node[draw] at (-0.4,-1.7) {$2$};
\node[draw] at (-0.3,-2.5) {$2$};
\node[draw] at (-1.5,-2.5) {$2$};
\node[draw] at (-2.2,-2.5) {$1$};
\node[draw] at (0.7,-0.1) {$2$};
\node[draw] at (0.4,-0.4) {$2$};
\node[draw] at (0.7,-1.1) {$2$};
\node[draw] at (1.7,-1.1) {$2$};
\node[draw] at (1.2,-1.7) {$1$};
\node[draw] at (1.8,-1.7) {$1$};
\node[draw] at (1.3,-2.6) {$2$};
\node at (0,0) {$\blck$}
        child[level distance=8mm, sibling distance = 10mm] { node{$\gr$}
            child[level distance=7mm, sibling distance = 14mm] { node {$\bl$} 
                child[level distance=7mm, sibling distance = 7mm] { node {$\nocolournode$} }
                child[level distance=7mm, sibling distance = 7mm] { node {$\rd$} 
                    child[level distance=7mm, sibling distance = 7mm] { node {$\nocolournode$}  }
                    child[level distance=7mm, sibling distance = 7mm] { node  {$\nocolournode$} }
                }
            }
            child[level distance=7mm, sibling distance =14mm] { node {$\rd$} 
                child[level distance=7mm, sibling distance =7mm] { node {$\bl$} }
                child[level distance=7mm, sibling distance =7mm] { node {$\bl$} 
                    child[level distance=7mm] { node {$\nocolournode$} }
                }
            }
        }
        child[level distance=8mm, sibling distance =10mm] { node{$\nocolournode$} }
        child[level distance=8mm, sibling distance =10mm] { node{$\bl$}
            child[level distance=7mm] { node {$\gr$} }
        }
        child[level distance=8mm, sibling distance =10mm] { node {$\rd$} 
                child[level distance=7mm, sibling distance =7mm] { node {$\bl$} 
                        child[level distance=7mm] { node {$\gr$} 
                            child[level distance=7mm] { node {$\nocolournode$} }
                        }
                        child[level distance=7mm, sibling distance =7mm] { node {$\nocolournode$} }
                }
        };
\end{tikzpicture}    
  \captionof{figure}{A colourful tree.}
    \label{fig:colourfultree}
\end{minipage}%
\begin{minipage}{.5\textwidth}
  \centering
  \makebox[\textwidth][c]{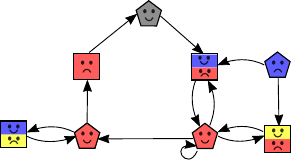} 
  \captionof{figure}{A colourful Rabin graph $\Gc$ where all infinite paths satisfy the Rabin condition}
  \label{fig:RGExample}
\end{minipage}
\end{figure}

 \section{Rabin measure and Colourful Decompositions }~\label{sec:shapeofRG}
 In this section, our aim is to understand the Rabin acceptance condition on graphs. 
We define such acceptance conditions and provide a local witness called a \emph{Rabin measure} for graphs where all paths satisfy the Rabin condition. 

A $(c_0,C)$-colourful \emph{Rabin graph} $\Gc$  
consists of 
(1)~a directed graph $(V,E)$,
(2)~a finite set $C$ of \emph{colours} and a special colour $c_0\notin C$, and 
(3)~for each vertex $v\in V$, a set of \emph{good} colours $G_v \subseteq C\cup\eset{c_0}$ for $v$ 
and a set of \emph{bad} colours $B_v \subseteq C$ for $v$. 
Observe that $c_0\notin B_v$ for any $v$.
We call each colour $c$ in $G_v$ a \emph{good} colour for $v$, and each colour in $B_v$ a \emph{bad} colour for $v$.

We assume every vertex has some outgoing edge in the directed graph.
 An infinite path in $\Gc$ satisfies the \emph{Rabin condition} if there is some colour $c$ in $C\cup \{c_0\}$ such that 
 $c$ is a good colour for some $v$ seen infinitely often along the path
 and $c$ is not a bad colour for any $v$ seen infinitely often along the path. 

\begin{example}
Consider the $(\blck,\{\rd,\bl,\gr\})$-colourful Rabin game in~\cref{fig:RGExample}.
The colours that are in the good set of each vertex are represented with a smiley face in the same colour
and those that are bad colours appear with a sad face.  
Although a vertex can have more than one colour assigned to it as a good colour (or a bad colour), we only consider at most one good and bad colour per vertex for this example. 
In our example, the leftmost vertex in the graph $\Gc$ in~\cref{fig:RGExample} has the singleton set $\{\bl\}$ as the set of good colours and the set $\{\gr\}$ as the set of bad colours. Similarly, the topmost vertex in~\cref{fig:RGExample} has the set $\{\blck\}$ as the set of good colours and an empty set of bad colours. Observe that in the graph $\Gc$, any infinite path satisfies the Rabin condition. Indeed, for any infinite path there is some colour that is not a bad colour for any of the vertices that occur infinitely often and is a good colour for some vertex that occurs infinitely often. For example, if a path is such that all the vertices of $\Gc$ are visited infinitely often, then the colour $\blck$ is not a bad colour of any vertex and the same colour $\blck$ is a good colour of the topmost vertex. 
\end{example}
  As opposed to preexisting definition in literature of Rabin games that use Rabin pairs to represent the acceptance condition, we instead define two sets of colours associated to a vertex rather than a pair of subsets of vertices associated to a colour. This does not add more than a constant factor in terms of representation size.
 \paragraph*{A Measure for Rabin Graphs.}
We fix a $(c_0, C)$-colourful Rabin graph $\Gc$ with the underlying graph $(V,E)$ 
with good colours for a vertex $v$ denoted by $G_v$ and the bad colours denoted by $B_v$.
Let $\labelset$ be a linearly ordered set of labels, and let $\Lc$ be an $\labelset$ labelled $(c_0,C)$ -coloured tree. 
We define $\Lc^\top = \Lc \cup\set{\top}$ by adjoining an element $\top$ to $\Lc$ and
we extend the ordering $\prec_\Lc$ (denoted henceforth by $\prec$) to $\Lc^\top$, by declaring $t\prec\top$ for all $t\in \Lc$. 

Consider a map $\mu: V \rightarrow \Lc^\top$. 
We call an edge $u\rightarrow v$ \emph{consistent} with respect to $\mu$, if either $\mu(u)$ is mapped to $\top$ or it satisfies the condition (\Gone~OR \Gtwo) AND  \Rrr; for \Gone, \Gtwo, and \Rrr~defined below.
\begin{itemize}
    \item[(\Gone)] $\mu(u)\succ\mu(v)$ 
    \item[(\Gtwo)] $\ancestor(\mu(u),\mu(v)) = \mu(u)$ and $\colouring(\mu(u)) \in G_u$.
    \item[(\Rrr)] $\accumulatedcolourset(\mu(u))\cap B_u = \emptyset$
\end{itemize}
In words, \Gone~conveys that the measure $\mu$ decreases along the edge $u\rightarrow v$ and \Gtwo~says that the measure can increase along an edge but only into a descendent node and only when the colour of the node that is currently mapped to is a good colour for $u$.
The condition represented by \Rrr~says that none of the colours assigned to any ancestor of $u$ is a bad colour for it.


If the map $\mu$ is clear from the context, we call an edge or a vertex consistent without mentioning the mapping. 
We say the relation and function $\ancestor(\cdot,\top)$ and $\colouring(\top)$ are undefined, 
and the condition \Gtwo~or \Rrr~are not satisfied when $\mu(v)$ is mapped to $\top$ and $\mu(u)$ is not mapped to $\top$.

We say the map $\mu$ is a $(c_0,C)$-colourful \emph{Rabin measure} for a graph $\Gc$ if all edges in $E$ 
are consistent with respect to $\mu$. A mapping from the vertices of a Rabin graph to the nodes of a tree ensures that an infinite play corresponds to an infinite set of nodes in a tree. If a mapping is consistent, then such a mapping serves as a witness to the fact that an infinite path in the Rabin graph satisfies the Rabin condition.

Our definition is a modification of Klarlund and Kozen's~\cite{KK91} notion of Rabin measures, following recent approaches to faster algorithms for parity games~\cite{JL17,DJT20}.


\paragraph*{Colourful Decomposition.}
The Rabin measure, as with other progress measures, is based exclusively on local properties. 
Indeed, in the above case, we have a progress measure when each  edge satisfies certain conditions. 
Before we show that Rabin measures capture winning sets of a graph, we define an intermediate structure, which we call \emph{colourful decompositions}. 
These colourful decompositions of a Rabin graph highlight a recursive structure that captures the 
acceptance of all paths in a way which relates naturally to colourful trees.
Colourful decompositions generalise  attractor decompositions of parity games to Rabin games~\cite{DJL19,JMT22,DJT20}.



Consider a $(c_0,C)$-colourful Rabin graph $\Gc$. 
A $(c_0,C)$-colourful decomposition $\Dc$ of $\Gc$ 
is a recursive sub-division
of vertices $V$ of $\Gc$ into subsets of vertices defined as follows.
 If $C= \emptyset$,  then we say $\Dc\, :=\,\seq{V}$ is a $(c_0,C)$-colourful decomposition if and only if 
 all infinite paths from all vertices in $V$ visit a vertex $v$ such that $c_0\in G_v$.
Else, if $C\neq \emptyset$ and if $|V| \geq 1$, and  
$$\Dc\:= \seq{A, \tpl{c_{1},V_1, \Dc_1, A_1}, \dots, \tpl{c_{j},V_j,\Dc_j, A_j}}$$
satisfies the following conditions:
\begin{itemize}    
    \item $A$ is the set of all vertices in $V$ such that all infinite paths starting from $A$ in $\Gc$ visit some vertex $v\in V$ such that $c_0\in G_v$;
    \item Set $W_1 = V \setminus A$. For $i\in 1,\dots,j$,
    \begin{itemize}
        \item $V_i$ is a set of vertices which has no path to $W_i\setminus V_i$ and  $c_i\notin B_v$ for all $v\in V_i$;\label{item:decomp-Gc-1}
        \item $\Dc_i$ is a $(c_i,C\setminus\{c_i\})$-colourful decomposition of $V_i$. 
        \item $A_i$ is the set of all vertices in $W_i$ such that all infinite paths from $A_i$ within $W_i$ visits some vertex in $V_i$;
        \item $W_{i+1}\:= W_i\setminus A_i$.
    \end{itemize}
    \item $W_{j+1} =\emptyset$.
\end{itemize}
\begin{figure}
  \begin{subfigure}{0.48\textwidth}
    \makebox[\textwidth][c]{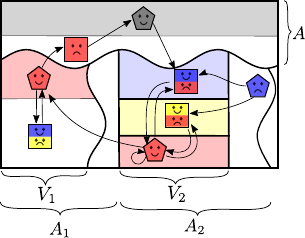}%
    \caption{A colourful decomposition of a Rabin graph $\Gc$ where all paths satisfy the Rabin condition} \label{fig:RGDecomp}
  \end{subfigure}%
  \hspace*{\fill}   
  \begin{subfigure}{0.43\textwidth}
    \centering
    \begin{tikzpicture}[nodes={draw, draw=white!0=10}]
\node[draw] at (-0.7,-0.2) {$0$};
\node[draw] at (-0.4,-0.5) {$1$};
\node[draw] at (-0.3,-1.4) {$0$};

\node[draw] at (0.7,-0.2) {$2$};
\node[draw] at (0.4,-0.5) {$2$};
\node[draw] at (0.7,-1.4) {$1$};
\node[draw] at (0.7,-2.2) {$1$};
\node[draw] at (-0.3,-4.1) {};
\node at (0,0) {$\blck$}
        child[level distance=6mm, sibling distance = 6mm] { node{$\nocolournode$}
        }
        child[level distance=10mm, sibling distance = 10mm] { node{$\rd$}
            child[level distance=8mm] { node {$\nocolournode$} 
            }
        }
        child[level distance=10mm, sibling distance =10mm] { node {$\bl$} 
                child[level distance=8mm, sibling distance =8mm] { node {$\gr$} 
                        child[level distance=8mm] { node {$\rd$} }  
                }
        }
        child[level distance=6mm, sibling distance =6mm] { node {$\nocolournode$} }  ;
\end{tikzpicture}  %
    \caption{A labelled colourful tree into which the graph $\Gc$ has a Rabin measure.} \label{fig:RGmeasure}
  \end{subfigure}%
\caption{A colourful decomposition and tree for Rabin measure}\label{fig:RGimage}
\end{figure}

The crux of this section is Theorem~\ref{thm:measure} below which shows the equivalence between Rabin measure, the existence of a colourful decomposition and a Rabin graph where all paths satisfy the Rabin condition. 
\begin{restatable}{theorem}{measure}
\label{thm:measure}
The following three statements are equivalent for a $(c_0,C)$-colourful Rabin graph $\Gc$.
\begin{enumerate}
    \item All infinite paths in $\Gc$
    satisfy the Rabin condition.~\label{en:one}
    \item There is a $(c_0,C)$-colourful decomposition $\Dc$ of the vertices of $\Gc$.~\label{en:decomp}
    \item There is 
 an $\labelset$-labelled $(c_0,C)$-colourful Rabin measure for $\Gc$, where no vertex is mapped to $\top$ for some linearly ordered infinite set $\labelset$.~\label{en:measure} 
\end{enumerate}
\end{restatable}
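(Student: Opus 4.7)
My plan is to establish the equivalence via the cyclic implications $(1)\Rightarrow(2)\Rightarrow(3)\Rightarrow(1)$. For $(1)\Rightarrow(2)$, I would induct on $|C|$. In the base case $C=\emptyset$, the Rabin condition reduces to visiting some $c_0$-good vertex infinitely often; since tails of infinite paths are themselves infinite, every such path visits a $c_0$-good vertex at all, so $\Dc=\seq{V}$ witnesses the decomposition. In the inductive step, I would define $A$ as the largest set of vertices from which every infinite path in $\Gc$ visits a $c_0$-good vertex. Every infinite path starting in $W_1:=V\setminus A$ must stay inside $W_1$, since otherwise its tail would enter $A$ and be forced through a $c_0$-good vertex, contradicting that its starting vertex lies outside $A$. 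To peel off a trap $V_1$, I would take any bottom strongly connected component $S$ of $\Gc|_{W_1}$: strong connectivity guarantees an infinite path in $S$ visiting every vertex of $S$ infinitely often, and applying the Rabin condition to such a path produces a colour $c_1\neq c_0$ that is good for some vertex in $S$ and bad for no vertex of $S$, matching the requirements for $V_1=S$. Applying the induction hypothesis to $V_1$ under the reduced colour set $C\setminus\{c_1\}$ yields $\Dc_1$; I would then iterate on $W_2:=W_1\setminus A_1$ with $A_1$ the attractor of $V_1$ within $W_1$, and termination is immediate since $|W_i|$ strictly decreases.

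For $(2)\Rightarrow(3)$, my plan is to build $\Lc$ and $\mu$ jointly by induction on $|C|$, mirroring the recursive shape of $\Dc$. Each sub-decomposition $\Dc_i$ inductively yields a $(c_i,C\setminus\{c_i\})$-labelled tree $\Lc_i$ with a measure $\mu_i\colon V_i\to\Lc_i^\top$ sending no vertex to $\top$. I would take $\Lc$ as the tree rooted at $c_0$ whose children, in increasing label order, form the following blocks: first, $\bot$-leaves indexed by an ``attractor rank'' for vertices in $A$; then, for each $i$ in turn, a block of $\bot$-leaves for $A_i\setminus V_i$-vertices followed by the sub-tree $\Lc_i$ for $V_i$-vertices. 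The attractor rank of a vertex is the length of the longest path in the relevant DAG before it hits the target set, and is finite by the defining properties of $A$ and $A_i$. Good-$c_0$ vertices in $A$ would be mapped to the root, other $A$-vertices and $A_i$-vertices to their corresponding $\bot$-leaves, and each $v\in V_i$ to $(\alpha_i,c_i)\odot\mu_i(v)$. The remaining work is a case split on edge type to verify (\Gone), (\Gtwo), and (\Rrr): edges within an attractor satisfy (\Gone) by strictly decreasing rank; edges from a good-$c_0$ vertex of $A$ use (\Gtwo) at the root; edges crossing between children-blocks of the root satisfy (\Gone) by the placement of the blocks; edges internal to $V_i$ inherit from $\mu_i$ by induction; and (\Rrr) follows from $c_0\notin B_v$ (standing hypothesis) together with $c_i\notin B_v$ for $v\in V_i$ (built into the decomposition).

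For $(3)\Rightarrow(1)$, my plan is the standard progress-measure soundness argument, adapted to the colourful tree. Fix an infinite path $v_0,v_1,\ldots$ in $\Gc$. Since $V$ is finite, so is the image of $\mu$; let $I\subseteq\Lc$ be the set of nodes visited infinitely often, and fix $T_0$ past which $\mu(v_t)\in I$. Set $n^*=\ancestor(I)$. The crux, and what I expect to be the main obstacle, is to show that $n^*\in I$. If $n^*\notin I$, then $I$ splits across at least two child-subtrees of $n^*$; however (\Gone) moves the measure only strictly $\prec$-downward (between sibling subtrees of $n^*$ this allows moving only from later to earlier, never back), and (\Gtwo) requires the current value to be an ancestor of the next (so one can re-enter a descendant subtree of $n^*$ only by passing through $n^*$ itself). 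Since $n^*\notin I$ is visited only finitely often, after some point the later sibling subtrees of $n^*$ would be abandoned and never re-entered, contradicting that every element of $I$ is visited infinitely often. Once $n^*\in I$ is established, at each of the infinitely many times with $\mu(v_t)=n^*$, (\Gone) would produce $\mu(v_{t+1})\prec n^*$, clashing with $\mu(v_{t+1})\in I$ for $t\geq T_0$ (every element of $I$ has $n^*$ as an ancestor, hence $\succeq n^*$); so (\Gtwo) must apply and $\colouring(n^*)\in G_{v_t}$. Pigeonholing on $v_t$ produces a specific vertex $u$ visited infinitely often with $\colouring(n^*)\in G_u$. Finally, for all $t\geq T_0$, $n^*$ is an ancestor of $\mu(v_t)$, so $\colouring(n^*)\in\accumulatedcolourset(\mu(v_t))$, and (\Rrr) yields $\colouring(n^*)\notin B_{v_t}$; hence $\colouring(n^*)$ is good for some infinitely often visited vertex and bad for none, certifying the Rabin condition.
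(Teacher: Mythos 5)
Your overall plan matches the paper's (three cyclic implications, using attractors and bottom SCCs for $(1)\Rightarrow(2)$, inductive tree construction for $(2)\Rightarrow(3)$, and a least-infinitely-visited-node argument for $(3)\Rightarrow(1)$), and direction $(3)\Rightarrow(1)$ is a correct dual framing of the paper's Proposition~\ref{prop:smalltreeproposition}: you take $n^*=\ancestor(I)$ and argue $n^*\in I$, whereas the paper takes $t_{\inf}=\min I$ and argues it is an ancestor of almost all $t_i$; the endpoints coincide.

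However, there is a genuine error in your construction for $(2)\Rightarrow(3)$: you place the blocks in the order ``$\bot$-leaves for $A_i\setminus V_i$ \emph{followed by} the sub-tree $\Lc_i$ for $V_i$'' in increasing label order. This is backwards. Any rank-$1$ vertex $u\in A_i\setminus V_i$ necessarily has an edge to some $v\in V_i$ (that is what rank $1$ means). With your ordering, $\mu(u)$ is a $\bot$-leaf with a \emph{smaller} label than every node of $\Lc_i$, so $\mu(u)\prec\mu(v)$ and \Gone~fails; and since $\mu(u)$ is a depth-one $\bot$-leaf in a different child-subtree of the root than $\mu(v)$, it cannot be an ancestor of $\mu(v)$, so \Gtwo~also fails. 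The sub-tree $\Lc_i$ must carry the \emph{smallest} labels of its block, with the $\bot$-leaves for $A_i\setminus V_i$ placed after it (rank $1$ smallest among those), so that every edge from the attractor shell of $V_i$ into $V_i$ satisfies \Gone. This is exactly the ordering used in the paper's Lemma~\ref{lemma:decompToMeasure} (there $\Lc_i$ sits at label $\alpha^i_0$, strictly below $\alpha^i_1,\dots,\alpha^i_t$).

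A secondary slip in $(1)\Rightarrow(2)$: you assert that ``every infinite path starting in $W_1$ must stay inside $W_1$,'' reasoning that entering $A$ forces the path through an $R$-vertex and contradicts $u\notin A$. This does not follow: $u\notin A$ only says \emph{some} infinite path from $u$ avoids $R$, not that every path from $u$ avoids $A$. In fact paths from $W_1$ can and do enter $A$. What you actually need (and what is true) is that every $u\in W_1$ retains at least one out-edge inside $W_1$ --- otherwise all of $u$'s successors lie in $A$ and then $u\in A$ --- which is enough to guarantee that $\Gc|_{W_1}$ has no dead ends, so bottom SCCs carry infinite paths and the argument proceeds.
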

The theorem above is proved by showing~\ref{en:one}$\implies$\ref{en:decomp} in Lemma~\ref{lemma:winningdecomposition},
\ref{en:decomp}$\implies$\ref{en:measure} in Lemma~\ref{lemma:decompToMeasure} and finally \ref{en:measure}$\implies$\ref{en:one} in Lemma~\ref{lemma:measureeasy}.
\paragraph*{Proof Sketch~\ref{en:one}$\implies$\ref{en:decomp}.}
If $C$ is empty, then the decomposition is $\Dc = \seq{V}$ for a $(c_0,\emptyset)$-colourful graph where all paths satisfy the Rabin condition.
If $C$ is not empty, we first remove all vertices $A$ from $\Gc$ that can visit a vertex for which $c_0$ is a good colour. In the SCC decomposition of the graph induced by $V\setminus A$, each infinite path satisfies the Rabin condition, and therefore especially the infinite path which consists of all the vertices of some bottom SCC, $V_1$. Hence, there must be one colour $c$ that is not a bad colour for any vertex and is a good colour for at least some of the vertices $V_1$. One can therefore inductively construct a $(c,C\setminus\{c\})$-colourful decomposition $\Dc_1$ for the vertices of $V_1$.  Later, in the graph $\Gc$ without the vertices of $A$ and $V_1$ and all vertices $A_1$ from which all paths lead to $V_1$, we again get an other graph where all infinite paths satisfy the Rabin condition. This 
 graph, again by induction has a $(c_0,C)$-colourful Rabin decomposition $\Dc'$. We finally 
 `glue' together $\seq{A,(c,V_1,\Dc_1,A_1)}$ and  $\Dc'$ obtained above. 
\paragraph*{Proof Sketch \ref{en:decomp}$\implies$\ref{en:measure}.} The proof follows a recursive construction of an $\labelset$-labelled $(c_0,C)$-colourful tree where the recursion is based on the structure of the decomposition. An example of how such a mapping to a tree is obtained from a picture is exemplified in \cref{fig:RGimage}. The decomposition $\Dc$ of the game $\Gc$ is $\tpl{A,(\rd,V_1,\Dc_1,A_1),(\bl,V_2,\Dc_2,A_2)}$. Some of the sets of the decomposition are indicated in \cref{fig:RGDecomp}. 
The measure obtained from the decomposition into the given tree is intuitive. For example, the measure obtained from the given decomposition of the game $\Gc$ is such that the vertex for which the colour $\blck$ is a good colour is mapped to the root of the tree. Similarly, this measure maps the vertex in $V_1$ for which the colour $\rd$ is a good colour  to the node $1\rd$ in the tree. The only vertex in $A_2\setminus V_2$ is mapped to the node $2\nocolournode$.  
\paragraph*{Proof Sketch \ref{en:measure}$\implies$\ref{en:one}.}If there is a Rabin measure, each edge in the infinite path satisfies~\Rrr, as well as \Gone~or \Gtwo. For such an infinite path, we consider the infinite sequence of nodes of the colourful tree, 
obtained by taking the image of $\mu$ on the run. In this sequence obtained, consider the smallest node of the tree $t$ that is visited infinitely often, and let $c = \colouring(t)$. We show that $t$ is a common ancestor for all  elements of the sequence after a finite prefix. 
Since all edges satisfy \Gone~or \Gtwo, $c$ is a colour 
such that $c\in G_v$ for some $v$ visited infinitely often. As all edges satisfy \Rrr, we have $c\notin B_v$ for all vertices $v$ in the run after some finite prefix.
\begin{remark}
    A similar statement to the equivalence of item~\ref{en:one} and item ~\ref{en:decomp} has been proved in the work of Klarlund and Kozen~\cite{KK91}, however, a reader familiar with their work might have observed some differences in the definition of a measure as well as a colourful tree. Our definition of colourful trees is more restrictive than theirs. For instance, colourful trees in the work of Klarlund and Kozen have no restrictions about the colours along a path in a tree, i.e, in their definition, the trees can have the same colour along a path, and in fact only a partial colouring is required. However, an examination of their proof reveals that in the direction of the proof where they construct a Rabin measure, they inherently use a construction which produces a mapping into colourful trees as we have defined
    and therefore, it is enough to only consider such trees. We make this explicit and also prove Theorem~\ref{thm:measure}  in the appendix to suit our situation. 
\end{remark}

 \section{A lifting algorithm}~\label{sec:unilift}
 In this section, we define Rabin game formally and first show how such Rabin games also have a notion of a Rabin measure. 

Inspired by the breakthrough algorithms to solve parity games, Colcombet, Fijalkow, Gawrychowski, and Ohlmann~\cite{CFGO22} proposed a formalism for algorithms that solve games where one player has a positional strategy. They showed that if there is a special kind of graph homomorphism into a graph with a total order on its vertices, then one can obtain a lifting algorithm for such games. 
From their work \cite[Theorem~3.1]{CFGO22} combined with \cref{thm:mainRabin}, we can show that Rabin measures defined in our previous section can also be used  to provide a lifting algorithm for Rabin games. However, to make this work self-contained and to provide an explicit space-efficient algorithm using our non-trivial totally ordered set, we show how such lifting is performed step-by-step in this section. We believe our following section would help future implementation of such algorithms. 

A $(c_0,C)$-colourful \emph{Rabin game} $\gamegraph$ consists of
~an \emph{arena} which is a $(c_0,C)$-colourful Rabin graph $\Gc$ with vertices $V$, 
a \emph{start vertex} $v_0\in V$, and 
a partition of $V$ into $V_c$ and $ V_e$, the vertices  of two players, whom we call \Controller~and \Environment, respectively. 

A  \emph{positional strategy} $\sigma$ for \Controller~over the game graph is a subset of edges outgoing from each of \Controller's set of vertices $V_c$. We denote the graph restricted to a strategy $\sigma$ for the \Controller~by $\Gc|_\sigma$ and it is defined as the Rabin graph over the same vertex set with a new edge relation which contains exactly the edges in $\sigma$ along with all the edges from all vertices belonging to \Environment.

The Rabin game $\gamegraph$ is \emph{winning} for the \Controller~if and only if there exists a positional strategy $\sigma$ for the \Controller~where, all infinite paths starting from $v_0$ in $\Gc|_\sigma$ satisfy the Rabin condition. 
We describe an algorithm that identifies whether a Rabin game $\gamegraph$ is winning for \Controller,
using Rabin measures on graphs. 

\begin{remark}
We only consider strategies of the \Controller~that are positional, but this is enough from the results of Emerson and Jutla~\cite{EJ99}, which shows that the \Controller~always has a positional winning strategy in Rabin games if there is any winning strategy at all.
\end{remark}

\paragraph*{Consistency in games.}
Consider a $(c_0,C)$-colourful Rabin game $\gamegraph$. 
Let $\mu$ be a function from $V$, the vertices of the game graph to an $\labelset$-labelled $(c_0,C)$-colourful tree $\Lc$. 
We simply extend the definition of consistency from graphs to games by defining 
a vertex to be \emph{consistent} with respect to $\mu$ in $\gamegraph$ if either it belongs to the \Environment~and all outgoing edges from it are consistent in $\Gc$ or if it belongs to the~\Controller~and there is at least one outgoing edge that is consistent in $\Gc$.
A  map $\mu$ from $V$ to a $\Lc^\top$ 
is a \emph{Rabin measure} for a $(c_0,C)$-colourful Rabin game $\gamegraph$ if and only if \emph{all} vertices are consistent with respect to $\mu$.

\paragraph{An overview of the algorithm.}We describe an algorithm that identifies whether a Rabin game $\gamegraph$ is winning for  \Controller,
using Rabin measures defined earlier for Rabin graphs. The basic principle in the algorithm is that given a colourful tree, the algorithm finds if there is a Rabin measure that maps vertices of the game into nodes of that tree. The algorithm does so by starting with the smallest map (all vertices are mapped to the root of this tree) and then at each step, if a vertex is not consistent, increase the value of this map just at this vertex which is not consistent. The value is modified (increased) until either all vertices are consistent, or the value cannot be increased anymore. 

Toward our goal of formally defining this algorithm, we define monotonic, inflationary operators on the set of all maps from vertices of a game to a tree such that the simultaneous fixpoints of these operators exactly correspond to a Rabin measure.

%
Consider  a Rabin measure $\mu$ which is a function mapping the vertices $V$ of  a $(c_0,C)$-colourful Rabin game $\gamegraph$ into an $\labelset$-labelled $(c_0,C)$-colourful tree $\Lc$. 
We define a function $\lift_\mu$, which maps edges $E$ of the arena of the game to $\Lc^\top$. For an edge $u\rightarrow v$ of $\Gc$, we define $\lift_\mu(u,v)$ to be the smallest element $t$ in $\Lc^\top$ 
such that (1) $t\succeq \mu(u)$ 
and (2) edge $u\rightarrow v$ is consistent with respect to the mapping $\mu[u:=t]$,
where we use the notation $\mu[u:=t]$ to indicate the mapping $\mu'$ where $\mu'(x) = \mu(x)$ if $x\neq u$ and $\mu'(x) = t$ if $x=u$.

 For each vertex $v$, we define an operator $\biglift_v$ on the lattice of all maps from $V$ to $\Lc^\top$.
 The operator $\biglift_v$ only modifies an input map $\mu$ at $v$ and nowhere else. We define
    $$\biglift_v(\mu)(u) = \begin{cases}
        \mu(u)  & \text{for }u\neq v\\
        \min_{(v,w)\in E}\left\{\lift_\mu(v,w)\right\} & \text{if } u=v\in V_c\\
        \max_{(v,w)\in E}\left\{\lift_\mu(v,w)\right\} & \text{if } u=v\in V_e
\end{cases}$$

\begin{restatable}{proposition}{minoftwo}~\label{lemma:minoftwo}
The function $\biglift_v$ is monotonic for each $v$. 
\end{restatable}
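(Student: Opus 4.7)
The plan is to unpack the definition of $\biglift_v$ and reduce monotonicity to a per-edge statement about $\lift$. Fix two maps $\mu \leq \mu'$ pointwise on $V$. For any $u \neq v$ we have $\biglift_v(\mu)(u) = \mu(u) \preceq \mu'(u) = \biglift_v(\mu')(u)$ trivially, so all the work concentrates at $u = v$, where $\biglift_v(\mu)(v)$ is a $\min$ (if $v \in V_c$) or $\max$ (if $v \in V_e$) over $\{\lift_\mu(v,w) : (v,w) \in E\}$. Since $\min$ and $\max$ over finite sets preserve pointwise comparisons, it suffices to establish the key per-edge claim: if $\mu \leq \mu'$, then $\lift_\mu(v,w) \preceq \lift_{\mu'}(v,w)$ for every outgoing edge $v \to w$.

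To prove this claim, set $s := \lift_{\mu'}(v,w)$. The case $s = \top$ is immediate, so assume $s \neq \top$. I will show that $s$ is admissible for $\lift_\mu(v,w)$, that is, $s \succeq \mu(v)$ and edge $v \to w$ is consistent under $\mu[v := s]$; the conclusion $\lift_\mu(v,w) \preceq s$ then follows from the defining minimality. The first condition is immediate from $s \succeq \mu'(v) \succeq \mu(v)$. For the second, the self-loop case $w = v$ is straightforward, since $\mu[v := s](w) = s$ regardless of the underlying map, so consistency depends only on $s$ and $v$ and is inherited unchanged from $\mu'[v:=s]$.

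For $w \neq v$, condition $\Rrr$ depends only on $s$ and $v$ and is hence inherited. If the consistency witness under $\mu'[v:=s]$ was $\Gone$, then $s \succ \mu'(w) \succeq \mu(w)$ yields $s \succ \mu(w)$ by transitivity. The subtle case is $\Gtwo$: here $\ancestor(s, \mu'(w)) = s$ and $\colouring(s) \in G_v$, the colour condition transfers for free, and what remains is to show that $\mu(w)$ itself satisfies $\Gone$ or $\Gtwo$ with respect to $s$. The key observation is that $\prec_\Lc$ is precisely the pre-order DFS traversal of the labelled colourful tree. Since $\mu'(w)$ lies in the subtree rooted at $s$ (possibly $\mu'(w) = s$), every node $\mu(w) \preceq \mu'(w)$ either (a)~lies in the subtree rooted at $s$, giving $\ancestor(s,\mu(w)) = s$ and hence $\Gtwo$, or (b)~is visited strictly before $s$ in pre-order, in which case $s \succ \mu(w)$ and $\Gone$ holds.

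The main obstacle is precisely that consistency of $v \to w$ at a fixed value $s$ is \emph{not} monotone in $\mu(w)$: the set of target values making the edge consistent is not downward-closed in $\prec_\Lc$, so the claim does not reduce to a direct pointwise monotonicity of $\lift$ in its subscript. The argument succeeds only because the positions of $\mu(w)$ constrained by $\mu(w) \preceq \mu'(w)$ are exactly those where the failure mode cannot arise: moving $\mu(w)$ down within this range can only convert a $\Gtwo$ witness into a $\Gone$ witness, never destroy consistency. Boundary cases involving $\top$ collapse to the trivial observation that $\mu'(w) = \top$ forces $s = \top$, since no non-$\top$ value of $s$ can satisfy either $\Gone$ or $\Gtwo$ against $\mu'(w) = \top$.
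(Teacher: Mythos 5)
Your proof is correct and follows essentially the same route as the paper: reduce to the per-edge statement that lowering the target $\mu(w)$ preserves consistency at a fixed source value $s$, handle $\Gone$ and $\Rrr$ trivially, and settle $\Gtwo$ via the trichotomy that any $\mu(w)\preceq\mu'(w)$, with $\mu'(w)$ a descendant of $s$, is either itself a descendant of $s$ (so $\Gtwo$) or strictly $\prec s$ (so $\Gone$). The paper packages this last step as Proposition~\ref{prop:irmaksprop}; you rederive it in-line from the observation that $\prec_\Lc$ is preorder, which is accurate. One remark: your closing caveat is off --- the set of targets $t$ making $v\to w$ consistent at source $s$ \emph{is} downward-closed in $\prec_\Lc$ (it is the initial segment of preorder up to and including the last descendant of $s$, or up to $s$ if the $\Gtwo$ colour condition fails), precisely because ``$t\prec s$ or $t$ in the subtree of $s$'' is a preorder prefix; what is not downward-closed is $\Gtwo$ \emph{alone}, and your case analysis is exactly the proof of downward-closedness of the disjunction, so the argument is in fact less delicate than you suggest.
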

The above proposition follows from our definition of the $\biglift_v$ function. Now that we know that each $\biglift_v$ is inflationary and monotonic. 
Therefore, the simultaneous least fixpoint of $\biglift_v$ on the map $\mu$, which maps all vertices to the root of $\Lc$ exists (from the Knaster-Tarski theorem~\cite{Tar55}). 
We can moreover state the following proposition that such fixpoints correspond to the Rabin measures, which almost follows from our definitions. 
\begin{restatable}{proposition}{fpLiftvisRM}\label{lemma:fpLiftvisRM}
    For a $(c_0,C)$-colourful Rabin game $\gamegraph$ where the vertex set is $V$ and a fixed $\labelset$-labelled $(c_0,C)$-colourful tree $\Lc$, 
    \begin{itemize}
        \item any simultaneous fixpoint of the set of functions $\biglift_v$ for all $v\in V$ is a Rabin measure;
        \item any Rabin measure is a simultaneous fixpoint of~ $\biglift_v$ for all $v\in V$.
    \end{itemize} 
\end{restatable}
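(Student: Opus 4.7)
My plan is to prove both directions uniformly by characterising, for each vertex $v$, exactly when the local fixpoint condition $\biglift_v(\mu)(v) = \mu(v)$ holds. The key technical observation is that for any edge $v\rightarrow w$, one has $\lift_\mu(v,w) = \mu(v)$ if and only if the edge $v\rightarrow w$ is already consistent with respect to $\mu$. This is immediate from the definition of $\lift_\mu(v,w)$ as the smallest $t\in\Lc^\top$ with $t\succeq \mu(u)$ such that $v\rightarrow w$ is consistent under $\mu[v:=t]$: the choice $t = \mu(v)$ works precisely when no strict increase is needed. The corner case $\mu(v) = \top$ is handled by the convention that every edge out of a $\top$-labelled source is declared consistent, so $\lift_\mu(v,w) = \top = \mu(v)$ vacuously. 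In all other cases $\lift_\mu(v,w) \succ \mu(v)$ strictly.

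Next I unpack the simultaneous fixpoint condition. Because $\biglift_v(\mu)$ agrees with $\mu$ everywhere except possibly at $v$, the statement that $\mu$ is a simultaneous fixpoint of $\{\biglift_v\}_{v\in V}$ is equivalent to $\biglift_v(\mu)(v) = \mu(v)$ holding for every $v\in V$. I split on the owner of $v$. For $v\in V_c$, since every $\lift_\mu(v,w) \succeq \mu(v)$, the equality $\min_{(v,w)\in E}\lift_\mu(v,w) = \mu(v)$ is equivalent to the existence of some outgoing edge with $\lift_\mu(v,w) = \mu(v)$, which by the observation above is equivalent to the existence of a consistent outgoing edge from $v$. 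For $v\in V_e$, the analogous argument with $\max$ gives that the fixpoint condition at $v$ holds if and only if \emph{every} outgoing edge from $v$ is consistent.

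To conclude, I compare this characterisation with the game-theoretic definition of a Rabin measure recalled just above the proposition: a \Controller{} vertex is consistent precisely when at least one outgoing edge is consistent, and an \Environment{} vertex is consistent precisely when every outgoing edge is consistent. Thus the pointwise fixpoint conditions coincide exactly with the pointwise consistency conditions, yielding both implications at once. I do not expect any real obstacle here — the argument is essentially an unwinding of definitions — the only subtlety being the uniform treatment of the $\top$ case, already absorbed into the convention on $\lift_\mu$ discussed in the first paragraph.
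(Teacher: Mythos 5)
Your proof is correct, and it fills in exactly the argument the paper omits: the paper states this proposition "almost follows from our definitions" and gives no explicit proof, and your careful unwinding — that $\lift_\mu(v,w)=\mu(v)$ iff edge $v\rightarrow w$ is already consistent, that the simultaneous fixpoint condition reduces pointwise to $\biglift_v(\mu)(v)=\mu(v)$, and that the $\min$/$\max$ split recovers exactly the existential/universal consistency conditions for \Controller{}/\Environment{} vertices — is the natural and intended justification, including the vacuous handling of $\top$-valued sources.
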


Our algorithm, like any other progress-measure algorithm, computes a fixpoint and is described as follows. 
\begin{algorithm}
\caption{The lifting algorithm on game $(c_0,C)$-colourful Rabin game $\gamegraph$ with vertices $V$ to tree $\Lc$}\label{algo:RabinLifting}
\begin{algorithmic}[1]
\Require For each $v\in V$,  $\mu(v)$ is declared to be root in $\Lc$
\While{there is some vertex $v$ that is inconsistent with respect to $\mu$.}
    \State $\mu\gets \biglift_{v}(\mu)$.
\EndWhile 
\State \Return $\mu$
\end{algorithmic}
\end{algorithm}
The correctness follows from \cref{lemma:minoftwo,lemma:fpLiftvisRM}.

\begin{remark}~\label{remark:rabinmeasureembed}
    If there is a $(c_0, C)$-colourful Rabin game $\gamegraph$ 
    and an $\labelset$-labelled $(c_0,C)$-colourful tree $\Lc'$, such that there is a Rabin measure $\mu'$ from $V$ to $\Lc'$, 
    and $\Lc$ embeds $\Lc'$, then there is also a Rabin measure $\mu$ to $\Lc$ such that all the elements that are not mapped to $\top$ by $\mu'$ are still not mapped to $\top$ by $\mu$. This map is obtained by composing $\mu'$ with the embedding of $\Lc'$ into $\Lc$. 
\end{remark}

\paragraph*{Runtime.}
For a finer analysis of the runtime, we need to understand the size of the lattice where the lifting algorithm takes place. In this section however, we restrict ourselves to analysing the runtime of our algorithm for a fixed $\Lc$. We write $|\Lc|$ to represent the number of nodes in the labelled tree $\Lc$. 
We write $n$ to denote the number of vertices in a Rabin game, $m$ to denote the number of edges, and $k = |C\cup \set{c_0}|$ to denote the 
number of colours. 

\begin{restatable}{lemma}{bigLift}~\label{lemma:bigLift}
Given a mapping from the vertices of a $(c_0,C)$-colourful Rabin game $\gamegraph$ to an $\labelset$-labelled $(c_0,C)$-colourful tree $\Lc$, the value of ~$\biglift_v(\mu)(v)$ can be computed in time $O\left(\deg(v)\cdot T_{\mynext}\right)$, where 
$\deg(v)$ is the degree (number of outgoing edges) of $v$ and $T_{\mynext}$ is defined as the maximum of the time taken to
    \begin{itemize}
        \item  make a linear pass on a node in $\Lc$ (assuming the node is represented by a sequence of elements of~ $\labelset\times C$),
        \item  compute the next node in $\Lc$, and
        \item  find the next node that uses colours only from $C'\cup \{\bot\}$ for a given node $t\in \Lc$ and subset of colours $C'\subseteq C$ such that  $\colouring(t) \in C'$. 
    \end{itemize}
\end{restatable}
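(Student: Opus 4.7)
The plan is to reduce the computation of $\biglift_v(\mu)(v)$ to computing $\lift_\mu(v, w)$ for each of the $\deg(v)$ outgoing edges, and then aggregating via $\min$ or $\max$ (with respect to $\prec$) depending on whether $v \in V_c$ or $V_e$. Since any comparison of two nodes of $\Lc^\top$ amounts to a linear pass over their sequence representations (cost $O(T_{\mynext})$), the aggregation costs $O(\deg(v) \cdot T_{\mynext})$ once the individual $\lift_\mu(v, w)$ values are known. It therefore suffices to show each $\lift_\mu(v, w)$ can be computed in $O(T_{\mynext})$ time, which is the technical heart of the lemma.

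Fix an edge $v \rightarrow w$ and write $s = \mu(v)$, $r = \mu(w)$. By definition, $\lift_\mu(v, w)$ is the smallest $t \in \Lc^\top$ with $t \succeq s$ such that either $t = \top$ or \Rrr~holds at $t$ (with respect to $B_v$) together with either \Gone~($t \succ r$) or \Gtwo~($t$ is a prefix of $r$ with $\colouring(t) \in G_v$). I would split the search into a \Gone-candidate $t_1$ and a \Gtwo-candidate $t_2$ and return the smaller of the two (defaulting to $\top$ if neither exists). To compute $t_1$, first form the starting point $x = \max_\prec(s, \mynext(r))$ with one successor call and one linear comparison; then return $x$ if it is already \Rrr-valid (checked by a single pass over the colour sequence of $x$ and testing membership in $B_v$), and otherwise invoke the ``next node using only colours in $(C \setminus B_v) \cup \{\bot\}$'' primitive to obtain $t_1$, falling back to $\top$ if no such node exists. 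To compute $t_2$, one simultaneous linear pass over $s$ and $r$ determines $\ancestor(s, r)$ and, from it, the contiguous range of prefixes of $r$ that are $\succeq s$: the range runs from $s$ to $r$ if $s$ is a prefix of $r$; from the prefix of $r$ one level below $\ancestor(s, r)$ down to $r$ if $s \prec r$ but $s$ is not a prefix; and is empty if $s \succ r$. A further linear pass over $r$ scans this range in $\prec$-increasing order, incrementally updating the running accumulated colour set, and returns the first prefix whose colour lies in $G_v$ and whose $\accumulatedcolourset$ avoids $B_v$.

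Summing, each of $t_1$ and $t_2$ uses a bounded number of $\mynext$-type primitive calls and linear passes through $s$ and $r$, giving $O(T_{\mynext})$ time per edge and $O(\deg(v)\cdot T_{\mynext})$ overall. The main obstacle I anticipate is the edge case where $x$ itself carries a colour in $B_v$, which violates the precondition $\colouring(x)\in C\setminus B_v$ of the restricted-colour successor primitive; I resolve it by first walking up the sequence of $x$ to its deepest ancestor whose colour is not in $B_v$ (a single linear pass) and invoking the primitive from there. This preparatory walk and a constant number of additional comparisons remain within the $O(T_{\mynext})$ budget, completing the argument.
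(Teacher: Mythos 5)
Your high-level decomposition is sound and is, in spirit, the same argument the paper makes: reduce $\biglift_v$ to $\deg(v)$ computations of $\lift_\mu(v,w)$, and compute each $\lift_\mu(v,w)$ by separately finding the smallest candidate that satisfies \Rrr~together with \Gone~(your $t_1$) versus \Rrr~together with \Gtwo~(your $t_2$), then taking the $\prec$-minimum. Your $t_2$ scan over the prefixes of $r = \mu(w)$ that are $\succeq \mu(v)$ is correct, and the paper's proof makes essentially the same linear pass. The paper organizes the case analysis slightly differently (by which of \Gone, \Gtwo, \Rrr~fail for the original $\mu$), but this is presentational: both arguments reduce the $\Gtwo$-failure case and the ``neither holds'' case to the $\Gone$-failure case, which is the only one that needs the restricted-colour-successor primitive.

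The gap is in your handling of $t_1$ when $\colouring(x) \in B_v$. You propose to walk up from $x$ to its \emph{deepest ancestor $y$ whose own colour avoids $B_v$} and then call the primitive from $y$. This does not compute $t_1$. The primitive from $y$ returns the smallest node $\succ y$ whose accumulated colour set avoids $B_v$, but $y \prec x$, and that node can lie strictly between $y$ and $x$. Concretely, if the unique bad colour along $x$'s path is $\colouring(x)$ itself, then $y = \parent(x)$ and $\accumulatedcolourset(y)\cap B_v=\emptyset$: any smaller sibling of $x$ with a non-$B_v$ colour is a legal output of the primitive but is $\prec x = \max(\mu(v),\mynext(\mu(w)))$, so it violates $t_1 \succeq \mu(v)$ and $t_1 \succ \mu(w)$ and the computed $\lift_\mu(v,w)$ would not actually make the edge consistent. (In this same case the primitive of Proposition~\ref{prop:nextnodeTwo} is also outside its intended domain, since its internal computation assumes $\accumulatedcolourset(t)\not\subseteq K$.) What the paper does instead is to locate the \emph{largest prefix $z$ of $x$ with $\accumulatedcolourset(z)\cap B_v=\emptyset$} and then branch off at the \emph{first bad position $z^{+}$}: it returns the smallest sibling of $z^{+}$ (same label, larger colour taken from $(C\setminus B_v)\cup\{\bot\}\setminus\accumulatedcolourset(z)$), which is strictly greater than $x$ by construction of the lexicographic order. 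Your walk-up undershoots this branch point. The fix is to branch at $z^{+}$ (equivalently, simply run the Proposition~\ref{prop:nextnodeTwo}-style computation starting from $x$ itself rather than from $y$---its stated precondition $\colouring(x)\in C'$ is not actually used, whereas the condition it does rely on, $\accumulatedcolourset(x)\cap B_v\neq\emptyset$, holds in the branch of your algorithm where you invoke it).
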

The proof of the above lemma reduces to arguing carefully by analysing cases, that using these above subroutines, we can find the node larger than $\biglift_v(\mu)(v)$ in the given tree that satisfies the conditions \Rrr~along with at least one of \Gone~or~\Gtwo. 

First, we observe that performing $\biglift_v$ on the mapping strictly increases the mapping for a vertex that is not consistent. Each operation of $\biglift_v$ also calls at most $\deg(v)$ many calls of $\lift_{\mu}(v,u)$ for some edge $v\rightarrow u$.
Suppose each operation $\lift_{\mu}(v)$ takes time $T_{\mynext}$, to find the value of~ $\biglift_v(\mu)(v)$ takes time at most $\deg(v)\cdot T_{\mynext}$. Since each non-trivial application of $\biglift_v$ strictly increases the value that $v$ is mapped to, it can be called at most as many times as the number of nodes in tree $\Lc$, this ensures that the time taken 
is 
$$\sum_{v\in V}\mathrm{deg}(v)|\Lc|\tpl{T_{\mynext}} \in O\left(m|\Lc|T_{\mynext}\right)$$
 where $m$ denotes the number of edges. 
We finally conclude this section with the following theorem, which follows from \cref{lemma:bigLift}. 
\begin{restatable}{theorem}{liftingalgo}~\label{thm:liftingalgo}
   For a $(c_0, C)$-colourful Rabin game $\gamegraph$ with $n$ vertices and $m$ edges, and an $\labelset$-labelled $(c_0,C)$-colourful tree $\Lc$, Algorithm~\ref{algo:RabinLifting} on $(\gamegraph,\Lc)$ returns the smallest Rabin measure to $\Lc^\top$ in time $O\left(m|\Lc|T_{\mynext}\right)$ where $T_{\mynext}$ 
   is as defined in \cref{lemma:bigLift}
   and $|\Lc|$ denotes the number of nodes in $\Lc$.
\end{restatable}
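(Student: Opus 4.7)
The plan is to decompose the theorem into correctness (the algorithm terminates and returns the smallest Rabin measure) and runtime (the claimed $O(m|\Lc|T_{\mynext})$ bound). Correctness is essentially a Knaster–Tarski argument combined with the earlier Proposition~\ref{lemma:fpLiftvisRM}, while the runtime follows from combining Lemma~\ref{lemma:bigLift} with a potential argument on the lattice $(\Lc^\top)^V$.

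First I would argue termination. Each operation $\biglift_v(\mu)$ modifies $\mu$ only at $v$, and by the definition of $\lift_\mu$ we always have $\biglift_v(\mu)(v) \succeq \mu(v)$, so the operator is inflationary. Moreover, the while loop only applies $\biglift_v$ at an inconsistent vertex $v$: inconsistency means no outgoing edge (for \Controller) or some outgoing edge (for \Environment) satisfies the consistency conditions at $\mu(v)$, so the value $\biglift_v(\mu)(v)$ produced by taking the $\min$ or $\max$ of $\lift_\mu(v,w)$ must be \emph{strictly} larger than $\mu(v)$ in $\prec$. Since the chain of values that a single vertex can take in $\Lc^\top$ has length at most $|\Lc|+1$, the value at each vertex can strictly increase at most $|\Lc|+1$ times, so the total number of iterations of the while loop is at most $n(|\Lc|+1)$, and in particular the algorithm terminates.

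Next I would establish correctness. When the loop exits, every vertex is consistent with respect to the final $\mu$, so $\mu$ is a simultaneous fixpoint of all the operators $\biglift_v$. By Proposition~\ref{lemma:fpLiftvisRM}, $\mu$ is then a Rabin measure. To see that $\mu$ is the \emph{smallest} Rabin measure, I would prove by induction on the iteration count that the running mapping is pointwise $\preceq$ every Rabin measure $\nu$. The base case is immediate since $\mu$ initially maps every vertex to the root, the bottom of $(\Lc^\top, \prec)$. For the inductive step, suppose $\mu \preceq \nu$ pointwise and we update $\mu$ at an inconsistent $v$. Using monotonicity from Proposition~\ref{lemma:minoftwo} together with the fact that $\nu = \biglift_v(\nu)$ (since $\nu$ is a Rabin measure and hence a fixpoint), we obtain
\[
\biglift_v(\mu)(v) \preceq \biglift_v(\nu)(v) = \nu(v),
\]
while $\biglift_v(\mu)$ agrees with $\mu$ elsewhere, preserving the inequality. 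Therefore, at termination, the returned $\mu$ is pointwise below every Rabin measure, so it is the smallest one.

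For the runtime, I would bound the work per lift and then sum. By Lemma~\ref{lemma:bigLift}, evaluating $\biglift_v(\mu)(v)$ costs $O(\deg(v)\cdot T_{\mynext})$. As argued above, each vertex $v$ undergoes at most $|\Lc|+1$ strict increases, and each execution of the while loop body performs exactly one $\biglift_v$ at a single vertex. Charging the cost of each invocation at $v$ to $v$, the total time is bounded by
\[
\sum_{v\in V} (|\Lc|+1) \cdot \deg(v) \cdot T_{\mynext} \;=\; O(m|\Lc|T_{\mynext}),
\]
which matches the claimed bound. I expect the main subtlety to be the inductive pointwise-minimality step: one needs to carefully use both the inflationary property (to know the update is well-defined as the join with $\mu(u)$) and the monotonicity of $\biglift_v$ (Proposition~\ref{lemma:minoftwo}) together with the fact that $\nu$ is a \emph{simultaneous} fixpoint to conclude $\biglift_v(\mu) \preceq \nu$; everything else is a routine bookkeeping of the potential function $\sum_v \mathrm{rank}_\Lc(\mu(v))$.
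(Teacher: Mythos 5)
Your proposal is correct and follows essentially the same route as the paper: the paper also invokes Proposition~\ref{lemma:minoftwo} and Proposition~\ref{lemma:fpLiftvisRM} for correctness (with Knaster--Tarski invoked just before the algorithm), and bounds runtime by noting each non-trivial $\biglift_v$ strictly increases $\mu(v)$, giving $\sum_v \deg(v)\,|\Lc|\,T_{\mynext} = O(m|\Lc|T_{\mynext})$. You merely expand the pointwise-minimality argument that the paper leaves implicit behind the citation of those two lemmas; the potential/charging argument for the time bound is identical.
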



 \section{Small Colourful Universal Trees}\label{sec:colourfulTrees}

In the previous section, we concluded
 that 
our algorithm identifies correctly the smallest Rabin measure into a fixed labelled colourful tree $\Lc$.
However, from Theorem~\ref{thm:measure}, \emph{there exists} a Rabin measure into an $\labelset$-labelled $(c_0,C)$-colourful tree $\Lc$ with at most $n$ leaves. Observe that we only need to consider $n$ leaves of $\Lc$ which correspond exactly to the image of the Rabin measure. 
Therefore, for a Rabin game, there is a Rabin measure into $\Lc^\top$ where all start vertices from which the game is winning for \Controller~are not mapped to $\top$. 
In order for the algorithm to successfully determine the winner of all $(c_0,C)$-colourful Rabin games with $n$ vertices, we need to ensure that the tree $\Lc$ used in Algorithm~\ref{algo:RabinLifting} 
would be able to embed all $(c_0,C)$-colourful trees with $n$ leaves. Since the runtime is linearly dependent on the tree size, smaller trees that satisfy the above property are desirable. 


We now show that we can obtain colourful \emph{universal} trees, 
i.e., colourful trees that are large enough to embed \emph{any} $(c_0,C)$-colourful $\Lc$ with $n$-nodes.
We also modify the technique of succinct universal trees of Jurdzi\'nski and Lazi\'c~\cite{JL17} to encode each node of these colourful universal trees using polynomial space, which helps navigate these labelled colourful trees efficiently.


\paragraph*{Colourful universal trees.}
A $(c_0,C)$-colourful tree $\Uc$ is \emph{$n$-universal}, if it embeds \emph{any} $(c_0,C)$-colourful tree $\Tc$ with at most $n$ leaves. 
We henceforth assume that the set $C$ consists exactly of the colours $c_1,\dots,c_{h}$, with the fixed ordering $c_1<c_2<\dots<c_{h}$ on the colours, and use $k$ to denote $h+1$.

A n\"{a}ive attempt at constructing an $n$-universal $(c_0,C)$-colourful tree could be to take all possible  $(c_0,C)$-colourful trees with at most $n$ leaves with the root colour $c_0$ and concatenate them. 
Clearly, such an $n$-universal $(c_0,C)$-colourful tree can be created as there are only finitely many such trees up to isomorphism (for a fixed $C$ and $n$). 
But of course, this tree is not only large, but can also be difficult to navigate. 
A more tractable attempt is to construct a tree that branches $n \cdot h$ many times at the root. 
The subtrees at the root that occur from this $n \cdot h$ branching have $n$ repetitions of the $h$ colours $c_1,c_2,\dots,c_{h}$, in that order. 
Each of the children in-turn branch into $n\cdot (h-1)$ many times similarly, 
thus creating a tree of size bounded by $n^hh!$. 
We claim that indeed such a tree was exactly the one underlying the algorithm of Piterman and Pnueli~\cite{PP06}, 
which led to their $O(mn^{k+1}kk!)$ algorithm.

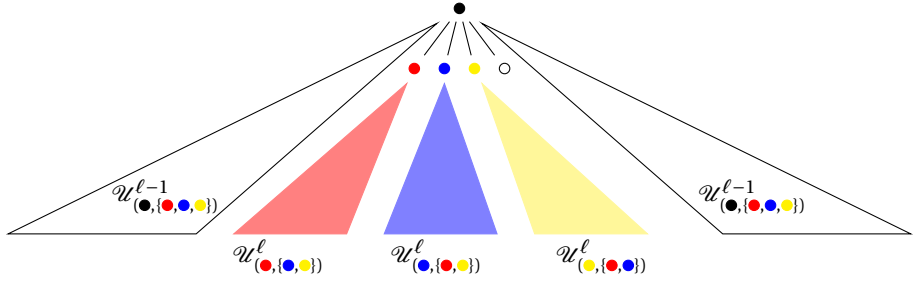
\begin{figure}[ht]
    \centering
    \begin{tikzpicture}[nodes={draw, draw=white!0=02}]
\node at (0,0) {$\blck$}
        child[level distance=8mm, sibling distance =4mm] { node {$\rd$} 
        }
        child[level distance=8mm, sibling distance = 4mm] { node{$\bl$}
        }
        child[level distance=8mm, sibling distance = 4mm] { node{$\gr$}
        }
        child[level distance=8mm, sibling distance = 4mm] { node{$\nocolournode$}
        };

        \coordinate (l0) at (-0.3,-0.2);
        \coordinate (l1) at (-6,-3);
        \coordinate (l2) at (-3.5,-3);
        
        \coordinate (r0) at (0.3,-0.2);
        \coordinate (r1) at (6,-3);
        \coordinate (r2) at (3.5,-3);
        
        \coordinate (a0) at (-0.2,-1);
        \coordinate (a1) at (-1,-3);
        \coordinate (a2) at (0.5,-3);

        \coordinate (b0) at (-0.7,-1);
        \coordinate (b1) at (-3,-3);
        \coordinate (b2) at (-1.5,-3);
        
        \coordinate (c0) at (0.3,-1);
        \coordinate (c1) at (2.5,-3);
        \coordinate (c2) at (1,-3);
        \filldraw[draw=red!50, fill=red!50] (b0) -- (b1) -- (b2) -- cycle;
        \filldraw[draw=blue!50, fill=blue!50] (a0) -- (a1) -- (a2) -- cycle;
        \filldraw[draw=yellow!50, fill=yellow!50] (c0) -- (c1) -- (c2) -- cycle;
        
        \filldraw[draw=black!100, fill=red!0] (l0) -- (l1) -- (l2) -- cycle;
        \filldraw[draw=black!100, fill=red!0] (r0) -- (r1) -- (r2) -- cycle;
        \node[draw=none] (lefte1) at (-3.9,-2.5) {$\Uc^{\ell-1}_{(\blck,\{\rd,\bl,\gr\})}$}; 
        \node[draw=none] (righte1) at (3.9,-2.5) {$\Uc^{\ell-1}_{(\blck,\{\rd,\bl,\gr\})}$}; 
        \node[draw=none] (redtree) at (-2.4,-3.3) {$\Uc^{\ell}_{(\rd,\{\bl,\gr\})}$};
        \node[draw=none] (bluetree) at (-0.3,-3.3) {$\Uc^{\ell}_{(\bl,\{\rd,\gr\})}$};
        \node[draw=none] (yellowtree) at (1.9,-3.3) {$\Uc^{\ell}_{(\gr,\{\rd,\bl\})}$};
\end{tikzpicture}    
        \caption{Inductive construction of a smaller colourful $n$-universal tree}
    \label{fig:UniversalColourful}
\end{figure}
Below, we give a more involved construction of a significantly smaller universal tree. 
In our construction, we inductively describe such a $(c_0,C)$-colourful $n$-universal tree, which we call $\Uc^\ell_C$, for a fixed $n \leq 2^\ell$.
\begin{itemize}
    \item if $C = \emptyset$, then there is exactly one tree to embed, and therefore $$\Uc^\ell_{(c_0,C)} = \tpl{c_0,\seq{\left(\bot,\seq{}\right)^{2^\ell}}}$$
    \item if $\ell=0$, then the tree to be embedded has exactly one leaf and therefore, for each colour $c_i$ in $C$, we have a child of colour $c_i$ which hosts a subtree whose colour at the root is $c_i$. This is defined inductively as
    $$\Uc^0_{(c_0,C)} \:= \tpl{c_0,\seq{\Uc_{(c_1,C_1)}^0,\dots,\Uc_{(c_h,C_h)}^0,\left(\bot,\seq{}\right)}}$$ where   $C_i$ is $C\setminus \{c_i\}$. 
    \item if $C \neq \emptyset$ and $\ell > 0$, then we define the coloured tree to be two copies of an $n/2$-universal tree, and $h$ many copies of the $n$-universal tree where one colour is dropped each time. More formally, 
    $$\Uc^\ell_{(c_0,C)} \:=
    \Uc_{(c_0,C)}^{\ell-1}\cdot
\tpl{c_0,\seq{\Uc_{(c_1,C_1)}^\ell,\dots,\Uc_{(c_h,C_h)}^\ell,\left(\bot,\seq{}\right)}}
\cdot\Uc_{(c_0,C)}^{\ell-1}.$$
\end{itemize}
In \cref{fig:UniversalColourful}, we demonstrate how the inductive construction is done if $c_0 = \blck$ and the set of colours is $C=\{\rd,\bl,\gr\}$. To the left and right are the $(\blck, C)$-colorful $n/2$-universal trees and between them, there are $|C|$ many $n$-universal trees each of which uses one fewer colour and one node with just the dummy colour represented there by $\nocolournode$. 

\begin{restatable}{theorem}{universal}\label{thm:universal}
For $C\neq \emptyset$, and $k= |C|+1$, $\Uc^\ell_{(c_0,C)}$ constructed is a $(c_0,C)$-colourful $n$-universal tree with at most $$nk! \left(\min\left\{n2^k , {\ell +k\choose k-1}\right\}\right)$$ many leaves, where $\ell = \lceil\log n\rceil$.
\end{restatable}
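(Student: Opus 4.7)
The plan is to prove both claims by a simultaneous induction on $\ell + k$, where $k = |C|+1$, mirroring the recursive structure of $\Uc^\ell_{(c_0,C)}$.

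\textbf{Universality.} Fix a $(c_0,C)$-colourful tree $\Tc$ with at most $n \le 2^\ell$ leaves. If $C = \emptyset$, then $\Tc = \tpl{c_0,\seq{(\bot,\seq{})^m}}$ with $m \le 2^\ell$, which trivially embeds into $\Uc^\ell_{(c_0,\emptyset)}$. If $\ell = 0$, then $\Tc$ has a single leaf and is therefore a chain of nodes of pairwise distinct colours, so it embeds into $\Uc^0_{(c_0,C)}$ by descending into the child of the matching colour at each level and recursing (using the induction hypothesis on $k$). For the main case $\ell, k \ge 1$, the plan is the standard Jurdzi\'nski--Lazi\'c median-child split: pick the smallest index $i$ such that the leaves of $\Tc_1,\ldots,\Tc_{i-1}$ sum to at most $n/2 \le 2^{\ell-1}$; by minimality, $\Tc_{i+1},\ldots,\Tc_m$ also contribute at most $2^{\ell-1}$ leaves in total. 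The induction hypothesis on $\ell$ embeds the left and right parts into the two surrounding copies of $\Uc^{\ell-1}_{(c_0,C)}$, while the middle subtree $\Tc_i$---either $(\bot,\seq{})$ or a $(c_j, C\setminus\{c_j\})$-colourful tree with at most $n$ leaves---is embedded by the induction hypothesis on $k$ into the matching component of the middle block. Concatenating these three embeddings produces an embedding of $\Tc$ into $\Uc^\ell_{(c_0,C)}$.

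\textbf{Size bound.} Let $L(\ell,k)$ denote the number of leaves of $\Uc^\ell_{(c_0,C)}$ with $k = |C|+1$. Unrolling the construction gives the recurrence
\[
L(\ell,k) = 2L(\ell-1,k) + (k-1)L(\ell,k-1) + 1 \qquad (\ell \ge 1,\ k \ge 2),
\]
with base cases $L(\ell,1) = 2^\ell$ and $L(0,k) = (k-1)L(0,k-1) + 1$. The plan is to establish $L(\ell,k) \le 2^\ell\, k!\, \binom{\ell+k}{k-1}$ by induction on $\ell+k$; this implies the claim in the theorem since $2^\ell \le 2n$, and, by a standard inequality, $\binom{\ell+k}{k-1} \le 2^{\ell+k} = n\cdot 2^k$. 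Plugging the inductive hypotheses into the recurrence and applying Pascal's identity $\binom{\ell+k}{k-1} = \binom{\ell+k-1}{k-1} + \binom{\ell+k-1}{k-2}$ together with the arithmetic identity $(k-1)(k-1)! = k! - (k-1)!$ reorganises the right-hand side into
\[
2^\ell\, k!\, \binom{\ell+k}{k-1} \;-\; 2^\ell\, (k-1)!\, \binom{\ell+k-1}{k-2} \;+\; 1,
\]
and the subtractive slack dominates the $+1$ as soon as $k \ge 2$.

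\textbf{Main obstacle.} The technically delicate part is the size-bound induction: the additive $+1$ in the recurrence, produced by the dummy $\bot$-leaf in the middle block at every recursive step, must be absorbed by a carefully placed subtractive correction. The pair of identities---Pascal's identity and $(k-1)(k-1)! = k!-(k-1)!$---conspire to leave exactly the slack required; any coarser inductive hypothesis closes only with a spurious constant factor and misses the tight leading term. The universality direction is otherwise a direct transcription of the Jurdzi\'nski--Lazi\'c splitting argument, the only new ingredient being uniform handling of the $\bot$-leaf together with the colour-drop branches $\Uc^\ell_{(c_j, C\setminus\{c_j\})}$ inside the middle block.
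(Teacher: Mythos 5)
Your proposal is correct and follows essentially the same route as the paper: the universality direction is the median-child split used in Proposition~\ref{prop:colourfuluniversal}, and the size recurrence and inductive target $L(\ell,k)\le 2^\ell\, k!\,\binom{\ell+k}{k-1}$ match Lemma~\ref{lemma:smallcolourfulcombibound}, with your Pascal-plus-factorial rearrangement being a valid alternative to the paper's way of absorbing the stray $+1$. The one genuine (if small) streamlining is that you prove only the binomial bound inductively and derive the $n2^k$ branch of the minimum from the crude estimate $\binom{\ell+k}{k-1}\le 2^{\ell+k}$, whereas the paper runs two independent inductions (Lemma~\ref{lemma:smallcolourful} for $2^k k!4^\ell$ and Lemma~\ref{lemma:smallcolourfulcombibound} for the binomial form); your route is more economical, at the cost of a slightly looser leading constant in the $n2^k$ branch, which does not affect the theorem's statement. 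One wording slip to fix: you pick ``the smallest index $i$ such that the leaves of $\Tc_1,\ldots,\Tc_{i-1}$ sum to at most $n/2$,'' which is vacuously $i=1$; what you intend (and what ``by minimality'' actually requires) is the smallest $i$ for which the cumulative leaf count of $\Tc_1,\ldots,\Tc_i$ first exceeds $n/2$, exactly as in the paper's $N_j$ argument.
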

\begin{proof}
Firstly, we show that $\Uc^\ell_{(c_0,C)}$  is $(c_0,C)$-colourful $n$-universal tree (in Proposition~\ref{prop:colourfuluniversal}). Then we prove using induction that $\Uc_{(c_0,C)}^\ell$ 
has at most $2^k\cdot k!\cdot 4^\ell$ leaves in
Lemma~\ref{lemma:smallcolourful} and at most  ${\ell +k\choose k-1}\cdot2^\ell\cdot k!$ leaves in Lemma~\ref{lemma:smallcolourfulcombibound} leading to the proof of our theorem. 
\end{proof}

In fact, we have a lower bound for $n$-universal $(c_0,C)$-colourful trees, which is within a polynomial factor of the upper bound.

It is known from the work of Calude et al., as well as from Casares et al.~\cite{CJKLS22,CPPST23} that there are no algorithms that solve Rabin games in time $n^{O(1)}\cdot 2^{o(k\log k)}$. But observe that this does not exclude algorithms which is dependant on $k!$ by only a constant smaller than 1 in the exponent. 
We have improved the current state-of-the art from $2+o(1)$ to $1+o(1)$ in the exponent. A natural question to ask would be if the $k!$ component can be reduced further. We show below that we cannot improve our running time much further using our techniques. 
\begin{lemma}[Lower bound]~\label{lemma:lowerboundcolourfultree}
Any $n$-universal $(c_0,C)$-colourful tree must have size at least ${\ell + k - 1 \choose{\ell}}(k-1)!$ where $k = |C|+1$ and $\ell = \lfloor\log n\rfloor$.
\end{lemma}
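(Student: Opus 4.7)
The plan is to reduce the problem to the known classical (uncoloured) universal tree lower bound, with the extra $(k-1)!$ factor arising from the freedom to fix an arbitrary ordering of the $k-1$ non-root colours along root-to-leaf paths. More precisely, for each permutation $\pi = (c_{\pi(1)}, \ldots, c_{\pi(k-1)})$ of the colours in $C$, I will exhibit a large classical universal tree hidden inside $\Uc$ that is ``coloured'' according to $\pi$, and then argue that the deepest nodes of these hidden trees are disjoint across distinct permutations.

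To make this precise, for each permutation $\pi$ define the $\pi$-slice $\Uc_\pi \subseteq \Uc$ to be the (classical, rooted) subtree consisting of all nodes of $\Uc$ whose ancestor at depth $i$ has colour $c_{\pi(i)}$, for every $1 \leq i \leq k-1$. Given any ordinary ordered tree $T$ with at most $n$ leaves all at depth $k-1$, colouring its depth-$i$ nodes with $c_{\pi(i)}$ produces a valid $(c_0, C)$-colourful tree $T_\pi$ with at most $n$ leaves, which $\Uc$ must embed by $n$-universality. Since embeddings preserve both colour and depth, the image of $T_\pi$ lies entirely inside $\Uc_\pi$. Therefore, the underlying uncoloured structure of $\Uc_\pi$ must be a classical $n$-universal ordered tree of height $k-1$; applying the known lower bound on the size of classical universal ordered trees (e.g., from Czerwi\'nski et al.~\cite{CDFJLP19}) yields that $\Uc_\pi$ contains at least ${\ell + k - 1 \choose \ell}$ nodes at depth $k-1$.

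To assemble the final bound, observe that any node of $\Uc$ at depth $k-1$ has a root-to-node colour sequence of the form $c_0, c_{i_1}, \ldots, c_{i_{k-1}}$ with all $c_{i_j}$ distinct members of $C$ (since colours along a path are distinct and there are exactly $k-1$ positions below the root), so such a node belongs to $\Uc_\pi$ for a unique permutation $\pi$. Consequently, the depth-$(k-1)$ nodes of $\Uc_{\pi}$ and $\Uc_{\pi'}$ are disjoint for $\pi \neq \pi'$, and summing over the $(k-1)!$ permutations yields the claimed bound ${\ell + k - 1 \choose \ell}(k-1)!$ on the size of $\Uc$. The main obstacle I anticipate is pinning down the precise form of the classical universal-tree lower bound so that it delivers exactly ${\ell + k - 1 \choose \ell}$ rather than a neighbouring binomial coefficient, and making sure that restricting to classical trees whose leaves all sit at depth $k-1$ does not weaken the classical bound.
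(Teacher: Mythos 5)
Your argument is essentially identical to the paper's own proof: both fix a permutation of the $k-1$ colours in $C$, observe that the corresponding monochromatic ``slice'' of $\Uc$ must itself be a classical $n$-universal ordered tree of the appropriate height and invoke the Czerwi\'nski et al.\ lower bound, then note that deep nodes of slices for distinct permutations are pairwise disjoint and multiply by $(k-1)!$. Your flagged worry about pinning down the exact binomial coefficient is well-founded: the paper's own proof invokes the classical bound in the form ${\ell + h - 1 \choose h - 1}$ with $h = k-1$, which evaluates to ${\ell + k - 2 \choose \ell}(k-1)!$ rather than the ${\ell + k - 1 \choose \ell}(k-1)!$ asserted in the lemma statement, so there is an off-by-one discrepancy between the paper's statement and its derivation that your write-up inherits depending on which form of the Czerwi\'nski bound you instantiate.
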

\begin{proof}\label{prf:lowerboundcolourfultree}
Fix a permutation $c_{i_1},\dots,c_{i_h}$ of the colours in $C$ and consider any tree with $n$ leaves where the order of colours from the root to the leaf is exactly the same as the given permutation. Moreover, we assume that the leaves all have the same depth from the root. This tree must have size at least the size of a $2^\ell$-universal tree (defined for ordered tree without colours). Such universal trees have size at least ${\ell + h - 1 \choose{h-1}}$ in the work of Czerwi\'{n}ski et. al~\cite{CDFJLP19}. For each choice of permutation, the universal tree restricted to that permutation must have size ${\ell + h - 1 \choose{h-1}}$. Furthermore, two universal trees obtained by fixing different permutations cannot share a leaf since distinct colours are assigned to some ancestor of such leafs. Therefore, we obtain a lower bound of ${\ell + h - 1 \choose{h-1}}h!$ on the size of any $(c_0,C)$-colourful $n$-universal trees. 
  
This immediately gives us the bound ${\ell + k - 2 \choose{\ell}}(k-1)!$ for $k = |C|+1$. Our lower bound also matches one of the upper bounds of our construction up to a polynomial factor in $n$ and $k$.
\end{proof}

\paragraph*{Labelling Colourful Universal Trees.}
Here, we give a labelling of a universal colourful tree described in the previous section by giving an $\words$-labelling of any $(c_0,C)$-colourful tree where the set $\words = \{0,1\}^*$. We let $\varepsilon$ denote the empty string in $ \{0,1\}^*$.  We define the ordering on $\{0,1\}^*$ as follows, similar to the succinct encoding of ordered trees~\cite{JL17}: $0<\varepsilon<1$ and for $b_1,b_2\in \{0,1\}$ we have $b_1\cdot w_1< b_2\cdot w_2$ if and only if $b_1<b_2$ or $b_1 = b_2$ and $w_1<w_2$.

Any node $t$ in a $\words$-labelled $(c_0,C)$-colourful tree can be represented by a word generated by the following regular expression $$\{0,1\}^*c_{i_1}\cdot\{0,1\}^*c_{i_2}\cdot\ldots\cdot\{0,1\}^*c_{i_m}$$ where $c_{i_j}\neq c_{i_k}$ if $j\neq k$ and $c_{i_j} = \bot$ if and only if $j = m$.
We call the number of $0$s and $1$s occurring in the word, \emph{the number of bits used to label $t$}.
We show in the following lemma that it is possible to have a labelling of our colourful universal tree $\Uc_{(c_0,C)}^\ell$ such that the labelling of each node in it is `short'.

\begin{restatable}{lemma}{succinctcolourfullabeling}~\label{lemma:succinctcolourfullabeling}
There is a $\words$-labelling of the tree $\Uc_{(c_0,C)}^\ell$, denoted by  $\Lc_{C}^\ell$ such that the number of bits used to label any node of   $\Lc_{C}^\ell$  is at most $\ell$.
\end{restatable}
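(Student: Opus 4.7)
The plan is to proceed by strong induction on the pair $(\ell, |C|)$, building the $\words$-labelling $\Lc_{C}^\ell$ in lockstep with the recursive construction of $\Uc_{(c_0,C)}^\ell$ and maintaining three invariants at each step: the $\alpha$-labels of siblings are non-decreasing in the order on $\words$; siblings with equal $\alpha$-labels carry distinct colours; and every root-to-leaf sequence uses at most $\ell$ bits. The construction will lift the succinct encoding of Jurdzi\'nski and Lazi\'c~\cite{JL17} to the colourful setting.

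For the base cases, when $C = \emptyset$ and $\ell = 0$ the unique leaf is given label $(\varepsilon, \bot)$, which costs zero bits. When $C = \emptyset$ and $\ell \geq 1$, the $2^\ell$ sibling leaves of colour $\bot$ are labelled by all $2^\ell$ binary strings of length exactly $\ell$, listed lexicographically; each leaf then uses exactly $\ell$ bits, and the equal-label/distinct-colour condition is vacuous because all labels are distinct. When $\ell = 0$ and $C \neq \emptyset$, every edge out of the root is given label $\varepsilon$, its children inheriting the pairwise distinct colours $c_1, c_2, \ldots, c_h, \bot$; each subtree $\Uc^0_{(c_i,C_i)}$ then receives a zero-bit labelling from the inductive hypothesis applied at $(0, |C|-1)$.

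For the main inductive step $\ell \geq 1$, $C \neq \emptyset$, I will exploit the three-way decomposition of $\Uc^\ell_{(c_0,C)}$ into a left copy of $\Uc^{\ell-1}_{(c_0,C)}$, a middle group of $h+1$ children with subtrees $\Uc^\ell_{(c_1,C_1)}, \ldots, \Uc^\ell_{(c_h,C_h)}$ and the bare leaf $(\bot, \seq{})$, and a right copy of $\Uc^{\ell-1}_{(c_0,C)}$, and apply the standard Jurdzi\'nski--Lazi\'c trick: take the labelling $\Lc^{\ell-1}_{C}$ (available by induction on $\ell-1$), prepend the bit $0$ to the first $\alpha$-component of every non-empty sequence from the left copy and prepend $1$ to the first $\alpha$-component of every non-empty sequence from the right copy; give the $h+1$ middle children the labels $(\varepsilon, c_1), (\varepsilon, c_2), \ldots, (\varepsilon, c_h), (\varepsilon, \bot)$ and, inside each middle subtree $\Uc^\ell_{(c_i,C_i)}$, reuse the labelling $\Lc^\ell_{C_i}$ (available by induction on $|C|-1$) unchanged. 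The bit budget is tight: outer paths contribute $1 + (\ell-1) = \ell$ bits, while middle paths contribute $0 + \ell = \ell$ bits.

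The checks I expect to be the main obstacle all live at the new root, where the three blocks meet. The non-decreasing sibling order across blocks reduces to the extended fact that $0w < \varepsilon < 1w'$ for all $w, w' \in \{0,1\}^*$, which follows from the base relation $0 < \varepsilon < 1$ stated in the paper; within each block, prepending a common bit is order-preserving by the lexicographic rule on $\{0,1\}^*$, so the inductive order transfers directly. The equal-label/distinct-colour condition is inherited block-wise from the induction hypothesis, is vacuous across blocks since labels are then strictly comparable, and holds inside the middle block because the colours $c_1, \ldots, c_h, \bot$ are pairwise distinct even though all middle $\alpha$-labels equal $\varepsilon$. The delicate piece is pinning down precisely how $0 < \varepsilon < 1$ extends to comparisons of the empty string with arbitrary non-empty strings at the block boundaries---routine but easy to misstate---which is the same bookkeeping underlying the original succinct-encoding analysis of~\cite{JL17}.
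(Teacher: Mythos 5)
Your construction is the same as the paper's: prepend $0$ and $1$ to the first $\alpha$-component of the two outer $\Lc^{\ell-1}_C$ copies, label the middle children with $\varepsilon$ paired with distinct colours, and reuse $\Lc^\ell_{C_i}$ for the middle subtrees, yielding the same $\ell$-bit budget. You additionally handle the $C = \emptyset$, $\ell \geq 1$ base case explicitly (which the paper's proof omits, even though it is needed when $|C| = 1$ in the recursion), and you spell out the order/distinct-colour invariants, which the paper leaves implicit.
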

\begin{proof}[sketch]For 
$\ell>0$, we have $\Uc^{\ell}_{(c_0,C)} = \Uc^{\ell -1}_{(c_0,C)}\cdot\tpl{c_0,\seq{\Uc_{(c_1,C_1)}^\ell, \dots, \Uc_{(c_h,C_h)}}} \cdot \Uc^{\ell -1}_{(c_0,C)}$. We obtain recursively a labelling of $\Uc^{\ell -1}_{(c_0,C)}$ and append the bit 0 for the copy on the left and append with 1 for the copy on the right. For all the labellings of $\Uc_{(c_i,C_i)}^\ell$, we add the element $\varepsilon\cdot c_i$ as a prefix.
\end{proof}

We rigorously prove this in the appendix, but only state here that the three operations  defined in the statement of \cref{lemma:bigLift} can be computed in time $O(k\ell\log k)$ (denoted by $T_\mynext$), where $k = |C|+1$ (Lemma~\ref{lemma:nextnodeOne} and Proposition~\ref{prop:nextnodeTwo}).
\begin{restatable}{theorem}{mainRabin}
~\label{thm:mainRabin}
    Finding the winner in a $(c_0,C)$-colourful Rabin game with $n$ vertices, $m$ edges, and $k = |C|+1$, takes time $$\widetilde{O}\left(mnk^2k!\min\left\{n2^k,{\lceil \log n\rceil+k\choose k-1}\right\}\right)$$ and $O(nk\log k \log n)$ space.
\end{restatable}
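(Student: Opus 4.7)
The plan is to invoke Algorithm~\ref{algo:RabinLifting} on the $\words$-labelled colourful universal tree $\Lc^\ell_C$ from Lemma~\ref{lemma:succinctcolourfullabeling} with $\ell=\lceil\log n\rceil$, and then tally the cost using the three main ingredients developed earlier in the paper.

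For correctness, I would first invoke positional determinacy of Rabin games (Emerson--Jutla) to produce a uniform positional strategy $\sigma$ for \Controller~on its winning region, so that every infinite play in $\Gc|_\sigma$ from a winning vertex satisfies the Rabin condition. Applied to $\Gc|_\sigma$, Theorem~\ref{thm:measure} furnishes a Rabin measure into an $\labelset$-labelled $(c_0,C)$-colourful tree with at most $n$ leaves, sending no winning vertex to $\top$; this is simultaneously a Rabin measure for the game $\gamegraph$ in the sense of Section~\ref{sec:unilift}, since the $\sigma$-edge witnesses consistency at every \Controller~vertex. By Theorem~\ref{thm:universal}, the tree $\Uc^\ell_{(c_0,C)}$ is $n$-universal, so Remark~\ref{remark:rabinmeasureembed} transports this measure into a Rabin measure into $\Lc^\ell_C\cup\{\top\}$ of the same support. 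Proposition~\ref{lemma:fpLiftvisRM} and Theorem~\ref{thm:liftingalgo} then guarantee that Algorithm~\ref{algo:RabinLifting} returns the pointwise-least Rabin measure for $\gamegraph$ into $\Lc^\ell_C\cup\{\top\}$, whose non-$\top$ entries are exactly \Controller's winning vertices.

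For the runtime, Theorem~\ref{thm:liftingalgo} bounds the total work by $O(m|\Lc^\ell_C|T_\mynext)$. Because $\Uc^\ell_{(c_0,C)}$ has depth at most $k$, its total node count is at most $k$ times its leaf count, and Theorem~\ref{thm:universal} gives
\[
|\Lc^\ell_C|=O\!\left(nk\cdot k!\cdot\min\left\{n2^k,{\ell+k\choose k-1}\right\}\right).
\]
Together with the estimate $T_\mynext=O(k\ell\log k)$ for the three navigation primitives of Lemma~\ref{lemma:bigLift} on the succinct labelling, and absorbing the $\log n$ and $\log k$ factors into $\widetilde{O}$, this yields the claimed time bound. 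For the space bound, the algorithm stores only the current map $\mu\colon V\to\Lc^\ell_C\cup\{\top\}$; by Lemma~\ref{lemma:succinctcolourfullabeling} each node encodes as a sequence of at most $k$ colours (each $O(\log k)$ bits) interleaved with at most $\ell$ bits from $\{0,1\}^*$, yielding a per-vertex cost of $O(\log n + k\log k)$ bits and a total in $O(nk\log k\log n)$.

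The main obstacle will be justifying $T_\mynext = O(k\ell\log k)$ directly on the succinct labelling. Given a node $t$ together with the colour set appearing on its root-to-node path and a set $B$ of bad colours, the lift step must return the smallest successor of $t$ in $\Lc^\ell_C$ whose accumulated colour set from the root avoids $B$, without materialising the tree. The plan is to descend recursively through the three-part structure $\Uc^\ell_{(c_0,C)}=\Uc^{\ell-1}_{(c_0,C)}\cdot\langle\Uc^\ell_{(c_1,C_1)},\ldots,\Uc^\ell_{(c_h,C_h)},(\bot,\langle\rangle)\rangle\cdot\Uc^{\ell-1}_{(c_0,C)}$: either recurse into the appropriate half by flipping or preserving the leading bit of $t$'s label, or skip past middle-layer subtrees whose root colour is forbidden or already used. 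Each recursive step manipulates $O(k\log k)$-bit chunks of the encoding, the recursion has depth $O(\ell)$, and a careful case analysis of the three primitives of Lemma~\ref{lemma:bigLift} then delivers the stated $T_\mynext$ and closes the proof.
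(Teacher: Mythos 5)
Your proposal is correct and follows essentially the same route as the paper's proof: instantiate Algorithm~\ref{algo:RabinLifting} on the succinctly labelled universal tree $\Lc^\ell_C$ with $\ell=\lceil\log n\rceil$, bound the node count via Theorem~\ref{thm:universal}, and combine with Theorem~\ref{thm:liftingalgo} and the $T_\mynext=O(k\ell\log k)$ navigation cost from Lemma~\ref{lemma:nextnodeOne} and Proposition~\ref{prop:nextnodeTwo}. You spell out the correctness chain (positional determinacy, Theorem~\ref{thm:measure}, Remark~\ref{remark:rabinmeasureembed}, Proposition~\ref{lemma:fpLiftvisRM}) more explicitly than the paper's terse proof does, which is a welcome clarification rather than a divergence.
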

\begin{proof}
    We know that the lifting Algorithm~\ref{algo:RabinLifting} for a $(c_0,C)$-colourful tree finds the Rabin measure into the tree $\Lc$ in time $O(m|\Lc|T_\mynext)$ from \cref{thm:liftingalgo}. 
    For a game with $n$ vertices, we instantiate the algorithm with $\Lc$ being the $\words$-labelling of the $(c_0,C)$-colourful $2^\ell$-universal tree $\Uc_{(c_0,C)}^{\ell}$ constructed, where $\ell =\lceil\log(n)\rceil$. The tree
    $\Lc$ therefore has at most $\left(nk!\min\left\{n2^k,{\lceil \log n\rceil+k\choose k-1}\right\}\right)$ many leaves from Theorem~\ref{thm:universal}, and hence at most $k$ times as many nodes. 
    Moreover, the time taken to navigate the tree $T_\mynext$ is at most $O(k\ell\log k)$.
    The space required by the algorithm at each step is just the space required to store the map. This takes $O(k\log k \log n)$ for each of the $n$ vertices, giving us the desired space complexity.
\end{proof}

 \section{Conclusions, Discussion and Future Work}
 We have shown an algorithm for Rabin games that requires almost quadratic space and takes time that is polynomial in $n$ and $(k!)^{1+o(1)}$. 
Significantly more asymptotic improvement to the running time may be difficult, as it was shown in the work of Calude et al.~\cite{CJKLS22,CPPST23} that there are no algorithms to solve Rabin games (as well as Muller games) in time $n^{O(1)}\cdot 2^{o(k\log k)}$ unless the Exponential Time Hypothesis fails (informally, it is the assumption that 3-SAT has no sub-exponential algorithms). However, improvements in the exponents of the parameter $k!$, which contributes to the majority of the running time would prove useful in any algorithm that solves Rabin games. We have shown that using colourful universal trees cannot provide a significant improvement bound because of the 
$k!^{1+o(1)}$ lowerbound on the size of such a tree. However, any technique that improves, even on a few targeted cases, this $1+o(1)$ bound could lead to faster algorithms. For instance, the recent unpublished work of Liang, Khoussainov, and Xiao~\cite{LKX23} improve the running time for specific values of $k$, where the size of $k$ is large (comparable to $n$).

While we focus on the theoretical advance in this paper, an obvious future direction is to implement the algorithm. 
There are tools that convert LTL specifications to Rabin automata---such as Rabinizer 4 \cite{KMSZ18}. 
It will be interesting to see if solving the obtained Rabin games using our algorithms outperforms converting them instead to parity games and then using state-of-the-art parity game solvers such as Oink~\cite{vDij18} framework. We believe improvement in state space of solving Rabin games through our paper might lead to more efficient algorithms for the problem of reactive synthesis of LTL formulas. 

Our algorithm, like other progress measure algorithms, can display worst-case behaviour in certain asymmetric examples. To show a vertex is losing for~\Controller, the measure needs to increase until it reaches $\top$. This lack of symmetric treatment of the players by our algorithm might lead to worst case behaviour on several examples. But circumventing this problem by constructing similar measures for~\Environment~in the hopes of finding a symmetric algorithm is not as straightforward, as \Environment~does not have a positional strategy in this game. 

In a different direction, symbolic algorithms for parity games are either implicitly or explicitly guided by universal trees~\cite{CDHS18,JMT22} constructed for both players. 
We believe with some effort, our small colourful universal trees can be exploited to make symbolic algorithms to solve Rabin 
games. 
One such algorithm would look like an asymmetric variation of the universal algorithm in the work of Jurdzi\'nski, Morvan, and Thejaswini~\cite{JMT22} for parity games, combined with our construction of colourful universal trees. 
Indeed, we already have a definition of colourful decompositions which one might hope to obtain as an end-result of such a recursive symbolic algorithm. 


\paragraph*{Acknowledgements.}
We would like to thank Marcin Jurdzi\'{n}ski and Anne-Kathrin Schmuck for valuable discussions and references. We also thank Aditya Prakash for his valuable comments and, in particular, for reading the section on colourful trees despite his colour blindness.
\bibliographystyle{splncs04}
\bibliography{biblio}

\newpage
 \appendix
 \section{Appendix for Section~\ref{sec:shapeofRG}}
  

\begin{restatable}{lemma}{winningdecomposition}
\label{lemma:winningdecomposition}
Let $\Gc$ be a $(c_0,C)$-colourful Rabin graph where all infinite paths satisfy the Rabin condition, then there is a $(c_0,C)$-colourful decomposition of $\Gc$.
\end{restatable}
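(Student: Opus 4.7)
The plan is to prove this by strong induction, lexicographically on $(|C|, |V|)$. The base case $|C| = 0$ is immediate: the Rabin condition with $C = \emptyset$ forces every infinite path to visit some vertex whose good set contains $c_0$ (infinitely often, hence at least once), so $\Dc = \langle V \rangle$ is a valid $(c_0, \emptyset)$-colourful decomposition.

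For the inductive step ($|C| \geq 1$), I would first define $A := \{v \in V : \text{every infinite path from } v \text{ visits some } v' \text{ with } c_0 \in G_{v'}\}$ and set $W_1 := V \setminus A$. If $W_1 = \emptyset$, take $\Dc := \langle V \rangle$. Otherwise, two easy preparatory facts hold: (i) no vertex of $W_1$ has $c_0$ in its good set (else it would already be in $A$), and (ii) every $v \in W_1$ has an outgoing edge remaining inside $W_1$ (witnessed by the second vertex of any infinite path from $v$ that avoids $c_0$-good vertices). Consequently $G[W_1]$ is a valid Rabin subgraph whose infinite paths are infinite paths of $\Gc$ and cannot have $c_0$ as Rabin witness. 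I would then pick any bottom strongly connected component $S$ of $G[W_1]$ and apply the Rabin hypothesis to the infinite path cycling through all of $S$; this produces a colour $c_1 \in C$ such that $c_1 \in G_v$ for some $v \in S$ and $c_1 \notin B_v$ for every $v \in S$. Set $V_1 := S$: it has no edges in $G[W_1]$ to $W_1 \setminus V_1$ (bottom SCC). Reinterpreting $V_1$ as a $(c_1, C \setminus \{c_1\})$-colourful Rabin graph (with $G_v' = G_v \setminus \{c_0\}$ and $B_v' = B_v$), every infinite path inside still satisfies the reduced Rabin condition -- its original witness lies in $C$ by fact (i) and hence remains available in the reduced colour set $\{c_1\} \cup (C \setminus \{c_1\})$. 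The induction hypothesis on the strictly smaller colour set yields a $(c_1, C \setminus \{c_1\})$-colourful decomposition $\Dc_1$ of $V_1$. Finally define $A_1 := \{v \in W_1 : \text{every infinite path from } v \text{ in } G[W_1] \text{ visits } V_1\}$; clearly $V_1 \subseteq A_1$.

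Set $W_2 := W_1 \setminus A_1$; if $W_2 = \emptyset$, return $\langle A, (c_1, V_1, \Dc_1, A_1) \rangle$. Otherwise $|W_2| < |V|$ since $V_1 \subseteq A_1$ and $V_1$ is nonempty. A short argument shows that any infinite path in $G[W_1]$ from $v \in W_2$ avoiding $V_1$ must in fact remain inside $W_2$: if it passed through some $u \in A_1$, its tail from $u$ would be an infinite path in $G[W_1]$ avoiding $V_1$, contradicting $u \in A_1$. Hence $G[W_2]$ is a valid Rabin graph, all its infinite paths satisfy Rabin, and no vertex of $W_2$ is $c_0$-good. By the induction hypothesis on the strictly smaller vertex set (same $C$), $G[W_2]$ admits a $(c_0, C)$-colourful decomposition $\Dc' = \langle \emptyset, (c_1', V_1', \Dc_1', A_1'), \ldots, (c_{j'}', V_{j'}', \Dc_{j'}', A_{j'}') \rangle$ whose first component is empty since $W_2$ has no $c_0$-good vertex. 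Gluing produces
\[
\Dc := \langle A,\, (c_1, V_1, \Dc_1, A_1),\, (c_1', V_1', \Dc_1', A_1'),\, \ldots,\, (c_{j'}', V_{j'}', \Dc_{j'}', A_{j'}') \rangle,
\]
and one checks that the outer $W$-sequence after index $1$ coincides with the inner $W$-sequence of $\Dc'$, so each clause in the definition of a decomposition is inherited.

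The main obstacle will be the construction of $(c_1, V_1)$: extracting from a single cycle witness on a bottom SCC a colour that is simultaneously good somewhere in $V_1$ and bad nowhere in $V_1$, and then verifying that the reduced game on $V_1$ still has all infinite paths Rabin-winning so that the colour-induction hypothesis applies. A secondary bookkeeping care is needed to handle the boundary cases $W_1 = \emptyset$, $W_2 = \emptyset$, and $A' = \emptyset$ uniformly, and to align the outer and inner $W$-sequences when gluing.
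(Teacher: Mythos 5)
Your proposal is correct and follows essentially the same route as the paper's proof: peel off the $c_0$-attractor $A$, take a bottom SCC $V_1$ of the remainder, extract a colour $c_1$ from the Rabin witness for the path cycling through $V_1$, recurse on colours to get $\Dc_1$, define $A_1$, and recurse on vertices for $W_2 = W_1\setminus A_1$ before gluing. The only differences are cosmetic — you spell out the boundary cases $W_1=\emptyset$, $W_2=\emptyset$ and use a lexicographic induction on $(|C|,|V|)$ where the paper's induction hypothesis is phrased disjunctively, but the underlying argument is the same.
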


\begin{proof}[Proof of Lemma~\ref{lemma:winningdecomposition}]\label{prf:winningdecomposition}
  We construct such a decomposition, by inducting on $|C|$ and the number of vertices in $\Gc$.
\paragraph*{Base Case}   If $C=\emptyset$, then for all vertices $v$,  $B_v = \emptyset$. 
Observe also that all paths in the SCCs after finitely many steps must visit a vertex $v$ such that $G_v = \{c_0\}$. This is because all paths in $\Gc$ satisfy the Rabin condition, there is no infinite path such that $G_v = \emptyset$ for all $v$ along the path. 
The $(c_0,C)$-colourful decomposition is just $\seq{V}$.

\paragraph*{Induction Hypothesis}
For all $(c_0,C)$-colourful Rabin graphs $\Gc$ where all infinite paths satisfy the Rabin condition and $|C|<k$ or $|V|<n$.  Then there is a $(c_0,C)$-colourful decomposition 
$$\Dc\:= \seq{A, \tpl{c_{1},V_1, \Dc_1,A_1}, \dots, \tpl{c_{j},V_j,\Dc_j,A_j}}$$
where
 $c_0\notin B_v$ for all $v\in V$, and for all $v\in V$, if $c_0\in G_v$ then $v\in A$.
\paragraph*{Induction Step}
Suppose $|C| = k$, and the induction hypothesis holds. 
Consider all vertices $R \:= \{v\mid c_0\in G_v\}$, and let $A\supseteq R$ be the set of vertices from which every infinite path starting from $A$ visits some vertex from $R$. 
The subgraph induced by $W_1 = V\setminus A$, which is a subgraph of $\Gc$ also satisfies the property that all vertices have an outgoing edge. More importantly, all infinite paths in it satisfy the Rabin condition. Moreover, there are no vertices $v$ such that $c_0\in G_v$ or $c_0\in B_v$ for $v\in W_1$.

Consider an SCC decomposition of the graph induced by $V\setminus A$.  Consider a bottom SCC (an SCC from which there is no path to other SCCs) $V_1$ of the graph induced by $V\setminus A$.
Consider a path $\pi$ such that the set of all vertices visited by $\pi$ infinitely often is exactly $V_1$. This path satisfies the Rabin condition, which implies there is some colour $c_1$ such that $c_1\notin B_v$ for all $v\in V_1$ and $c_1\in G_v$ for some $v\in V_1$. 

Therefore, by induction, there is a $(c_1,C\setminus\{c_1\})$-colourful decomposition of $V_1$, say $\Dc_1$.
Let $A_1$ denote all the vertices in $V \setminus A$ from which all infinite paths lead to a vertex in $V_1$. 

Now consider the game $W_2 \:= (V\setminus A)\setminus A_1$, which has fewer vertices. We know again that $W_2$ is a subgraph of $\Gc$ and therefore there are no vertices $v$ such that $c_0\in G_v$ or $c_0\in B_v$ for $v\in W_2$, there must be a $(c_0,C)$-colourful decomposition.

Let this decomposition be: $$\Dc'\:= \seq{\emptyset, \tpl{c_{2},V_2, \Dc_2,A_2}, \dots, \tpl{c_{j},V_j,\Dc_j,A_j}}$$
Note that the top set of vertices is $\emptyset$, by induction hypothesis since there are no vertices $v$ where $c_0$ is a good colour or a bad colour of $v$.

We claim that $$\Dc\:= \seq{A, \tpl{c_{1},V_1, \Dc_1,A_1},  \tpl{c_{2},V_2, \Dc_2,A_2}, \dots, \tpl{c_{j},V_j,\Dc_j,A_j}}$$ thus constructed from the sets defined above is a $(c_0,C)$-colourful decomposition.

It is routine to verify that it satisfies all the properties of a decomposition by construction. 
\end{proof}


\begin{restatable}{lemma}{decompToMeasure}
~\label{lemma:decompToMeasure}
Given a $(c_0,C)$-colourful Rabin graph $\Gc$ on which we have a $(c_0,C)$-colourful decomposition $\Dc$, there is an $\labelset$-labelled $(c_0,C)$-colourful tree with Rabin measure for $\Gc$, where no vertex is mapped to $\top$.
\end{restatable}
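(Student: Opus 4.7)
The plan is to construct the tree $\Lc$ and the Rabin measure $\mu$ simultaneously by induction on $|C| + |V|$, mirroring the recursive structure of the decomposition $\Dc$. The labelled tree produced will have exactly the shape suggested by the decomposition: a root of colour $c_0$, followed at the first level by a block of $\bot$-leaves (one per attractor rank in $A$) and then, for each component $(c_i, V_i, \Dc_i, A_i)$ taken in order, the inductively built subtree $\Lc_i$ followed by a further block of $\bot$-leaves corresponding to the attractor ranks in $A_i \setminus V_i$ within $W_i$.

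For the base case $C = \emptyset$, the decomposition degenerates to $\seq{V}$, so every infinite path visits a vertex with $c_0 \in G_v$. Define an attractor rank $r : V \to \mathbb{N}$ by $r(v) = 0$ when $c_0 \in G_v$ and $r(v) = 1 + \max_{v \to v'} r(v')$ otherwise; finiteness follows from the reachability hypothesis. Let $\Lc = \tpl{c_0, \seq{(\bot, \seq{})^{R}}}$ with labels $1, 2, \dots, R$ on the children, where $R = \max_v r(v)$. Set $\mu(v) = \tpl{}$ if $r(v) = 0$, and $\mu(v) = (r(v), \bot)$ otherwise. An edge $u \to v$ with $r(u) = 0$ satisfies \Gtwo~(the root is an ancestor of $\mu(v)$ and $\colouring(\mu(u)) = c_0 \in G_u$); when $r(u) > 0$, $r(v) < r(u)$ forces $\mu(u) \succ \mu(v)$, giving \Gone. \Rrr~is immediate since $\accumulatedcolourset(\mu(u)) = \{c_0\}$ and $c_0 \notin B_u$.

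For the inductive step, apply the hypothesis to each sub-decomposition $\Dc_i$ (on the smaller colour set $C \setminus \{c_i\}$ and restricted vertex set $V_i$) to obtain an $\labelset$-labelled $(c_i, C \setminus \{c_i\})$-colourful tree $\Lc_i$ and a Rabin measure $\mu_i : V_i \to \Lc_i$ that never takes the value $\top$. Then assemble $\Lc$ at the root $c_0$ by listing its children, in left-to-right order: $R_A$ labelled $\bot$-leaves (one per attractor rank in $A$), then for $i = 1, \dots, j$ the subtree $\Lc_i$ followed by $R_{A_i}$ labelled $\bot$-leaves; assign labels in $\mathbb{N}$ in strictly increasing order so that the prefix-closed set is a valid $\labelset$-labelling. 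Define $\mu$ region-wise: the root or a $\bot$-leaf in the $A$-block (indexed by the attractor rank to $c_0$-good vertices) for $v \in A$; the image of $\mu_i(v)$ transported into the $\Lc_i$-subtree for $v \in V_i$; and a $\bot$-leaf in the $A_i$-block (indexed by the attractor rank to $V_i$ within $W_i$) for $v \in A_i \setminus V_i$.

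The main obstacle is the case-by-case consistency verification, organised by the region containing the source vertex $u$. For $u, v \in V_i$, consistency transfers from $\mu_i$ because the embedding of $\Lc_i$ into $\Lc$ preserves both ancestor relations and the $\prec$-order on nodes of $\Lc_i$; the only additional colours accumulated from the root of $\Lc$ to $\mu(u)$ are $c_0$ and $c_i$, and neither lies in $B_u$ (since $c_0 \notin B_v$ for all $v$, and $c_i \notin B_v$ for $v \in V_i$ by the decomposition invariant). For $u \in V_i$ with $v \notin V_i$, the no-escape property forces $v \in V \setminus W_i$, whose image lies strictly to the left of $\Lc_i$, so $\mu(u) \succ \mu(v)$ yields \Gone. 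Edges from $A$ of rank $0$ use \Gtwo~as in the base case; edges from higher-rank vertices in $A$ or in $A_i \setminus V_i$ strictly decrease the rank (or leave to a lower-indexed region whose image is lexicographically smaller), so \Gone~holds; and \Rrr~is trivial at $\bot$-leaves since only $c_0$ appears in the accumulated colour set. The delicate bookkeeping lies in choosing the labels and verifying that \emph{every} edge out of $V_i$ truly lands in $V \setminus W_i$ and not in a later $W_{i'}$, which is exactly the content of the decomposition's invariants.
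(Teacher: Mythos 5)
Your proposal is correct and follows essentially the same approach as the paper: induction on $|C|$ (the $|V|$ component of your ordering is unused, since each sub-decomposition $\Dc_i$ drops a colour), with the tree built region-by-region from the decomposition, the measure assembled by rank within each $\bot$-block and by recursion inside each $V_i$, and consistency verified case-by-case using that $V\setminus W_i$ is mapped lexicographically below $W_i$. The only cosmetic difference is your use of the recursively defined rank $r(v)=1+\max_{v\to v'}r(v')$ in place of the paper's longest-path distance $D(v)$, which are interchangeable here.
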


\begin{proof}[Proof of Lemma~\ref{lemma:decompToMeasure}]
The following is proved by induction on the size of $C$. Given a decomposition, we inductively obtain a tree and a corresponding mapping into the tree. 
We modify both the tree and Rabin measure thus obtained from the recursively defined decompositions and then merge them together. We later prove that indeed such a mapping defined is a Rabin measure.

\paragraph*{Suppose $C = \emptyset$} and the $(c_0,\emptyset)$-colourful decomposition $\Dc = \seq{V}$. Let $t$ denote the length of the longest path in $V$ that does not visit a vertex $v$ for which $c_0$ is a good colour. We consider an $\labelset$-labelled $(c_0,C)$-colourful tree obtained from the prefix closure $\Lc$ of the set $\{\tpl{\alpha_1\cdot\bot},\dots,\tpl{\alpha_t\cdot\bot}\}$, where $\alpha_1<\alpha_2<\dots<\alpha_t$, each $\alpha_i$ is an element of $\Nb$.

    All vertices $v\in V$ such that $c_0\in G_v$ are mapped to the empty sequence denoted by $\tpl{}$. For vertices $v$ where $c_0\notin G_v$, we define $D(v)$ to be the length of the longest path from $v$ to a vertex $u$ such that $c_0\notin G_u$. 
    We then define $\mu$ for each $v$ where $c_0\notin G_v$ and where $D(v) = i$ is finite, to be $(\alpha_i\cdot\bot)$s. Suppose $\mu(u_1) = \tpl{\alpha_{i}\cdot\bot}$ and $\mu(u_2) = \tpl{\alpha_j\cdot\bot}$ and if $D(u_1)<D(u_2)$ then $\alpha_i<\alpha_j$.

    To verify that such a mapping satisfies the condition, we know that since $B_v = \emptyset$ for all $v$, all the vertices satisfy \Rrr~trivially. 

    Now we consider all edges $u\rightarrow v$ and show that the edge satisfies \Gone~or \Gtwo. Notice that if $c_0\in G_u$, then it satisfies \Gtwo.   
    Else, $\mu(u)\succ\mu(v)$, since it must be the case that $D(u)>D(v)$, since the distance to any vertex which $c_0$ as a good colour from $u$ is at least one more than this distance from $v$. Therefore if $\mu(u) = \tpl{\alpha_j\cdot\bot}$ and $\mu(v) = \tpl{\alpha_i\cdot\bot}$, then $\alpha_j>\alpha_i$ and therefore $\mu(u)\succ\mu(v)$.

    \paragraph*{Suppose $C\neq \emptyset$} we have a $(c_0,C)$-colourful decomposition where $C\neq \emptyset$, where $$\Dc = \seq{A, (c_1,V_1,\Dc_1,A_1),\dots,(c_j,V_j,\Dc_j,A_j)}$$
    
    Then for each $V_i$, since it has a $(c_i,C\setminus\{c_i\})$-colourful decomposition, by induction, we have a mapping $\mu_i$ to an $\labelset$-labelled $(c_i,C\setminus\{c_i\})$-colourful tree $\Lc_i$.

    We give a Rabin measure $\mu$ into the tree $\Tc$ below
    $$\set{\tpl{}}\cup\eset{\tpl{\alpha^0_1\cdot\bot},\dots,\tpl{\alpha^0_t\cdot\bot}}\bigcup_{i=1}^j \eset{\tpl{\alpha^i_0\cdot c_i}\odot\Lc_i,\tpl{\alpha^i_1\cdot\bot},\dots,\tpl{\alpha^i_t\cdot\bot}}$$
    where $\alpha^i_\ell$ are elements from $\labelset$ such that $\alpha^{i_1}_\ell<\alpha^{i_2}_{\ell'}$ if $i_1<i_2$ and $\alpha^{i}_{\ell_1}<\alpha^{i}_{\ell_2}$ if $\ell_1<\ell_2$.

We define $\mu(u)$ from a decomposition $\Dc$ above as follows.
\begin{itemize}
    \item If $u\in A$ and  $c_0\in G_u$, then $\mu(u) = \tpl{}$.
    \item If $u\in A\setminus \{v\in A\mid c_0\in G_v\}$, we define $\mu(u) = \left(\alpha^{0}_\ell\cdot \bot\right)$ where $\ell$ is the length of the largest path from $u$ to a vertex $v$ such that $c_0\in G_v$.
    \item For vertices  $u\in V_i$, we define $\mu(u) = (\alpha^i_0\cdot c_i)\odot \mu_i(u)$.
    \item For vertices $u\in A_i\setminus V_i$, we define $\mu(u) = (\alpha^i_\ell\cdot \bot)$ where $\ell$ is the length of the largest path from $u$ to a vertex in $V_i$.
    \end{itemize}

We show that the $\mu$ defined above satisfies the conditions required for it to be a Rabin measure. For this we need to show each edge in the graph $\Gc$ is consistent.

Before this we use the following observation about the mapping defined. For the rest of the proof,  we denote $A_0 = A$, $V_0 = \{v\in A\mid c_0\in G_v\}$. Let $W_i$ be defined similarly to the definition of a decomposition, where $W_1 = V\setminus A$, and $W_{i+1} \:= W_i\setminus A_i$. We moreover define $W_0 \:=V$. 
\begin{quote}{$(*)$}
    For $i\in \{0,1,\dots,j\}$, any vertex in $u \in V\setminus W_{i}$ is such that $\mu(u)\prec \mu(v)$ for any $v\in W_i$.
\end{quote}

\begin{itemize}
    \item If $u\in A$ and $c_0\in G_u$, since $\mu(u) = \tpl{}$, for such a vertex any edge $u\rightarrow v$ satisfies \Gtwo, since the root is coloured with $c_0$, and also satisfies \Rrr~since $c_0\notin B_u$.
    
        \item If $u\in A_i\setminus V_i$, for $i\in \{0,1,\dots,j\}$, and suppose $\mu(u) = (\alpha^i_\ell\cdot \bot)$ we show that edges from $u$ satisfies \Gone. 
    \begin{itemize}
        \item If $v\in V\setminus W_i$, then edge $u\rightarrow v$ satisfies \Gone~from $(*)$, as all vertices in $V\setminus W_i$ are mapped to a node strictly smaller than $\mu(u)$ already. 
        \item If $v\in A_i$, all paths from $A_i$ in $W_i$ (as defined in the definition of a decomposition) leads to a vertex in $V_i$. 
        \item If $v\in V_i$ then by definition it is mapped to a descendent of $\alpha^i_0$ and is therefore mapped to a value smaller than $(\alpha^i_\ell\cdot \bot)$, and satisfies \Gone. If not, then $v$ is a neighbour  of $u$ in $A_i\setminus V_i$ and must have a distance (in $W_i$) to a vertex in $V_i$ to be strictly smaller than that from $u$. 
    \end{itemize}
    Therefore for any neighbour $v$, from our assignment of $\mu$, it must be the case that $\mu(v) = (\alpha^i_{\ell_1}\cdot \bot)$, where $\ell_1<\ell$, and hence $\mu(u)\succ \mu(v)$. 
    Observe that all edges from $u$ also satisfies \Rrr~because the only ancestor of $\mu(u)$ is $\tpl{}$, and it is coloured with $c_0$, and $c_0\notin B_v$ for any $v$, and therefore specifically $c_0\notin B_u$.

    \item If $u\in V_i$ for $i\in \{1,\dots,j\}$, for all edges $u\rightarrow v$, $v$ is either in in $V_i$ or in $V \setminus W_i$ since there are no paths from $V_i$ to $W_i \setminus V_i$. If $v\in V\setminus W_i$, we know $\mu(u) \succ \mu(v)$ from $(*)$, and thus \Gone~is satisfied. On the other hand if $v \in V_i$, then $\mu_i(u)$ and $\mu_i(v)$ are both defined. If edge $u\rightarrow v$ satisfies ~\Gtwo~with respect to $\mu_i$, then it continues to be satisfied in $\mu$ since  $\colouring(\mu(u)) = \colouring(\mu_i(u))$. Otherwise, the edge $u\rightarrow v$ satisfies~\Gone~in  $\mu_i$, i.e. $\mu_i(u) \succ \mu_i(v)$. Then $\mu(u) \succ \mu(v)$ since $\mu$ appends the same value to the beginning of $\mu_i(u)$ and $\mu_i(v)$. Thus, \Gone~is satisfied with respect to $\mu$ as well.

    Observe that  $\accumulatedcolourset(\mu(u)) = \accumulatedcolourset(\mu_i(u)) \cup \{c_i\}$. Also notice that from the definition of a decomposition, $c_i\notin B_u$ for any $u\in V_i$. So, if $\accumulatedcolourset(\mu_i(u))\cap B_u = \emptyset$, then $B_u\cap \accumulatedcolourset(\mu(u)) = \emptyset$. Thus, \Rrr~is also satisfied by edge $u\rightarrow v$.
\end{itemize}
\end{proof}

For the proof of Lemma~\ref{lemma:measureeasy} which would show how a Rabin measure serves as a witness that all infinite paths in a Rabin graph satisfy the Rabin condition, we require the following two simple facts on trees. These hold in general for all ordered trees and not just colourful ordered trees.  First one in Proposition~\ref{prop:irmaksprop} says that all the ancestors of a larger node in the tree is always either an ancestor of a smaller node or is also larger than the smaller node. 

The latter proposition is about an infinite sequence of nodes in a tree where two consecutive nodes satisfy some given properties.


\begin{restatable}{proposition}{irmakprop}
\label{prop:irmaksprop}
Any ancestor $t$ of $t'$ is such that for any other node $t''\prec t'$, either $t$ is  an ancestor of $t''$ or $t$ is strictly larger than $t''$.
\end{restatable}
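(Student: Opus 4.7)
The plan is a straightforward case analysis based on which clause of the definition of $\prec$ witnesses $t''\prec t'$. Recall that $t''\prec t'$ means either (i) $t''$ is a strict prefix of $t'$, or (ii) $t''$ is lexicographically smaller than $t'$ when both are viewed as sequences over $\labelset\times(C\cup\{\bot\})$. In both cases I will use the hypothesis that $t$ is a prefix of $t'$ to transfer information about $t'$ to $t$.

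In case (i), both $t$ and $t''$ are prefixes of the same sequence $t'$, so one is a prefix of the other. If $t$ is a prefix of $t''$, then $t$ is an ancestor of $t''$ by definition, and we are done. Otherwise, $t''$ is a strict prefix of $t$, which directly gives $t''\prec t$ by the prefix clause of the definition of $\prec$, yielding $t\succ t''$.

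In case (ii), if $t$ is a prefix of $t''$ we are again done. If $t''$ is a (necessarily strict) prefix of $t$, then $t''\prec t$ by the prefix clause. So the remaining situation is that $t$ and $t''$ are prefix-incomparable; let $i$ be the first position at which they disagree, so that $t[i]\neq t''[i]$. Since $t$ is a prefix of $t'$, we have $t'[j]=t[j]$ for every $j<|t|$, and in particular $t'[i]=t[i]$; thus $t'$ and $t''$ agree on positions $0,\ldots,i-1$ and disagree at position $i$. Using $t''<_{\mathrm{lex}} t'$, this forces $t''[i] < t'[i] = t[i]$, and consequently $t''<_{\mathrm{lex}} t$, i.e.\ $t\succ t''$.

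I expect no real obstacle here; the only point that requires a bit of care is making sure in case (ii) that the ``first differing position'' $i$ between $t$ and $t''$ is also the first differing position between $t'$ and $t''$, which is why I explicitly use that $t$ is a prefix of $t'$ to identify $t[i]$ with $t'[i]$. All three cases either produce $t$ as an ancestor of $t''$ or produce $t\succ t''$, which is exactly the disjunction in the statement.
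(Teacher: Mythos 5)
Your proof is correct and follows essentially the same route as the paper's: a case analysis on why $t''\prec t'$, combined with the key observation that $t$ and $t'$ agree on the first $|t|$ positions, which lets you transfer the comparison with $t'$ to a comparison with $t$. The paper phrases it contrapositively (assume $t$ is not an ancestor of $t''$ and derive $t''\prec t$) and is terser, but the substance is identical.
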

\begin{proof}[Proof of Proposition~\ref{prop:irmaksprop}]
Assume $\ancestor(t'', t)$ is different from $t$. Then it is a strict ancestor of $t$. Since $t'' \prec t'$ and $t$ is an ancestor of $t'$,  either $t'' = \ancestor(t, t'')$ or $t'' \prec_{|t|} t=_{|t|} t'$. Here $\prec_a$ (resp. $=_a$) compares the first $a-$many elements of two tuples in accordance with $\prec$ (resp. $=$). This says that, since $t$ and $t'$ have the same first $|t|-$many entries, and $t$ is not an ancestor of $t''$ (i.e. $t'' \neq_{|t|} t$), $t''$ should either be an ancestor of $t$, or smaller than $t$ in the first $|t|-$many entries. In both of these cases we get $t'' \prec t$.
\end{proof}
\begin{proposition}[Lemma 1, \cite{KK91}]~\label{prop:smalltreeproposition}
Consider an infinite sequence $\rho$ of nodes from $\Lc$, an $\labelset$-labelled $(c_0,C)$-colourful tree, where $\rho = t_0,t_1,\dots,t_i,\dots$. Suppose for all $j\in \mathbb{N}$, if
\begin{itemize}
    \item either $t_j\succ t_{j+1}$  or
    \item $t_j$ is an ancestor of $t_{j+1}$
\end{itemize}  then the smallest element of the sequence, denote ed by $t_{\inf}$  must be
\begin{enumerate}
    \item  the largest common ancestor of $t_i$ and $t_{i+1}$ infinitely often
    \item  an ancestor of all but finitely many $t_i$s.
\end{enumerate}
\end{proposition}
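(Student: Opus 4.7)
\medskip

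\noindent\textbf{Proof plan.} The plan is to prove (2) first by an induction that hinges on Proposition~\ref{prop:irmaksprop}, and then deduce (1) as a short corollary. Throughout, I interpret $t_{\inf}$ as the smallest node visited infinitely often by the sequence, in line with how it is used in the proof sketch of \ref{en:measure}$\Rightarrow$\ref{en:one}; in the intended application the sequence has finite range (it is the image of a path in a finite game under a measure), so such a node exists. Pick $N$ large enough that every $t_i$ with $i\geq N$ lies in the (finite) set of nodes visited infinitely often, so in particular $t_i\succeq t_{\inf}$ for all $i\geq N$. Since $t_{\inf}$ is visited infinitely often, I can then fix some $i_0\geq N$ with $t_{i_0}=t_{\inf}$.

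For (2), I would show by induction on $j\geq i_0$ that $t_{\inf}$ is an ancestor of $t_j$. The base case $j=i_0$ is immediate. For the step, assume $t_{\inf}$ is an ancestor of $t_j$, and split on the hypothesis on consecutive pairs. If $t_j$ is an ancestor of $t_{j+1}$, then $t_{\inf}$ is too by transitivity. Otherwise $t_j\succ t_{j+1}$; apply Proposition~\ref{prop:irmaksprop} with $t=t_{\inf}$ (an ancestor of $t'=t_j$ by the inductive hypothesis) and $t''=t_{j+1}$ (which satisfies $t''\prec t'$): the proposition yields that either $t_{\inf}$ is an ancestor of $t_{j+1}$, or $t_{\inf}\succ t_{j+1}$. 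The second alternative is ruled out because $j+1\geq N$ gives $t_{j+1}\succeq t_{\inf}$. This completes the induction and proves (2).

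For (1), since $t_{\inf}$ is visited infinitely often, there are infinitely many indices $j\geq i_0$ with $t_j=t_{\inf}$. For each such $j$, property (2) asserts that $t_{\inf}$ is an ancestor of $t_{j+1}$; hence the longest common prefix of $t_j=t_{\inf}$ and $t_{j+1}$ is exactly $t_{\inf}$, i.e.\ $\ancestor(t_j,t_{j+1})=t_{\inf}$. So $t_{\inf}$ is the greatest common ancestor of consecutive terms at infinitely many positions.

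The delicate point in the argument is the $\succ$-case in the induction for (2): a priori one might fear that a $\succ$-step could jump out of the subtree below $t_{\inf}$ into a smaller sibling subtree. Proposition~\ref{prop:irmaksprop} is exactly what forbids this---combined with the lower bound $t_j\succeq t_{\inf}$ for $j\geq N$, it forces every $\succ$-step after index $i_0$ to land back inside the subtree rooted at $t_{\inf}$.
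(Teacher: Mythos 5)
Your proof is correct and follows essentially the same route as the paper's: both fix a tail of the sequence where all terms are visited infinitely often, start at an occurrence of $t_{\inf}$, and run an induction along the sequence whose $\succ$-case is handled by Proposition~\ref{prop:irmaksprop} together with the minimality $t_j \succeq t_{\inf}$. The only (cosmetic) difference is that you prove item (2) first and deduce (1) from it, whereas the paper argues (1) directly and then runs essentially the same induction for (2); your ordering is arguably a little tidier.
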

\begin{proof}
Let $p$ be the position after which all $t_k$ such that $k>p$ are such that $t_k\in \inf\{\rho\}$. Without loss of generality, assume $t_p = t_{\inf}$. Clearly, $t_{p+1}\succeq t_p$, since it is the smallest among $\inf\{\rho\}$. 
\begin{enumerate}
    \item We recall that $t_p\succ t_{p+1}$ or $t_p$ is an ancestor of $t_{p+1}$. And we can conclude that $t_p$  is an ancestor of $t_{p+1}$. Since after position $p$, each element occurs infinitely many times, we have that $t_{\inf}$ is the largest common ancestor of $t_{\inf}$ and all its occurrences $t_i$ and its successors $t_{i+1}$.
    \item We also argue that $t_{\inf}$ is an ancestor of all $t_{j}$ for $j\geq p$. Let the next occurrence of $t_{\inf}$ in $\rho$ be at $t_q$, where $q>p$. We will show that for all $p\leq j\leq q$, $t_{\inf}$ is an ancestor of $t_j$, or equivalently that $t_p$ is an ancestor of $t_j$. Indeed, consider $t_p,t_{p+1},\dots,t_q$.

    We show $t_p = \ancestor(t_p,t_p)= \ancestor(t_p,t_{p+1})=\dots=\ancestor(t_p,t_{q}) = t_p$. We proceed by induction. In the base case, trivially $t_p $ is an ancestor of $t_p$.
    We assume as the induction hypothesis that $t_p$ is an ancestor of $t_{p+i}$. We know that $t_{p+i}$ and $t_{p+i+1}$ satisfy either  $t_{p+i}\succ t_{p+i+1}$ or  $t_{p+i}$ is an ancestor of $t_{p+i+1}$. 

    In the latter case $t_{p+i}$ is an ancestor of $t_{p+i+1}$. By the inductive hypothesis we have that $t_p$ is an ancestor of $t_{p+i}$. Therefore, we conclude that $t_p$ is an ancestor of $t_{p+i+1}$.
 
    In the former case, we invoke Proposition~\ref{prop:irmaksprop} with $t:= t_p$, $t':=t_{p+i}$ and $t'':=t_{p+i+1}$. We consequently get either $t_{p+i+1} \prec t_p$ or $t_p$ is an ancestor of  $t_{p+i+1}$. Since $t_p = t_{\inf}$, this gives us $t_p$ is an ancestor of $t_{p+i+1}$ concluding our claim. 
\end{enumerate}
\end{proof}


\begin{restatable}{lemma}{measureeasy}
~\label{lemma:measureeasy}
If there is an $\labelset$-labelled $(c_0,C)$-colourful Rabin measure for a $(c_0,C)$-colourful Rabin graph $\Gc$ and no vertex is mapped to $\top$, then all infinite paths in the graph $\Gc$ satisfy the Rabin condition. 
\end{restatable}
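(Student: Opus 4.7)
The plan is to fix an arbitrary infinite path $\pi = v_0, v_1, v_2, \ldots$ in $\Gc$, look at the induced sequence of tree nodes $t_i = \mu(v_i)$, and use Proposition~\ref{prop:smalltreeproposition} to extract a witness colour for the Rabin condition. Since no vertex is mapped to $\top$ and $\mu$ is a Rabin measure, every edge $v_i \to v_{i+1}$ satisfies~\Rrr~together with at least one of \Gone~and \Gtwo; in particular, for every $i$, either $t_i \succ t_{i+1}$ or $t_i$ is an ancestor of $t_{i+1}$. This is exactly the hypothesis of Proposition~\ref{prop:smalltreeproposition}, so I may invoke it on the sequence $(t_i)_{i \geq 0}$.

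Let $t_{\inf}$ be the smallest node occurring infinitely often in $(t_i)$, as supplied by the proposition, and set $c = \colouring(t_{\inf})$; this will be the candidate witness colour. Pick an index $N$ such that for all $i \geq N$ the node $t_{\inf}$ is an ancestor of $t_i$ (item~(2) of the proposition), and let $I = \{\, i \geq N : t_i = t_{\inf}\,\}$, which is infinite. For any $i \in I$, the proof of the proposition shows that $t_{\inf}$ is an ancestor of $t_{i+1}$; hence $t_i = t_{\inf} \not\succ t_{i+1}$, so \Gone~cannot hold at $v_i \to v_{i+1}$, and therefore \Gtwo~must hold, giving $c = \colouring(t_i) \in G_{v_i}$. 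Since $V$ is finite and $I$ is infinite, a pigeonhole argument yields a vertex $v^\star$ with $v^\star = v_i$ for infinitely many $i \in I$; in particular $v^\star$ appears infinitely often in $\pi$ and $c \in G_{v^\star}$.

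It remains to verify that $c \notin B_v$ for every vertex $v$ visited infinitely often by $\pi$. For any $i \geq N$, the ancestor property gives $c = \colouring(t_{\inf}) \in \accumulatedcolourset(t_i)$, and then condition~\Rrr~for the outgoing edge from $v_i$ forces $c \notin B_{v_i}$. If $v$ is visited infinitely often, then $v = v_i$ for infinitely many $i \geq N$, whence $c \notin B_v$. Together with the previous paragraph, the colour $c$ witnesses the Rabin condition along $\pi$, and since $\pi$ was arbitrary, every infinite path in $\Gc$ satisfies the Rabin condition.

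The only delicate step I anticipate is the use of Proposition~\ref{prop:smalltreeproposition} to argue that \Gtwo~(rather than \Gone) fires at the infinitely many indices in $I$; this is essentially the observation that at a position where $t_i = t_{\inf}$ no strict decrease in $\prec$ is possible because $t_{\inf}$ is, by construction, the minimum of the infinitely-often-visited nodes. Everything else—pigeonhole over $V$, the ancestor stability of $\accumulatedcolourset$, and applying \Rrr~pointwise—is routine bookkeeping.
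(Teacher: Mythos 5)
Your proof is correct and takes essentially the same route as the paper: invoke Proposition~\ref{prop:smalltreeproposition} on the image sequence, take the least infinitely-recurring node $t_{\inf}$, argue that \Gtwo\ must fire at its recurrences (so $c=\colouring(t_{\inf})\in G_v$ for some $v$ visited infinitely often), and use \Rrr\ together with the eventual-ancestor property to rule $c$ out of every $B_v$ for $v$ visited infinitely often. The only cosmetic difference is that the paper proves $\colouring(t_{\inf})\neq\bot$ as a separate preliminary step, whereas you obtain it for free from the \Gtwo\ deduction since $G_v\subseteq C\cup\{c_0\}$ excludes $\bot$; you also make the finite-$V$ pigeonhole explicit, which the paper leaves implicit.
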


\begin{proof}[Proof of Lemma~\ref{lemma:measureeasy}]\label{prf:measureeasy}
Consider an infinite path $\pi = v_0\rightarrow v_1 \rightarrow \dots \rightarrow v_j\rightarrow\dots$ in $\Gc$. We define the infinite sequence $\mu(\pi)\: = \mu(v_0),\mu(v_1),\ldots,\mu(v_j),\ldots$, obtained by taking the image of the run on the colourful tree. In this colourful tree, consider $t = \min\{\inf(\mu(\pi))\}$, and let $c = \colouring(t)$. 

For such a $t$, we show 
\begin{enumerate}
    \item $t$ is not coloured with $\bot$;~\label{en:easynotbot}
    \item $c\in G_v$, for infinitely many $v$ in $\pi$, and ~\label{en:easyG}
    \item $c\notin B_v$ for each $v$ occurring after some finite prefix in $\pi$,~\label{en:easyR}
\end{enumerate}
to conclude that $\pi$ satisfies the Rabin condition.
Before we begin the rest we first remark that from conditions \Gone~or \Gtwo, we get that $(v_{j},v_{j+1})$ is such that either one of the following is true, either $\mu(v_{j})\succ \mu(v_{j+1})$, or $\mu(v_{j}) = \ancestor(\mu(v_{j}), \mu(v_{j+1}))$. 

Therefore $t = \min\{\inf(\mu(\pi))\}$ must be the largest common ancestor of $(\mu(v_{i}), \mu(v_{i+1}))$ infinitely often, and moreover, must be a common ancestor of $\mu(v_{j})$ for almost all $j$.

To show~\ref{en:easynotbot} consider a vertex $v_i$, and $(v_{i},v_{i+1})$ which occurs infinitely often in the play $\pi$ for which $\mu(v_i)= t$ and also 
where the edge is consistent. This especially means that this edge satisfies condition \Gone~or \Gtwo. If $v_i$ is coloured with $\bot$, this edge can only satisfy \Gone, and hence $\mu(v_{i})\succ\mu(v_{i+1})$, a contradiction to the assumption that $\mu(v_i) = \liminf(\mu(\pi))$.

Item~\ref{en:easyG} which claims that $c\in G_v$ infinitely often for vertices from the play $\pi$ also follows from the above conditions as edge $(v_{i},v_{i+1})$ identified in the above condition should satisfy \Gtwo infinitely often where $\mu(v_i) = t$.

Finally, we show item~\ref{en:easyR} that $c\notin B_v$ for any $v$ after some finite prefix of $\pi$. This is because for any $(v_{j},v_{j+1})$, where we have  $c\notin \accumulatedcolourset(\mu(v_{j}))$ from condition \Rrr. Since we had earlier observed that $t$ is a common ancestor of $\mu(v_{j})$ 
 for all but finitely many of the edges $(v_{j},v_{j+1})$ in $\pi$, we must have $c\notin B_{v_j}$ for all but finitely many $v_j$s. 

This proves the above Lemma and therefore the (\ref{en:measure} $\implies$ \ref{en:one}) of Theorem~\ref{thm:measure}.
\end{proof}


The following lemma is not necessary in proving the equivalences but is provided to help the understanding of the equivalence of Item~\ref{en:decomp} and Item~\ref{en:one} in Theorem~\ref{thm:measure} more clearly. 
\begin{restatable}{lemma}{measureeasydecomp}
~\label{lemma:measureeasydecomp}
If there is an $\labelset$-labelled $(c_0,C)$-colourful decomposition for a $(c_0,C)$-colourful Rabin graph $\Gc$, then all paths satisfy the Rabin condition in $\Gc$.
\end{restatable}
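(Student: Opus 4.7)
}
The plan is to induct on $|C|$, tracing an arbitrary infinite path $\pi$ through the nested sets of the decomposition and using the attractor-like properties built into the definition to pin down which Rabin colour witnesses acceptance.

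For the base case $|C|=\emptyset$, the decomposition is $\Dc=\seq{V}$, which by definition says that every infinite path starting in $V$ visits some vertex $v$ with $c_0\in G_v$. Iterating this at each subsequent position along $\pi$, we see that $\pi$ visits such a vertex infinitely often, and since $c_0\notin B_v$ for every vertex, the colour $c_0$ witnesses the Rabin condition.

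For the induction step, suppose the claim holds for all colour sets of size smaller than $|C|$, and consider a decomposition $\Dc=\seq{A,(c_1,V_1,\Dc_1,A_1),\dots,(c_j,V_j,\Dc_j,A_j)}$ together with an infinite path $\pi$. Setting $A_0:=A$, the sets $A_0,A_1,\dots,A_j$ partition $V$, so some index $i$ is such that $\pi$ visits $A_i$ infinitely often; take $i$ minimal with this property. If $i=0$, then $\pi$ enters $A$ infinitely often. At every such entry, by the definition of $A$, the infinite suffix of $\pi$ must visit a vertex where $c_0$ is good, so $\pi$ sees a $c_0$-good vertex infinitely often; combined with $c_0\notin B_v$ for all $v$, this yields the Rabin condition with witness~$c_0$.

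If $i>0$, minimality of $i$ means that after a finite prefix, $\pi$ avoids $A_0\cup\dots\cup A_{i-1}$, so its suffix lies entirely in $W_i$. By the attractor property, every infinite path in $W_i$ starting from $A_i$ visits $V_i$, so $\pi$ enters $V_i$ infinitely often. Since the decomposition requires that $V_i$ has no edges to $W_i\setminus V_i$, and the suffix never leaves $W_i$, once $\pi$ enters $V_i$ it remains in $V_i$ forever. The resulting tail of $\pi$ is an infinite path inside the sub-arena $V_i$ to which the $(c_i,C\setminus\{c_i\})$-colourful decomposition $\Dc_i$ applies. By the induction hypothesis, this tail satisfies the inner Rabin condition, which furnishes a colour $c\in\{c_i\}\cup(C\setminus\{c_i\})=C$ such that $c\in G_v$ for some $v$ visited infinitely often in the tail and $c\notin B_v$ for every $v$ visited infinitely often; the case $c=c_i$ is consistent because the decomposition guarantees $c_i\notin B_v$ for all $v\in V_i$. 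Since the vertices visited infinitely often in the tail are exactly those visited infinitely often in $\pi$, the same colour $c$ witnesses the Rabin condition in~$\Gc$.

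The main conceptual step, and the one to state carefully, is the ``trapping'' argument in the $i>0$ case: showing that $\pi$ is eventually confined to $V_i$ requires combining (i)~minimality of $i$ to confine the suffix to $W_i$ with (ii)~the closure condition on $V_i$ inside $W_i$. Everything else is routine bookkeeping about attractors and a straightforward transfer of the inductive witness colour.
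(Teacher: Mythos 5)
Your proposal is correct and follows essentially the same approach as the paper's proof: induct on $|C|$, pick the minimal index $i$ with $\inf\{\pi\}\cap A_i\neq\emptyset$ (setting $A_0:=A$), handle $i=0$ directly with witness colour $c_0$, and for $i>0$ argue that $\pi$ is eventually trapped in $V_i$ and invoke the inductive hypothesis on $\Dc_i$. Your version is somewhat more explicitly argued (especially the confinement step and the transfer of the witness colour, noting $c_i\notin B_v$ on $V_i$), but the underlying structure is identical.
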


\begin{proof}[Proof of Lemma~\ref{lemma:measureeasydecomp}]\label{prf:measureeasydecomp}
    If $\Dc = \seq{V}$, then notice that all infinite paths from all vertices in $V$, visit a vertex $v$ such that $c_0\in G_v$, and moreover, by assumption that $\Dc$ is a $(c_0,C)$-colourful decomposition, and hence $c_0\notin B_v$ for any $v\in V$.
    This also means that all infinite paths in $V$ satisfy Rabin condition for colour $c_0$.

    If $\Dc = \seq{A, \tpl{c_{1},V_1, \Dc_1,A_1},  \tpl{c_{2},V_2, \Dc_2,A_2}, \dots, \tpl{c_{j},V_j,\Dc_j,A_j}}$, any play $\pi$ such that $\inf\{\pi\}\subseteq V_i$, then we know that $\pi$ satisfies the Rabin condition by induction on the size of the decomposition. 
    Indeed, no vertex in $V_i$ is such that $c_i\in B_v$, and $V_i$ has a $(c_i,C\setminus \{c_i\})$-colourful decomposition $\Dc_i$.
    
     Observe that any path with $\inf\{\pi\}\cap A\neq \emptyset$ satisfies the Rabin condition with colour $c_0$ since $c_0\notin B_v$, and all infinite paths from $A$ is such that there is some vertex $v$ with $c\in G_v$. 
     If on the other hand there are values $i_1\neq i_2$ such that $\inf\{\pi\}\cap A_{i_1}\neq \emptyset$ and $\inf\{\pi\}\cap A_{i_2}\neq \emptyset$, then we claim that we also must have $\inf\{\pi\}\cap A\neq \emptyset$, and hence this path satisfies  the Rabin condition.      
The previous statement can be shown with a simple argument about the structure of the decomposition.  We declare $A_0$ and $V_0$ to both denote the set $A$ and $W_{i}$ defined as in the definition of a colourful decomposition. Let $i\in \{0,1,2,\dots, j\}$ be the smallest value such that $\inf\{\pi\}\cap A_{i}\neq \emptyset$. If $i=0$, we have proved our above claim. If $i>0$, then after some prefix, no vertices from $A_\ell$ occur in $\pi$ for $\ell<i$. But we know that in $W_i$, there are no paths to vertices in $W_i\setminus V_i$, and therefore, after some finite point of the infinite path, we must have $\inf\{\pi\}\subseteq V_i$. 
\end{proof}

  \section{Appendix for Section~\ref{sec:unilift}}
  
 \minoftwo*
\begin{proof}[Proof of~\cref{lemma:minoftwo}]\label{prf:minoftwo}
We show that for two measures $\mu_1\preceq \mu_2$, 
 that $\biglift_v(\mu_1)\preceq \biglift_v(\mu_2)$. Note that it suffices to show that for \Controller's (resp. \Environment's) vertices $v$, the value $t_1 = \min_{(v,w) \in E}\{\lift_{\mu_1}(v,w)\}$ is at most as large as  $t_2 = \min_{(v,w) \in E}\{\lift_{\mu_2}(v,w)\}$ (using $\max{}$ for \Environment~instead). We instead argue that $\mu_1'$, defined as $\mu_1[v := t_2]$  ensures that the vertex $v$ is consistent. Recall that a \Controller~vertex is consistent if it has one consistent outgoing edge and an \Environment~vertex is consistent if all its outgoing edges are consistent. Since by definition $t_1$ is the smallest element larger than $\mu_1(v)$ that makes some (resp. all) $v \rightarrow w $ consistent in $\mu_1[v := t_1]$, this gives us $t_1 \preceq t_2$. 
Let $v\rightarrow w$ be an outgoing edge of $v$ that is consistent in $\mu_2' = \biglift_v(\mu_2)$. We claim that it is consistent in $\mu_1'$ as well.
    \begin{itemize}
        \item If $v \rightarrow w$ satisfied \Gone~with respect to $\mu_2'$, then it continues to satisfy \Gone~with respect to $\mu_1'$, since $\mu_1'(v) = \mu_2'(v) \succ \mu_2(w)\succeq
        \mu_1'(w)$. 
        \item  If $v \rightarrow w$ satisfied \Gtwo~with respect to $\mu_2'$, then it either continues to satisfy \Gtwo, or satisfies \Gone~with respect to $\mu_2'$.
        To see this, we observe that $\mu_2'(v)$ is an ancestor of  $\mu_2'(w) = \mu_2(w)$, and $\mu_2'(w)\succeq \mu_1'(w)$. From Proposition~\ref{prop:irmaksprop}, we consequently get either $\mu_1'(v) \succ \mu_1'(w)$ or $\mu_1(v)$ is an ancestor of $\mu_1'(w)$, which is exactly \Gone~or \Gtwo~respectively. Additionally $\colouring(\mu_1'(v)) \in G_v$ is trivially satisfied since $\mu_1'(v) = \mu_2'(v)$ and $ \colouring(\mu_2'(v)) \in G_v$.    
        \item If $v \rightarrow w$ satisfied \Rrr~with respect to $\mu_2'$, then it continues to satisfy \Rrr~with respect to $\mu_1'$, since $\mu_2'(v) = \mu_1'(v)$ and $\accumulatedcolourset(\mu_1'(v))\cap B_v = \accumulatedcolourset(\mu_2'(v))\cap B_v = \emptyset$.
    \end{itemize}
\end{proof}

\bigLift*


\begin{proof}[Proof of~\cref{lemma:bigLift}]\label{prf:bigLift}

To prove the above Lemma, we first 
answer the following question: given an edge $u\rightarrow v$ and a mapping $\mu$ to $\Lc^\top$, can we calculate $\lift_\mu(u,v)$? 

We show how using the following subroutines mentioned: (1) computing the next node: we denote the successor of $t$ in $\Lc^\top$ with respect to the order $\prec$ by $\mynext(t)$
(2)~given $t\in \Lc$ and $C'\subseteq C$ such that  $\colouring(t) \in C'$, finding the next node whose colour set contains colours only from $C'\cup \{\bot\}$.

A naive way to compute $\lift_\mu{(u,v)}$ would be to apply $\mynext$ to $\mu(u)$ and to check each time if the edge $u \rightarrow v$ satisfies the consistency properties. But such an algorithm would potentially take exponential time for computing some lift functions.
We remark however that this n\"{a}ive algorithm would only add a polynomial factor to the upper bound to the worst case complexity of our run-time after amortisation. 

We will now give the function which directly computes $\lift_\mu(u,v)$ using only two primitives stated above after a linear scan: (a) finding the next colour in the tree;  (b) given a $C'\subseteq C$ and a $t$ such that $\colouring(t)\in C'$, finding the next node $t'$ of $t$ which satisfies  $\accumulatedcolourset(t') \subseteq C'\cup \{\bot\}$.

The details of the computation can be inferred from the procedure described below, as we need only finitely many linear passes on a node's description. 

\paragraph*{Edge $u \rightarrow v$ is already consistent} In this case,  $u \rightarrow v$ already satisfies at least one of \Gone~or \Gtwo~along with (R) in $\mu$. Hence  $\lift_\mu(u,v)$ is set to $\mu(u)$, continues to make $u \rightarrow v$ consistent.

\paragraph*{Edge  $u \rightarrow v$ satisfies \Gone~but not \Rrr}
In this case, we only need to find the smallest value larger than $\mu(u)$ whose colour set
does not contain any colours from $B_u$. 
Let $\mu(u) \:=(x_1\cdot c_{i_1},\dots,x_m\cdot c_{i_m})$. 
We achieve this by finding the largest position $s$ that gives $\accumulatedcolourset((x_1\cdot c_{i_1},\dots,x_s\cdot c_{i_s})) \cap B_u = \emptyset$.
Then we compute the smallest child $t$ larger than the node above, $t = (x_1\cdot c_{i_1},\dots,x_{s+1}\cdot c_{i_{s+1}})$ that gives $\accumulatedcolourset(t)\cap B_u = \emptyset$ and set $\lift_\mu(u,v)$ to $t$. The computation clearly takes time at most $T_\mynext$.

Since $\mu(u) \succ \mu(v)$, $\lift_\mu(u,v) \succ \mu(u)$ and $\lift_\mu(u,v)$ doesn't use any colours from $B_v$, the edge $u \rightarrow v$ satisfies \Gone~and \Rrr~in the new mapping.
\paragraph*{Edge $u \rightarrow v$ satisfies \Gtwo~but not \Gone~or \Rrr}
We again take $\mu(u) \:=(x_1\cdot c_{i_1},\dots,x_m\cdot c_{i_m})$. Since $u \rightarrow v$ satisfies \Gtwo, we know that $\mu(u)$ is an ancestor of $\mu(v)$. We argue that the smallest value larger than $\mu(u)$ that also satisfies \Rrr~does not satisfy \Gtwo, but rather satisfies \Gone. This is because there is an ancestor of $\mu(u)$ (and thus, of $\mu(v)$) that is coloured by a bad colour of $u$. Since $\lift_\mu(u,v)$ must be larger than $\mu(u)$, it cannot be set to an ancestor of $\mu(u)$. Then, it should be set to a larger sibling of one of the ancestors of $\mu(u)$. Since any larger sibling of an ancestor of $\mu(v)$ is always larger than $\mu(v)$, the smallest value of $\lift_\mu(u,v)$ that makes $u \rightarrow v$ consistent satisfies \Gone. We have therefore reduced this case to the previous one. 

\paragraph*{Edge  $u \rightarrow v$ satisfies neither \Gone, \Gtwo~or \Rrr}
Since the edge does not satisfy \Gone, we know $\mu(u)\preceq \mu(v)$. We go through the ancestors of $\mu(v)$ one by one in increasing order to see if there exists one that is both strictly larger than $\mu(u)$, and satisfies \Rrr. 
If there exists one, then we set $\lift_\mu(u,v)$ to the first such value found, and $u \rightarrow v$ satisfies \Gtwo~and \Rrr~in the new mapping. 
This computation takes a linear scan through at most the length of $\mu(v)$. If none of the ancestors satisfy these constraints, then we know that $\lift_\mu(u,v)$ has to be at least as large as  $\mynext(\mu(v))$. Thus $u \rightarrow v$ has to satisfy \Gone~and \Rrr~in the next mapping. Once more, we have reduced this case to the previous ones. 

We have concluded that computing $\lift_{\mu}(v,w)$ takes time at most $O\left( T_{\mynext}\right)$.
Recall the definition of $\biglift_v$ by using $\lift_{\mu}(v,w)$ as a subroutine.
    $$\biglift_v(\mu)(u) = \begin{cases}
        \mu(u)  & \text{for }u\neq v\\
        \min_{(v,w)\in E}\left\{\lift_\mu(v,w)\right\} & \text{if } u=v\in V_c\\
        \max_{(v,w)\in E}\left\{\lift_\mu(v,w)\right\} & \text{if } u=v\in V_e
\end{cases}$$
 It is therefore easy to conclude that $\biglift_v(\mu)(v)$ takes time at most $O\left(\deg(v) \cdot T_{\mynext}\right)$

\end{proof}



  \section{Appendix for Section~\ref{sec:colourfulTrees}}
  \begin{restatable}{proposition}{colourfuluniversal}\label{prop:colourfuluniversal}
    The $(c_0,C)$-colourful tree $\Uc_{(c_0,C)}^\ell$, embeds any $(c_0,C)$-colourful tree with at most $n$ leaves where $\ell \geq \lceil \log n\rceil$.
\end{restatable}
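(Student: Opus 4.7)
The plan is to proceed by lexicographic induction on the pair $(|C|, \ell)$, mirroring the recursive structure of $\Uc^\ell_{(c_0,C)}$ itself. The base cases will handle $C = \emptyset$ (any $\ell$) and $\ell = 0$ with $C \neq \emptyset$, and the inductive step will exploit the three-fold concatenation $\Uc^{\ell-1}_{(c_0,C)} \cdot (\text{middle}) \cdot \Uc^{\ell-1}_{(c_0,C)}$ by splitting the children of $\Tc$ at a carefully chosen pivot.

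For the base case $C = \emptyset$, any $(c_0, \emptyset)$-colourful tree $\Tc$ is either $(c_0, \seq{})$ or a root with at most $n \leq 2^\ell$ dummy children. Since $\Uc^\ell_{(c_0,\emptyset)}$ has exactly $2^\ell$ dummy children, an embedding is immediate by choosing any increasing sequence of indices. For the sub-case $\ell = 0$ with $C \neq \emptyset$, the tree $\Tc$ has at most one leaf; if it is trivial the embedding is vacuous, otherwise its unique child $\Tc_1$ is either a dummy leaf (embed into the trailing dummy child of $\Uc^0_{(c_0,C)}$) or a $(c_i, C_i)$-colourful tree, which embeds into $\Uc^0_{(c_i, C_i)}$ by the induction hypothesis with strictly smaller $|C|$.

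For the inductive step with $\ell \geq 1$ and $C \neq \emptyset$, write $\Tc = \tpl{c_0, \seq{\Tc_1, \ldots, \Tc_m}}$ and let $L_i$ denote the number of leaves of $\Tc_i$, so $\sum_i L_i \leq 2^\ell$. I will choose the smallest pivot index $p$ for which $\sum_{i \leq p} L_i \geq \lceil \sum_i L_i / 2 \rceil$; a short arithmetic check gives $\sum_{i < p} L_i \leq 2^{\ell-1} - 1$ and $\sum_{i > p} L_i \leq 2^{\ell-1}$. The left block $\tpl{c_0, \seq{\Tc_1, \ldots, \Tc_{p-1}}}$ embeds into the left factor $\Uc^{\ell-1}_{(c_0,C)}$ by $\ell$-induction, and symmetrically for the right block into the right factor. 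The pivot $\Tc_p$ embeds into the middle part: if $\Tc_p = (\bot, \seq{})$ use the dummy child, and otherwise $\Tc_p$ is a $(c_i, C_i)$-colourful tree with at most $2^\ell$ leaves, which embeds into $\Uc^\ell_{(c_i, C_i)}$ by the induction hypothesis with strictly smaller $|C|$. Concatenating the three embeddings via the shifted indices on the children of $\Uc^\ell_{(c_0,C)}$ yields the required embedding of $\Tc$.

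The main obstacle will be the pivot handling: the pivot subtree $\Tc_p$ may itself carry up to $2^\ell$ leaves, which is too many to accommodate in an $\Uc^{\ell-1}$ factor. The key observation justifying the construction is that $\Tc_p$'s root is coloured by some $c_i \in C$ and hence $\Tc_p$ is a $(c_i, C \setminus \{c_i\})$-colourful tree, so the $|C|$-induction becomes available at the same $\ell$ and absorbs the pivot exactly. A secondary subtlety is to verify that appending the subtree indices inside a concatenation of trees preserves the strict inequality required by the embedding definition; this follows from the fact that the increasing sequences of indices in the left, middle, and right factors can be offset to form a single strictly increasing sequence in the concatenation $\Uc^{\ell-1}_{(c_0,C)} \cdot (\text{middle}) \cdot \Uc^{\ell-1}_{(c_0,C)}$.
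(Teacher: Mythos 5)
Your proof is correct and follows essentially the same strategy as the paper: induct on $(|C|,\ell)$, split the children of $\Tc$ at a pivot so that both sides carry at most $n/2$ leaves (and hence embed in the two $\Uc^{\ell-1}_{(c_0,C)}$ copies), and absorb the pivot subtree into the middle $\Uc^\ell_{(c_i,C_i)}$ term, which is available since the pivot's root colour $c_i$ is excised from $C$.
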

\begin{proof}[Proof of Proposition~\ref{prop:colourfuluniversal}]\label{prf:colourfuluniversal}
    Consider any $(c_0,C)$-colourful tree $\Tc$ with $n$ leaves. 
    The statement is trivial if $C = \emptyset$, since from our construction, our universal tree is such that all $n$ leaves have colour $\bot$.
    We assume $C\neq \emptyset$ but $\ell = 0$, and therefore $n = 1$. 
    Let $C = \{c_1,\dots,c_h\}$. In this case, we have 
    $$\Uc^0_{(c_0,C)} \:= \seq{\Uc_{(c_1,C_1)}^0,\dots,\Uc_{(c_h,C_h)}^0,\left(\bot,\seq{}\right)}$$
    We must either have $\Tc = \tpl{c_0,\seq{\Tc_i}}$ for some $(c_i, C\setminus\{c_i\})$-colourful tree $\Tc_i$ or alternatively, $\Tc = \seq{\left(\bot,\seq{}\right)}$, and clearly from the construction, it follows that this tree can be embedded in $\Uc_{c_0,C}^0$, recursively, by choosing an appropriate subtree $\left(c_i,\Uc_{(c_i,C_i)}^0\right)$, and recursively embedding $\Tc_i$ in $\Uc_{(c_i,C_i)}^0$.

    If we consider the case where  
    $n>1$ (and therefore $\ell>0$), and suppose our tree with $n$ leaves is $\Tc = \tpl{c_0,\seq{\Tc_1,\dots,\Tc_m}}$.
    Let $n_p$ represent the number of leaves of $\Tc_{p}$. We know $\sum n_p = n$. For each $p$, we define $$\Tc_{<p} = \seq{\Tc_1,\dots,\Tc_{p-1}}$$ and $$\Tc_{>p} = \seq{\Tc_{p+1},\dots,\Tc_m}.$$  
    
    There must be at least one $p$ for which $\Tc_{<p}$ as well as $\Tc_{>p}$ has size at most $n/2$. The existence of such a $p$ can be shown by defining a summation $N_j = \sum_{i=1}^j n_i$ which ranges from $0$ to $n$ as $j$ ranges from $1$ to $m$. Then there must be some point where $N_j$ exceeds $n/2$, giving us our desired $p$.
    
     Since both $\Tc_{<p}$ and $\Tc_{>p}$ have at most $n/2$ leaves, by induction $\Uc^{\ell-1}_{(c_0, C)}$ embeds $\Tc_{<p}$ as well as $\Tc_{>p}$, since $C$ contains all the colours in $\Tc_{<p}$ and $\Tc_{>p}$ and each tree has less than $n/2$ leaves. Furthermore,
 $\Uc_{(c_{i_p},C_{i_p})}^\ell$ embeds $\Tc_p$, where $c_{i_p}$ is the colour of the root of $\Tc_p$ and $C_{i_p} = C\setminus \{c_{i_p}\}$. Observe that for each $c_i$, there is a copy of the tree of $\Uc_{(c_i,C_i)}^\ell$, where $C_i = C\setminus \{c_i\}$.
Hence from the construction of $\Uc_{(c_0,C)}^\ell$, the tree $$\Tc = \Tc_{<p}\cdot\tpl{c_0,\seq{\Tc_p}}\cdot \Tc_{>p}$$ can be embedded into 
     $$\Uc^\ell_{(c_0,C)} \:= \Uc_{(c_0,C)}^{\ell-1}\cdot
     \tpl{c_0,\seq{\Uc_{(c_1,C_1)}^\ell,\dots,\Uc_{(c_h,C_h)}^\ell, \tpl{\bot,\seq{}}}}
     \cdot\Uc_{(c_0,C)}^{\ell-1}.$$ 
\end{proof}

\begin{restatable}{lemma}{smallcolourful}
~\label{lemma:smallcolourful}
    The tree $\Uc_{(c_0,C)}^\ell$ has at most $2^k\cdot k!\cdot 4^\ell$ many leaves where  $k = |C\cup \{c_0\}|$.
\end{restatable}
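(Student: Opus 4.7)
My plan is to unfold the recursive construction into an explicit recurrence for the leaf count, and then prove the bound by induction, strengthening the inductive hypothesis slightly to absorb a stray $+1$.

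Let $f(\ell, k)$ denote the number of leaves of $\Uc_{(c_0,C)}^\ell$ when $|C \cup \{c_0\}| = k$. Reading off the three clauses of the construction gives
\[ f(\ell, 1) = 2^\ell, \qquad f(0, k) = (k-1)\, f(0, k-1) + 1, \]
and for $k \geq 2, \ell \geq 1$,
\[ f(\ell, k) = 2\, f(\ell-1, k) + (k-1)\, f(\ell, k-1) + 1, \]
where the two summands in front come from the two side copies $\Uc_{(c_0,C)}^{\ell-1}$, the $(k-1)$ term from the $h = k-1$ inner subtrees $\Uc_{(c_i,C_i)}^\ell$, and the $+1$ from the single extra $(\bot,\seq{})$ leaf in the middle block.

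I would prove by induction on $\ell + k$ the slightly strengthened inequality
\[ f(\ell, k) \;\leq\; 2^k \cdot k! \cdot 4^\ell - 1. \]
The base case $k=1$ reduces to $2^\ell \leq 2 \cdot 4^\ell - 1$, which is immediate. The base case $\ell = 0$ follows from the IH by bounding $(k-1)(k-1)! \leq k!$ inside the recursion. In the inductive step, applying the IH termwise gives
\[ f(\ell, k) \;\leq\; 2^{k+1} k!\, 4^{\ell-1} - 2 + 2^{k-1}(k-1)(k-1)!\, 4^\ell - (k-1) + 1, \]
which, using $(k-1)(k-1)! \leq k!$ and rearranging, collapses to $2^k k!\, 4^\ell - k$; since $k \geq 1$, this is at most $2^k k! \, 4^\ell - 1$, closing the induction. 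Dropping the $-1$ yields the stated bound.

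The main obstacle is bookkeeping: a straightforward induction with hypothesis $f(\ell, k) \leq 2^k k! \, 4^\ell$ leaves one with $f(\ell, k) \leq 2^k k! \, 4^\ell + 1$, off by exactly the trailing $+1$ coming from the extra $\bot$-leaf. Introducing the $-1$ in the IH provides a slack of $2 + (k-1) = k+1 \geq 2$ from the two recursive calls, which more than compensates for the $+1$ and leaves the residual $-k$ calculated above. All other manipulations are routine, since the two factors $\frac{1}{2} \cdot 2^k k! \, 4^\ell$ coming from $2 \cdot 4^{\ell-1}$ and from $(k-1)(k-1)!/k!$ are each exactly half of the target.
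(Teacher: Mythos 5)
Your proof is correct and takes essentially the same approach as the paper: set up the recurrence $f(\ell,k) = 2f(\ell-1,k)+(k-1)f(\ell,k-1)+1$ from the three construction clauses and bound it by induction, introducing extra slack in the inductive hypothesis to absorb the trailing $+1$. The paper achieves the slack by proving the tighter intermediate bound $4^\ell\cdot 2^h\cdot h!\,h$ (with $h=k-1$) and then relaxing to $2^k k! 4^\ell$, whereas you subtract $1$ from the target; both are the same idea in slightly different clothing.
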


\begin{proof}[Proof of Lemma~\ref{lemma:smallcolourful}]\label{prf:smallcolourful}
Let us denote by $U(\ell,h)$, the  number of leaves in the tree $\Uc_{(c_0,C)}^\ell$ defined above, where $|C| = h$. 

If $k = 1$, $h= k-1 = 0$ then $U(\ell,h) = 2^\ell$ by construction.

If $\ell = 0$, then we show by induction a stronger statement that $U(0,h)\leq h!h$ for all values of $h\geq 1$.
Indeed, 
 $$\Uc^0_{(c_0,C)} \:= \seq{\Uc_{(c_1,C_1)}^0,\dots,\Uc_{(c_h,C_h)}^0,\left(\bot,\seq{}\right)}$$
From this we can infer that 
$$U(0,h) \leq (k-1)\cdot U(0,k-1) + 1$$
Since we already know $U(0,1) = 1$, inductively, we can show that 
$$U(0, h)\leq h\cdot U(0,h-1) + 1\leq h\cdot\left((h-1)!\right)+1\leq h!h$$

For $\ell,h > 0$, recall that  
 $$\Uc^\ell_{(c_0,C)} \:= \Uc_{(c_0,C)}^{\ell-1}\cdot
     \tpl{c_0,\seq{\Uc_{(c_1,C_1)}^\ell,\dots,\Uc_{(c_h,C_h)}^\ell, \tpl{\bot,\seq{}}}}
     \cdot\Uc_{(c_0,C)}^{\ell-1}.$$ 
Therefore, we see that for $\ell,h>0$, the following recurrence relation holds
$$U(\ell,h) = 2\cdot U(\ell-1,h) + h\cdot U(\ell,h-1) + 1$$

We prove $U(\ell,h)\leq 4^\ell \cdot hh!\cdot 2^h$, by induction.

For the base case, we can see $U(0,h)$ and $U(\ell,0)$, the inequality holds.
We assume for $t<\ell$ and $j<h$, that $U(t,j)\leq 4^t \cdot  2^j\cdot jj!$ as our induction hypothesis.
For this $\ell$ and $h$, observe
\begin{align*}
U(\ell,h) & =  2U(\ell-1,h) + hU(\ell,h-1) + 1&\\
     & \leq 2\cdot\left( (4)^{\ell-1}\cdot 2^h \cdot (h)!h\right)+ h\cdot \left(4^\ell\cdot 2^{h-1}\cdot h!\right) + 1&\\
      & \leq \frac{1}{2}\left( 4^\ell\cdot 2^h \cdot h!h\right) + h\left(4^\ell\cdot 2^{h-1}\cdot (h-1)!(h-1)\right) + \left(4^\ell\cdot 2^{h-1}\cdot (h)!\right)&\\
      & =  \frac{1}{2}\left( 4^\ell\cdot 2^{h-1} \cdot h!h\right) + \left(4^\ell\cdot 2^{h-1}\cdot h!h\right) &\\
     & =  \frac{1}{2}\left( 4^\ell\cdot 2^{h-1} \cdot h!h\right) + \frac{1}{2} \cdot \left(4^\ell\cdot 2^{h}\cdot h!h\right)&\\
     & =  4^\ell\cdot 2^h \cdot (h)!h
\end{align*}
Since $h=k-1$, our claim follows.
\end{proof}


\begin{restatable}{lemma}{smallcolourfulcombibound}\label{lemma:smallcolourfulcombibound}
    The tree $\Uc_{(c_0,C)}^\ell$ has size at most ${\ell +k\choose k-1}\cdot2^\ell\cdot k!$, where $k = |C\cup \{c_0\}|$.
\end{restatable}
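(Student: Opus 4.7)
The plan is to set up a recurrence for $U(\ell, k) := $ the number of leaves of $\Uc_{(c_0,C)}^\ell$ with $k := |C \cup \{c_0\}|$, and then prove the bound $U(\ell, k) \leq \binom{\ell+k}{k-1} \cdot 2^\ell \cdot k!$ by induction on $\ell + k$. Reading off the three-case construction directly, the recurrence also splits into three regimes: $U(\ell, 1) = 2^\ell$ when $C = \emptyset$; $U(0, k) = (k-1)\,U(0, k-1) + 1$ when $\ell = 0$ and $k \geq 2$ (the $\bot$-leaf contributes the $+1$ and each of the $h = k-1$ children $\Uc_{(c_i,C_i)}^0$ contributes $U(0, k-1)$ leaves); and $U(\ell, k) = 2\,U(\ell-1, k) + (k-1)\,U(\ell, k-1) + 1$ when $\ell \geq 1$ and $k \geq 2$, one term per block in the concatenation.

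For the base cases, $k=1$ matches the bound exactly since $\binom{\ell+1}{0} \cdot 2^\ell \cdot 1! = 2^\ell$. For $\ell = 0$ I would prove by a short induction on $k$ that $U(0, k) \leq k \cdot k!$, which coincides with the target $\binom{k}{k-1} \cdot 2^0 \cdot k! = k \cdot k!$; the step uses $U(0,k) \leq (k-1)\cdot (k-1)(k-1)! + 1 \leq (k-1)\cdot k! + 1 \leq k\cdot k!$ for $k\ge 2$.

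For the inductive step with $\ell \geq 1$ and $k \geq 2$, the key idea is to invoke Pascal's identity $\binom{\ell+k}{k-1} = \binom{\ell+k-1}{k-1} + \binom{\ell+k-1}{k-2}$ and match each recursive term to one summand on the right. The inductive hypothesis applied to $U(\ell-1,k)$ gives $2\,U(\ell-1, k) \leq 2^\ell\, k!\, \binom{\ell+k-1}{k-1}$, which fills the first summand exactly. For the second summand, the inductive hypothesis applied to $U(\ell,k-1)$ together with the identity $(k-1)(k-1)! = k! - (k-1)!$ gives
\begin{align*}
(k-1)\,U(\ell, k-1) + 1
&\le 2^\ell\,(k-1)(k-1)!\,\binom{\ell+k-1}{k-2} + 1 \\
&= 2^\ell\,k!\,\binom{\ell+k-1}{k-2} \;-\; 2^\ell\,(k-1)!\,\binom{\ell+k-1}{k-2} + 1,
\end{align*}
and the negative term absorbs the $+1$ whenever $2^\ell\,(k-1)!\,\binom{\ell+k-1}{k-2} \geq 1$, which holds trivially for $\ell \geq 0$ and $k \geq 2$. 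Adding the two estimates reproduces the target $2^\ell\, k!\, \binom{\ell+k}{k-1}$.

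The main subtlety is tracking exactly how the $+1$ in the recurrence is absorbed: the bound leaves only the slack $(k-1)!$ between $(k-1)(k-1)!$ and $k!$, so one has to verify that this slack, multiplied by $2^\ell\binom{\ell+k-1}{k-2}$, really dominates $1$ across all the cases (in particular the tightest one, $\ell = 0$, $k = 2$). Once this bookkeeping is in place, the remainder is a clean two-variable induction directly mirroring the succinct-tree size analysis of Jurdzi\'nski and Lazi\'c~\cite{JL17} for parity universal trees.
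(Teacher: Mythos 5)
Your proof is correct and follows the same basic strategy as the paper's: derive the three-regime recurrence from the construction, then do a two-variable induction with Pascal's identity $\binom{\ell+k}{k-1} = \binom{\ell+k-1}{k-1} + \binom{\ell+k-1}{k-2}$ as the combinatorial engine. Where you differ is in the bookkeeping, and I think your version is actually cleaner. The paper, working in $h = k-1$, effectively tries to establish the tighter invariant $U(\ell,h) \le 2^\ell\, h!\,h\,\binom{\ell+h+1}{h}$ and only relaxes to $k!\binom{\ell+k}{k-1}$ at the very end; but that tighter invariant is not maintainable --- it evaluates to $0$ at $h=0$ (while $U(\ell,0)=2^\ell$), and even with the corrected base value plugged in, the $+1$ term overflows it at $(\ell,h)=(1,1)$ (one gets $U(1,1)=7$ against a claimed bound of $2^1\cdot 1!\cdot 1\cdot\binom{3}{1}=6$). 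Relatedly, the paper's $\ell=0$ base bound ``$U(0,h)\le h!\,h$'' is off for $h\in\{1,2\}$ since $U(0,1)=2$ and $U(0,2)=5$. Your invariant $U(\ell,k)\le 2^\ell k!\binom{\ell+k}{k-1}$ avoids both pitfalls: it agrees exactly with the $k=1$ base case $2^\ell$, your $\ell=0$ bound $k\cdot k!$ is correct, and your absorption of the $+1$ through $(k-1)(k-1)! = k!-(k-1)!$ is a more principled mechanism than the paper's replacement of $1$ by the generous overestimate $2^\ell h!\binom{\ell+h}{h-1}$. In short: same route, but your accounting is watertight where the paper's intermediate claims are not, even though the final stated bound is of course true in both.
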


\begin{proof}[Proof of Lemma~\ref{lemma:smallcolourfulcombibound}]\label{prf:smallcolourfulcombibound}
Let us again denote by $U(\ell,h)$, the number of leaves in the tree $\Uc_{(c_0,C)}^\ell$, where $|C| = h = k-1$. 

If $h = 0$, then $U(\ell,h) = 2^\ell$ by construction, and therefore we have $U(\ell,0) = {\ell +h+ 1\choose h} h! 2^\ell$.

If $\ell = 0$, recall from the proof of Lemma~\ref{lemma:smallcolourful} we show that $U(0,h)\leq h!h$ for all values of $h\geq 0$.

Now, suppose $\ell,h>0$, then we have 
\begin{align*}
U(\ell,h) & =  2U(\ell-1,h) + hU(\ell,h-1) + 1&\\
     & \leq 2\cdot\left( (2)^{\ell-1}\cdot {\ell +h\choose h} (h)!h\right)+ h\cdot \left(2^\ell\cdot {\ell +h\choose h-1} (h-1)!(h-1)\right) + 1&\\
      & \leq \left( 2^\ell\cdot{\ell +h\choose h} (h)!h\right) + \left(2^\ell\cdot {\ell +h\choose h-1} (h)!(h-1)\right) + \left(2^\ell\cdot {\ell +h\choose h-1} (h)!\right)&\\
      & \leq   2^\ell\cdot h!h\left({\ell +h\choose h} + {\ell +h\choose h-1} \right) &\\
     & =   2^\ell\cdot h!h {\ell +h + 1\choose h} \leq 2^\ell\cdot k! {\ell +k\choose k-1}
\end{align*}
\end{proof}


\succinctcolourfullabeling* 

\begin{proof}\label{prf:succinctcolourfullabeling}
We use two different operators in the below construction to obtain a $\words$-labelling of $\Uc_{(c_0,C)}^\ell$: 
\begin{itemize}
    \item for $b \in \{0,1\}$ and $\omega_1\dots,\omega_m\in \words\cdot C$, we define $$b\cdot \Lc \: = \{(b\cdot \omega_1,\omega_2,\dots, \omega_m)\mid (\omega_1,\omega_2,\dots, \omega_m)\in \Lc\}$$ and
    \item for $\alpha$ and $\omega_1\dots,\omega_m\in \words\cdot C$, we define $$\alpha \odot \Lc \: = \{(\alpha,\omega_1,\omega_2,\dots \omega_m)\mid (\omega_1,\omega_2,\dots, \omega_m)\in \Lc\}$$
\end{itemize} 
Consider the $(c_0,C)$-colourful $2^\ell$-universal tree $\Uc_{(c_0,C)}^\ell$. 
\begin{itemize}
    \item if $\ell = 0$ and $C = \emptyset$, then clearly, $\Lc_C^\ell$, defined as $\tpl{}$ uses $0$ bits to label each node in the tree. 
    \item if $\ell = 0$ and $C\neq \emptyset$,  
    then we define $\Lc_C^\ell$ to be the prefix closure of
    $$\bigcup_i \tpl{(\varepsilon\cdot c_{i})\odot \Lc_{C_i}^0}$$ 
    where each $\Lc_{C_i}^0$ is the recursively obtained labelling for $\Uc_{(c_i,C_i)}^0$. Observe that
    no extra bits are used in addition to the bits used by each $\Lc^0_{C_i}$. Since each $\Lc^0_{C_i}$ uses $0$ bits to label their nodes, $\Lc_C^\ell$ also uses $0$ bits to label each node in the tree. 
    \item if $\ell > 0$ and $C\neq \emptyset$ and recall that $$\Uc^\ell_{(c_0,C)} \:=
    \Uc_{(c_0,C)}^{\ell-1}\cdot
\tpl{c_0,\seq{\Uc_{(c_1,C_1)}^\ell,\dots,\Uc_{(c_h,C_h)}^\ell,\left(\bot,\seq{}\right)}}
\cdot\Uc_{(c_0,C)}^{\ell-1}.$$
Let $\Lc_C^\ell$ defined as follows be a labelling of $\Uc_{(c_0,C)}^\ell$, defined as the prefix-closure of 
$$0\cdot\Lc_C^{\ell-1} \,\cup \,\bigcup_i\,(\varepsilon\cdot c_i)\odot \Lc_{C_i}^\ell ,\cup \, (\varepsilon\cdot \bot)\,\cup\,\,1\cdot\Lc_C^{\ell-1}$$
where $\Lc_C^{\ell-1}$ and $\Lc_{C_i}^\ell$ are labellings of $\Uc_{(c_0,C)}^{\ell-1}$ and  $\Uc_{(c_i,C_i)}^{\ell}$ respectively, and use at most $\ell-1$ and $\ell$ bits to encode each of their nodes. Hence $\Lc_C^\ell$ as constructed uses at most $\ell$ bits to encode each node.
\end{itemize}
\end{proof}

\begin{restatable}{lemma}{nextnodeOne}~\label{lemma:nextnodeOne}
    Given a node in the $\words$-labelled $(c_0,C)$-colourful tree $\Lc_C^\ell$, with at most $2^\ell$ leaves  one can compute the next node larger than a given node in time $O(k\log(k)\ell)$, where $k = |C|$.
\end{restatable}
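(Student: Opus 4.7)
The plan is to describe a recursive procedure, which I will call $\mynext_C^\ell$, that takes a node of $\Lc_C^\ell$ encoded as in \cref{lemma:succinctcolourfullabeling} and returns its $\prec$-successor (or reports ``end of subtree''). Each such node is a sequence $((x_1,c_{i_1}),\dots,(x_m,c_{i_m}))$ with $m\le k$, the colours pairwise distinct up to a possible trailing $\bot$, and $|x_1|+\cdots+|x_m|\le \ell$; so its encoding fits in $O(\ell + k\log k)$ bits. The procedure proceeds by case analysis on the head of $t$, mirroring the recursive decomposition
\[
\Lc_C^\ell \;=\; 0\cdot\Lc_C^{\ell-1} \,\cup\, \bigcup_{c_i\in C}(\varepsilon \cdot c_i)\odot\Lc_{C_i}^\ell \,\cup\, (\varepsilon \cdot \bot) \,\cup\, 1\cdot\Lc_C^{\ell-1}.
\]

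Given $t$, $\mynext_C^\ell(t)$ identifies the block containing $t$ from its first symbol $(x_1,c_{i_1})$. If $x_1$ begins with $0$ (resp.\ with $1$), then $t$ lies in the left (resp.\ right) copy of $\Lc_C^{\ell-1}$; I strip the leading bit and recurse in $\Lc_C^{\ell-1}$, then re-prepend $0$ (resp.\ $1$) to the returned successor. If $x_1=\varepsilon$ and $c_{i_1}\in C$, then $t$ is in the middle block's $c_{i_1}$-subtree; I drop the first pair of $t$, recurse in $\Lc_{C_{i_1}}^\ell$, and prepend $(\varepsilon\cdot c_{i_1})$ to the result. If $x_1=\varepsilon$ and $c_{i_1}=\bot$, then $t$ is the terminal leaf of the middle block and the successor is the first non-empty sequence of the $1\cdot\Lc_C^{\ell-1}$ block. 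Whenever a recursive call instead signals ``end of subtree'', the algorithm jumps to the first node of the next block of $\Lc_C^\ell$: from $0\cdot\Lc_C^{\ell-1}$ to the singleton $((\varepsilon,c_1))$; from the $c_{i_1}$-subtree to $((\varepsilon,c'))$ where $c'$ is the successor of $c_{i_1}$ in $C$, or to $((\varepsilon,\bot))$ if $c_{i_1}$ was maximal; and from $1\cdot\Lc_C^{\ell-1}$ the end-of-subtree signal propagates upward. Computing the first non-empty sequence of a sub-block is a symmetric leftmost descent, producing a singleton with a binary prefix of zeros followed by the smallest colour. The base cases $C=\emptyset$ (binary increment of an $\ell$-bit string, followed by $\bot$) and $\ell=0$ (colour cycling in the fixed order $c_1<\cdots<c_h<\bot$) are handled directly, and the degenerate case where $t$ itself is the empty (root) sequence is treated as a ``start'' request whose answer is the global leftmost descent.

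For the time analysis I would observe that each recursive call strictly decreases either $\ell$ (descent into $\Lc_C^{\ell-1}$) or $|C|$ (descent into $\Lc_{C_i}^\ell$), so the recursion depth is bounded by $\ell+k$. The local work per level is $O(\log k)$, needed to read or increment a colour index in the linear order on $C$, and writing down the output takes time proportional to its length, $O(\ell + k\log k)$. Summing gives total running time $O((\ell+k)\log k) \subseteq O(k\ell\log k)$ for $k,\ell\ge 1$, matching the claim. The most delicate point in the proof will be to confirm that the inter-block transitions genuinely correspond to $\prec$-successors on $\Lc_C^\ell$; this reduces to the conventions $0<\varepsilon<1$ on binary strings, the fixed order $c_1<\cdots<c_h<\bot$ on colours, and the lexicographic extension of both to labelled sequences.
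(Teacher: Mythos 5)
Your proof is correct, but it takes a genuinely different route from the paper's. The paper's routine is bottom-up and non-recursive: it scans the succinctly-encoded node $t=(\omega_1\cdot c_{i_1},\dots,\omega_m\cdot c_{i_m})$ from the \emph{last} pair, appends a leftmost child when $c_{i_m}\neq\bot$, and otherwise replaces $\omega_m$ by the binary-successor $\nextbit{b}(\omega_m)$ with bit budget $b=\ell-\sum_{i<m}|\omega_i|$, handling overflow by a sentinel $\strtop$ that triggers the same step one level closer to the root; the recursive construction of $\Lc_C^\ell$ is used only to justify that these local string edits realise ``next child / next sibling / parent's next sibling'' in the $\prec$ order. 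Your routine instead unfolds the recursive block decomposition $0\cdot\Lc_C^{\ell-1}\cup\bigcup_i(\varepsilon\cdot c_i)\odot\Lc_{C_i}^\ell\cup(\varepsilon\cdot\bot)\cup 1\cdot\Lc_C^{\ell-1}$ top-down, dispatching on the \emph{first} pair of $t$ and threading an end-of-subtree signal upward. Both achieve the claimed $O(k\ell\log k)$ (your $O((\ell+k)\log k)$ is nominally tighter); yours makes the correctness of the inter-block transitions essentially definitional given \cref{lemma:succinctcolourfullabeling}, while the paper's gives a single forward-then-backward scan with no recursion stack. One point worth making explicit for your time bound: re-prepending the stripped head across the $O(\ell+k)$ levels must not copy the partial output on each return, or the cost degrades to $O\bigl((\ell+k)(\ell+k\log k)\bigr)$; your phrasing ``writing down the output takes time proportional to its length'' suggests you build the answer in place during descent, but this deserves to be stated.
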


\begin{proof}\label{prf:nextnodeOne}
    We first introduce, for $a\in\mathbb{N}$, a function
$\nextbit{a}$ that takes a string $\omega$ on $\{0,1\}^*$ with $|\omega|\leq a$ and calculates the smallest $\omega'$ with $|\omega'|\leq a$ that is larger than $\omega$, if it exists (with respect to the ordering on $\words$).

For example, for $a = 3$, the succinct encoding gives us the following order:
$$000<00<001<0<010<01<011<\varepsilon < 100< 10<101<1<110<111 $$

and the $\nextbit{a}$ function gives us exactly this ordering. That is for instance, $\nextbit{3}(0)=010$ and   $\nextbit{3}(011)=\varepsilon$. Additionally, for a newly introduced element $\strtop$, we set $\nextbit{a}(1^a) := \strtop$, i.e. $\nextbit{3}(111)=\strtop$.

Let $\omega \in \{0,1\}^*$ with $|\omega| \leq b$. Then $\nextbit{a}(\omega)$ is computed as follows,
\begin{itemize}
\item If $|\omega| < a$, then $\nextbit{a}(\omega) = \omega10^{a-1}$,
\item If $|\omega|=a$,
    \begin{itemize}
        \item If $\omega = \omega'01^k$ for some $\omega'$ and $k\geq 0$, then  
        $\nextbit{a}(\omega) = \omega'$,
        \item If $\omega = 1^a$, then $\nextbit{a}(\omega) = \strtop$.
    \end{itemize}
\end{itemize}

Next we define our desired function \nexxt{t} that takes a node of $\Lc_C^\ell$ and sends it to the next node that is larger than $t$, and contains colours from the set $C$. If no such node exists, it sends it to $\top$.

We apply the following rules to calculate \nexxt{t} for some node $t = (\omega_1\cdot c_{i_1},\dots,\omega_m\cdot c_{i_m})$:
\begin{itemize}
\item If $c_{i_m} \neq \bot$, then $t$ is not a leaf and therefore, \nexxt{t} is $t$'s smallest child. 
\nexxt{t}$=(\omega_1\cdot c_{i_1}, \ldots, \omega_m\cdot c_{i_m}, 0^b \cdot c)$ where $c$ is the minimum colour in $C \cup \{\bot\} \setminus \{c_{i_1}, \ldots ,c_{i_{m}}\}$ and $b = \ell - \sum_{i= 1}^{m} |\omega_{i}|$.

\item If $c_{i_m} = \bot$, then $t$ is a leaf, therefore \nexxt{t} is the smallest sibling of $t$ that is larger than itself. Therefore, 
$\mynext_C^{\ell}(t)= (\omega_1\cdot c_{i_1}, \ldots, \omega_{m-1}\cdot c_{m-1}, \nextbit{b}(\omega_{m})\cdot c)$ 
where $c$ is the minimum colour in $C \cup \{\bot\} \setminus \{c_{i_1}, \ldots ,c_{i_{m-1}}\}$ and $b = \ell - \sum_{i=1}^{m-1} |\omega_{i}|$.
\end{itemize}

Moreover, for $\omega_j= \strtop$, we have

$$ (w_1\cdot c_{i_1},\dots, w_{j-1}\cdot c_{i_{j-1}},  \strtop \cdot c_{i_j}) = 
\begin{cases}
(\omega_1\cdot c_{i_1},\dots,\omega_{j-1}\cdot c) \quad \text{ if }c_{i_{j-1}}\neq \bot,\\
(\omega_1\cdot c_{i_1},\dots,\nextbit{b}(\omega_{j-1})\cdot c') \text{ if }c_{i_{j-1}} = \bot,\\
\end{cases}$$
 Note that both of these tuples are $(j-1)-$tuples. Here, $c$ is the smallest colour larger than $c_{i_{j-1}}$ in  $C \cup \{\bot\}\setminus \{c_{i_1}, \ldots ,c_{i_{j-2}}\}$, $c'$ is the minimum colour in $C \setminus \{c_{i_1}, \ldots, c_{j-2}\}$ and $b = \ell - \sum_{i=1}^{j-2} |w_{i}|$. 
 
The value $\strtop$ is assigned to the last entry of \nexxt{t} by the application of rules presented above, only when $t$ is the largest of its siblings. In this case, we reassign \nexxt{t} to the smallest sibling of $t$'s parent that is larger than itself, as given above.
Similarly, if $t = (\strtop\cdot \bot)$, then $t = \top$, since $\Lc_C^\ell$ is out of nodes. 

We conclude this detailed computation of $\mynext_C^\ell$ with the observation that the above computation takes only time $O(k\log(k)\ell)$, linear in the length of a node stored.
\end{proof}

\begin{proposition}~\label{prop:nextnodeTwo}
    Given a node $t$ in the $\words$-labelled $(c_0,C)$-colourful tree $\Lc_C^\ell$, with at most $2^\ell$ leaves  and $K\subseteq C$ such that $\colouring(t) \in K$, the next node larger than $t$ such that $\accumulatedcolourset(t)\subseteq K\cup \set{\bot}$ can be found in time $O(k\log(k)\ell)$, where $k = |C\cup c_0|$.
\end{proposition}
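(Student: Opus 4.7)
}
The plan is to adapt the computation of $\mynext_C^\ell$ from the proof of Lemma~\ref{lemma:nextnodeOne}, replacing every occurrence of the colour set $C \cup \{\bot\}$ with $K \cup \{\bot\}$ wherever a fresh colour is being introduced into the node being constructed. Concretely, in each of the three places in the proof where we selected ``the minimum colour in $C \cup \{\bot\} \setminus \{c_{i_1}, \ldots, c_{i_{j}}\}$'' or ``the smallest colour larger than $c_{i_{j-1}}$ in $C \cup \{\bot\} \setminus \{c_{i_1}, \ldots, c_{i_{j-2}}\}$'', we instead restrict the search to $K \cup \{\bot\}$. Call this modified procedure $\mynext_{C,K}^\ell$. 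The hypothesis $\colouring(t) \in K$ together with $\accumulatedcolourset(t) \subseteq K \cup \{\bot\}$ ensures that the ``anchor'' part of $t$ that we keep when applying the rules is already $K$-valid, so the only places where a new colour enters the output are exactly the ones we have restricted.

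The main correctness claim is that $\mynext_{C,K}^\ell(t)$ returns the smallest node $t' \succ t$ in $\Lc_C^\ell$ with $\accumulatedcolourset(t') \subseteq K \cup \{\bot\}$. For soundness, we observe that every colour appearing on the path from the root to the returned node $t'$ is either inherited from $t$ (hence in $K \cup \{\bot\}$) or is freshly chosen from $K \cup \{\bot\}$, so $\accumulatedcolourset(t') \subseteq K \cup \{\bot\}$. For minimality, we argue case by case, mirroring the case analysis of Lemma~\ref{lemma:nextnodeOne}: since the underlying order $\prec$ on $\Lc$ only compares sibling subtrees by the lexicographic order on $\labelset \times (C \cup \{\bot\})$ and any node $t'' \succ t$ with $\accumulatedcolourset(t'') \subseteq K \cup \{\bot\}$ must in particular use only colours from $K \cup \{\bot\}$ on the fresh positions where $\mynext_{C,K}^\ell$ decides, the smallest such $t''$ coincides with our construction. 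The $\strtop$ propagation up to the parent is handled identically to Lemma~\ref{lemma:nextnodeOne}, noting that if at some level no $K$-valid colour is available (all colours in $K \setminus \{c_{i_1}, \ldots\}$ have been exhausted) we treat that position as overflowing and propagate $\strtop$ upward just as before.

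For the running time, the restriction to $K \cup \{\bot\}$ strictly shrinks the search space at each colour-choice step and does not add new operations. Selecting the minimum of $K \cup \{\bot\} \setminus \{c_{i_1}, \ldots, c_{i_j}\}$ takes $O(k \log k)$ time (using the fixed total order on $C$ and a sorted representation of the already-used colours), and the $\nextbit{b}$ subroutine on bit-strings of length at most $\ell$ takes time $O(\ell)$. As in Lemma~\ref{lemma:nextnodeOne}, the whole procedure performs a single linear pass over the sequence encoding $t$, whose length is $O(k \ell)$ since each of the at most $k$ positions stores at most $\ell$ bits plus one colour. Combining, we obtain the bound $O(k \log(k)\,\ell)$, matching the statement.

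The only mild obstacle is checking that the $\strtop$-propagation still yields the correct answer when several colours are unavailable in $K$: one must ensure the resulting node is $\succ t$ and not merely $K$-valid. This is handled by the same invariant as in Lemma~\ref{lemma:nextnodeOne}, namely that $\strtop$ is only produced when no strictly larger sibling with an admissible colour exists at the current level, in which case the algorithm correctly backtracks one level up and searches for the next admissible sibling there.
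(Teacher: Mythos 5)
Your proposal misreads the hypothesis of the proposition, and this leads to a genuine gap. The proposition only assumes $\colouring(t)\in K$, i.e.\ the \emph{last} colour on the root-to-$t$ path lies in $K$; it does \emph{not} assume $\accumulatedcolourset(t)\subseteq K\cup\set{\bot}$. You explicitly invoke the latter (``The hypothesis $\colouring(t)\in K$ together with $\accumulatedcolourset(t)\subseteq K\cup\set{\bot}$ ensures that the `anchor' part \dots is already $K$-valid''), which is precisely where the argument breaks. If some strict ancestor of $t$ is coloured outside $K$, your $\mynext_{C,K}^\ell(t)$ does not backtrack past $t$: since $\colouring(t)\in K\subseteq C$ the node $t$ is not a leaf, so your modified rules take the ``go to smallest child'' branch, append a fresh $K$-admissible colour, and never produce $\strtop$. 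The output is a child of $t$, its accumulated colour set still contains the offending ancestor colour, and the soundness claim fails. Restricting the colour choices inside $\mynext$ to $K\cup\set{\bot}$ cannot correct colours that were already committed to in the prefix of $t$ that you keep.

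The paper's proof handles exactly this case by \emph{first} finding, in a linear pass, the largest position $s$ such that $\accumulatedcolourset((\omega_1\cdot c_{i_1},\dots,\omega_s\cdot c_{i_s}))\subseteq K$, and \emph{then} replacing the colour at position $s+1$ with the smallest colour $c$ in $(K\cup\set{\bot})\setminus\set{c_{i_1},\dots,c_{i_s}}$ that is strictly larger than $c_{i_{s+1}}$; such a $c$ always exists because $\bot$ is available and is the largest colour, and $(\omega_1\cdot c_{i_1},\dots,\omega_s\cdot c_{i_s},\omega_{s+1}\cdot c)$ is guaranteed to be a node of $\Lc_C^\ell$ by the shape of the universal tree. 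This ``truncate to the longest $K$-valid prefix, then bump the colour at the first offending position'' step is the key missing idea in your plan. Your running-time bound $O(k\log(k)\ell)$ is fine, but it is attached to an incorrect procedure; once you add the truncation pass (another $O(k\ell)$ scan), the bound still holds and the argument aligns with the paper's.
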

\begin{proof}
    For any node $t :=(\omega_1\cdot c_{i_1},\dots,\omega_m\cdot c_{i_m})$ we know $c_{i_m}\in K$.  
    We first find largest position $s$, such that $\accumulatedcolourset((\omega_1\cdot c_{i_1},\dots,\omega_s\cdot c_{i_s})) \subseteq K$.

We then compute the next node $t'$ to $(\omega_1\cdot c_{i_1},\dots,\omega_{s+1}\cdot c_{i_{s+1}})$ which has $\accumulatedcolourset(t')\cap (C\setminus K) = \emptyset$.
But for the tree $\Lc_C^\ell$ constructed, consider the smallest colour $c$  such that $c$ is the smallest colour in $\left(K\cup \{\bot\}\right) \setminus \{c_{i_1},c_{i_2},\dots, c_{i_s}\}$ larger than $c_{i_{s+1}}$.
Observe that the above set is non-empty as such colour exists, as $\bot\in \left(K\cup \{\bot\}\right) \setminus \{c_{i_1},c_{i_2},\dots, c_{i_s}\}$.
Therefore, we only need to return $ (\omega_1\cdot c_{i_1},\dots,\omega_s\cdot c_{i_s}, \omega_{s+1}\cdot c)$, which always exists in the tree constructed. Moreover, it is the smallest node larger than $t$ such that $\accumulatedcolourset(t)\subseteq K\cup \set{\bot}$.

This takes only linear time in the size of the encoding of a node, which is $O(k\log(k)\ell)$.
\end{proof}




%
%
\end{document}